%% file: arxiv.tex
\documentclass[11pt]{article}
\input{header}

\newcommand{\CSP}{\ensuremath{\mathsf{CSP}}\xspace}
\newcommand{\CSPs}{\ensuremath{\mathsf{CSPs}}\xspace}
\newcommand{\SAT}{\ensuremath{\mathsf{SAT}}\xspace}
\newcommand{\MinCSP}{\ensuremath{\mathsf{Min}\text{-}\mathsf{CSP}}\xspace}

\newcommand{\MaxCSP}{\ensuremath{\mathsf{Max}\text{-}\mathsf{CSP}}\xspace}
\newcommand{\MaxCSPs}{\ensuremath{\mathsf{Max}\text{-}\mathsf{CSPs}}\xspace}
\newcommand{\ExactCSP}{\ensuremath{\mathsf{Exact}\text{-}\mathsf{CSP}}\xspace}

\newcommand{\SPACE}{\ensuremath{\mathsf{SPACE}}\xspace}

\newcommand{\NRD}{\ensuremath{\mathsf{NRD}}\xspace}

\newcommand{\Core}{\ensuremath{\operatorname{core}}\xspace}
\newcommand{\Aut}{\ensuremath{\operatorname{Aut}}\xspace}
\newcommand{\Deg}{\ensuremath{\operatorname{deg}}\xspace}

\newcommand{\Val}{\ensuremath{\operatorname{val}}\xspace}
\newcommand{\UnsatVal}{\ensuremath{\operatorname{unsatval}}\xspace}
\newcommand{\OPTmax}{\ensuremath{\mathrm{OPT}_{\max}}\xspace}
\newcommand{\OPTmin}{\ensuremath{\mathrm{OPT}_{\min}}\xspace}

\newcommand{\Sat}{\ensuremath{\mathrm{Sat}}\xspace}
\newcommand{\Index}{\ensuremath{\mathsf{INDEX}}\xspace}
\newcommand{\CUT}{\ensuremath{\mathsf{CUT}}\xspace}
\newcommand{\MaxCut}{\ensuremath{\mathsf{Max}\text{-}\mathsf{Cut}}\xspace}
\newcommand{\MaxDiCut}{\ensuremath{\mathsf{Max}\text{-}\mathsf{DiCut}}\xspace}
\newcommand{\ThreeSAT}{\ensuremath{3\text{-}\mathsf{SAT}}\xspace}
\newcommand{\KSAT}{\ensuremath{k\text{-}\mathsf{SAT}}\xspace}

\newcommand{\NAE}{\ensuremath{\mathsf{NAE}}\xspace}

\title{Cores Characterize the One-Pass Streaming Complexity of \texorpdfstring{\CSPs}{CSPs}}
\author{%
\begin{tabular}{cc}
Joshua Brakensiek\thanks{Supported in part by NSF awards CCF-2211972 and DMS-2503280 and Venkatesan Guruswami's Simons Investigator award.}
& Aaron Putterman\thanks{Supported in part by a Jane Street Graduate Research Fellowship, Simons Investigator awards to Madhu Sudan and Salil Vadhan, and AFOSR award FA9550-25-1-0112.} \\
{\small University of California, Berkeley} & {\small Harvard University} \\
{\small \href{mailto:josh.brakensiek@berkeley.edu}{\nolinkurl{josh.brakensiek@berkeley.edu}}}
& {\small \href{mailto:aputterman@g.harvard.edu}{\nolinkurl{aputterman@g.harvard.edu}}} \\[1em]
Amatya Sharma
& Santhoshini Velusamy\thanks{Supported in part by NSERC Discovery grant RGPIN-2026-04689 and ECR supplement DGECR-2026-00421.} \\
{\small University of Michigan, Ann Arbor} & {\small University of Waterloo} \\
{\small \href{mailto:amatya@umich.edu}{\nolinkurl{amatya@umich.edu}}}
& {\small \href{mailto:santhoshini.velusamy@uwaterloo.ca}{\nolinkurl{santhoshini.velusamy@uwaterloo.ca}}}
\end{tabular}%
}
\date{}

\begin{document}
\maketitle

\begin{abstract}

We study the one-pass streaming complexity of CSPs over a fixed finite relation $R$ under three natural objectives: $\textsf{SAT}(R)$, deciding whether all constraints can be satisfied; $\textsf{Min}$-$\textsf{CSP}(R)$, approximately minimizing the number of unsatisfied constraints; and $\textsf{Exact-CSP}(R)$, exactly computing the maximum number of satisfied constraints.

Sharma and Velusamy (ESA 2026) characterized the streaming complexity of satisfiability for CSPs with literals using the non-redundancy parameter $\textsf{NRD}_n(R)$, which roughly measures the largest instance in which every constraint is independently necessary. For the most general setting without literals, they obtained the corresponding characterization for Boolean relations and showed obstacles to extending their techniques to larger domains. Kol, Paramonov, Saxena, and Yu (ITCS 2023) similarly characterized the streaming complexity of $\textsf{Exact-CSP}(R)$ for Boolean CSPs with literals in terms of the degree deg$(R)$ of the relation when written as a polynomial, while without literals, the corresponding result was known only for $\textsf{Max-Cut}$.

For all three of the objectives we study, we give tight characterizations for unweighted CSP instances over arbitrary finite domains. The streaming algorithms we provide are deterministic, while the corresponding bounds are tight up to polylogarithmic factors, even against randomized algorithms.
The central idea in all three results is a reduction of $R$ to its core, a canonical subrelation of $R$ which admits gadgets that allow for the hard-pinning (or fixing) of variables.
\end{abstract}

\pagenumbering{gobble}

\newpage

\tableofcontents

\newpage
\pagenumbering{arabic}

\section{Introduction}
\label{sec:intro}

Constraint satisfaction problems (\CSPs) are one of the basic objects in theoretical computer science. Fix a finite domain $D$, an arity $k$, and a relation $R\subseteq D^k$. An $R$-constraint applies $R$ to an ordered $k$-tuple of variables, with repetitions allowed, and an $R$-instance is a collection of such constraints. The basic satisfiability problem asks whether there is an assignment that satisfies every constraint. By varying the relation $R$, \CSPs can capture a huge variety of natural problems, from bipartiteness, to \ThreeSAT, to graph coloring, and to systems of linear equations.

More recently, a line of work has emerged which studies the complexity of solving such \CSPs in the \emph{streaming model}. In this setting, each of the $m$ constraints is revealed one at a time. The goal, as before, is to decide satisfiability or a related objective for the resulting instance using \emph{as little space} as possible. We assume throughout that the stream length is polynomially bounded in $n$. Naturally, storing every constraint explicitly as it arrives allows one to store the entire \CSP instance, meaning there is a trivial upper bound of $O(n^{k}\log n)$ bits of space for any such task.

The streaming study of \CSP optimization was initiated by Kapralov, Khanna, and Sudan~\cite{KKS15}, who proved strong lower bounds for approximating \MaxCut. Kapralov and Krachun~\cite{KK19} subsequently proved that there is no non-trivial approximation for \MaxCut in sublinear-space in the streaming model. 
Guruswami, Velingker, and Velusamy~\cite{GVV17} and Chou, Golovnev, and Velusamy~\cite{CGV20} studied Boolean \MaxCSPs of arity two, while Chou, Golovnev, Sudan, and Velusamy~\cite{CGSV24} extended this line of work to arbitrary finite \MaxCSPs.

Very recently, this line of work has led to substantially broader characterizations of the streaming approximability of \MaxCSPs. A sequence of works developed increasingly strong sublinear-space algorithms and lower bounds for \MaxDiCut and \MaxCut~\cite{SSSV25,FMW25,Vel25,ABFS26}, while Singer, Tulsiani, and Velusamy \cite{STV25} and Fei, Minzer, and Wang \cite{FMW26} uncovered a close connection between streaming approximability and the standard BasicLP relaxation for \CSPs. Subsequent works essentially completed this picture in the one-pass setting, giving sublinear-space algorithms up to the approximation threshold predicted by BasicLP and linear-space lower bounds beyond it~\cite{ABF26,STV26,FMW26-tight}. Thus, the streaming approximability of \MaxCSPs is now understood to a remarkable extent, in sharp contrast to the objectives that we study in this work.

Kol, Paramonov, Saxena, and Yu~\cite{KPSY23} studied exact optimization for Boolean \CSPs with literals. Their formulation asks, for a given threshold $v$, whether some assignment satisfies at least $v$ constraints, which is equivalent to computing the maximum number of satisfied constraints exactly. They showed that if the indicator of the predicate has polynomial degree $d$, then the streaming space complexity is $\widetilde{\Theta}(n^d)$. Other work has focused on satisfiability. Vu~\cite{Vu-2208.09160} proved an $\Omega(n^k)$ lower bound for \KSAT, while Chou, Golovnev, Sudan, and Velusamy~\cite{CGSV24} proved an $\Omega(n)$ lower bound for arbitrary \CSPs.

The streaming complexity of satisfiability was characterized more recently by Sharma and Velusamy~\cite{sharma2026characterizingstreamingdecidabilitycsps}, who showed that for a relation $R$, its space complexity is characterized, up to a logarithmic factor, by the non-redundancy of $R$. Non-redundancy in turn is a fundamental parameter in the study of \CSPs, known to govern the query complexity of learning \CSPs \cite{bessiere2020chain, carbonnel2022redundancy}, the kernelizability of \CSPs \cite{jansen2019optimal,chen2020best,lagerkvist2020sparsification}, and approximate sparsification of \CSPs \cite{kogan2015sketching,filtser2017sparsification,BZ20,KPS24,khanna2025efficient,brakensiek2025redundancy}. Given this rapidly-growing list of connections, there has been a recent surge of interest in obtaining sharp asymptotic estimates for the non-redundancy of many \CSP predicates~\cite{bessiere2020chain,chen2020best,lagerkvist2020sparsification,carbonnel2022redundancy,khanna2025efficient,brakensiek2025redundancy,brakensiek2025richness,LGE26,BrakensiekGP26,SV26b,brakensiek2026shrinking}, often combining ideas from logic, algebra, and extremal combinatorics. Recent work has obtained further bounds and structural results for non-redundancy and sparsification~\cite{brakensiek2026contraction,brakensiek2026separating,brakensiek2026abelian}.

These characterizations rely crucially on the use of literals. In the general-domain literal model, the domain is $\mathbb Z_q$ and each occurrence of a variable may be replaced by $x+\lambda$ for an arbitrary shift $\lambda\in\mathbb Z_q$. Over the Boolean domain, these are precisely the two literals $x$ and $\neg x$\footnote{By the general finite-domain setting, we mean that the relation and its finite domain are arbitrary, and constraints apply the relation directly to variables, without literals or shifts.}.
Literals are more than a notational convenience in these arguments. Indeed, negations or additive shifts allow for much more powerful gadgets which lead to simpler lower bound proofs.
Sharma and Velusamy \cite{sharma2026characterizingstreamingdecidabilitycsps} showed that this is a fundamental obstruction in applying their techniques to general finite-domain \CSPs. Over domains larger than two, the raw non-redundancy of $R$ need not predict the streaming complexity of \CSP satisfiability.

The difficulty lies specifically in proving lower bounds. The known upper bounds continue to apply, since ordinary \CSP instances are special cases of instances in the literal model. In the lower bound reductions however, literals were critical in gadget constructions that fixed the values of certain variables. 
Our starting question is therefore the following.
\begin{center}
    \emph{Can we derive tight characterizations of the streaming complexity of \CSPs?%
    }
\end{center}

As our primary contribution, we give an affirmative answer to the above question, in part making use of a crucial object called the \emph{core} of the \CSP relation $R$, denoted by $\Core(R)$, which intuitively removes domain values that can be collapsed without changing any of these objectives. In the context of \CSPs, the notion of a core originated in the 1970s in the context of designing optimal queries to a database~\cite{ChandraM77,KolaitisV08} but the concept was also independently developed in the graph theory literature (see \cite{HellN92,HellN08}). Perhaps the most notable use of \CSP cores is their key role in the proof of the \CSP dichotomy theorem~\cite{feder1998computational,BartoKW17,Bulatov17,zhuk2020proof}. Despite these many connections, our work appears to be the first to utilize the concept of a \CSP core in the context of streaming.

Beyond simply studying $\SAT(R)$, which asks whether all constraints can be satisfied, we also study $\MinCSP(R)$, which approximately minimizes the number of unsatisfied constraints, and $\ExactCSP(R)$, which computes the maximum number of satisfied constraints exactly. Thus, $\ExactCSP(R)$ is the exact-value version of $\MaxCSP(R)$, and we use the latter terminology primarily for the approximation literature. As a byproduct of our study, we show that (a) non-redundancy is even more entrenched as a crucial parameter which governs the streaming complexity of \CSP problems and (b) the core is a fundamental object which allows for the extension of lower bound techniques to general finite-domain \CSPs.

\subsection{Our Results}

We study the one-pass streaming complexity of solving three different CSP objectives defined by a fixed finite relation $R$, namely $\SAT(R)$, $\MinCSP(R)$, and $\ExactCSP(R)$.
The main parameter that governs the space complexity in our results is non-redundancy. Informally, an instance is non-redundant if every constraint is non-redundant, i.e., removing it would strictly grow the set of satisfying assignments of the instance. This is equivalent to having a witness assignment for each constraint that violates it but satisfies the rest of the constraints. We write $\NRD_n(R)$ for the largest number of constraints in such an instance on $n$ variables. Our first theorem takes advantage of both $\NRD$ and $\Core(R)$ and characterizes the randomized one-pass streaming complexity of $\SAT(R)$ up to a $\log$ factor. 

\begin{theorem}[Informal \Cref{thm:sat-main} satisfiability characterization]
For every fixed non-trivial relation $R$, there is a deterministic one-pass algorithm that solves $\SAT(R)$ using $O_R(\NRD_n(\Core(R))\log n)$ space, and any one-pass algorithm (even randomized) requires at least $\Omega_R(\NRD_n(\Core(R)))$ space.

\end{theorem}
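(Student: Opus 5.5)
The argument splits into an upper bound and a lower bound, glued together by the fact that $R$ and $\Core(R)$ define the same satisfiability problem. Let $C=\Core(R)\subseteq D'^k$, where $D'\subseteq D$, be the image of an idempotent endomorphism (retraction) $e$ of $R$ of minimum image size. Then every $R$-instance is satisfiable if and only if the same instance read as a $C$-instance is satisfiable. Indeed, a satisfying assignment $\sigma$ for $R$ gives the satisfying assignment $e\circ\sigma$ for $C$, and $C\subseteq R$ gives the converse. So $\SAT(R)$ and $\SAT(C)$ are literally the same streaming problem, and it suffices to show that $\SAT(C)$ has complexity $\widetilde\Theta(\NRD_n(C))$.

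\textbf{Upper bound.} I would use the standard online non-redundant-sparsifier algorithm, reading each incoming constraint as a $C$-constraint. The algorithm maintains a set $S$ of stored constraints that is itself non-redundant. When a constraint $c$ arrives, we store it if and only if it is not implied by $S$, that is, if some assignment satisfies all of $S$ but violates $c$. This test is performed by brute force, which is allowed because we only bound space, not time. Adding a constraint that is not implied by $S$ keeps $S$ non-redundant: each old constraint's witness can be chosen to satisfy the new constraint as well, provided the witness set is handled by the usual "implication closure" argument from the literal setting. I expect to invoke this as the existing algorithm of Sharma and Velusamy, which only uses instances without literals. So $|S|\le \NRD_n(C)$, and since $S$ implies every constraint seen so far, $S$ has the same solution set as the stream. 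Storing each constraint takes $O(k\log n)$ bits, which gives the claimed deterministic bound.

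\textbf{Lower bound.} I would reduce from \Index, or from a one-way communication problem whose randomized complexity is linear in the input length. Fix a non-redundant $C$-instance $\{c_1,\dots,c_N\}$ on $n$ variables with $N=\NRD_n(C)$ and witnesses $w_1,\dots,w_N$. Alice holds $x\in\{0,1\}^N$ and streams the constraints $c_i$ with $x_i=1$. Bob holds an index $j$. He wants to check whether $w_j$ remains satisfying: if $x_j=0$, then $w_j$ satisfies all of Alice's constraints, and if $x_j=1$, nothing consistent with $w_j$ works. To make this distinction visible, Bob must pin the $n$ variables to the values $w_j(v)$. This is where the core is essential. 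Because $C$ is a core, every endomorphism of $C$ is an automorphism, and the standard pp-definability argument then applies: the canonical instance of $C$ on fresh variables $\{y_a\}_{a\in D'}$ (one constraint for each tuple of $C$, placed on the corresponding $y$-variables) has only automorphisms of $C$ as its solutions. Bob adds this gadget together with equality-forcing constraints, implemented by reusing $y_{w_j(v)}$ in place of $v$ when he emits the constraints $c$ involving $v$. That is a problem, since Alice's constraints are already fixed. So Bob instead adds, for each variable $v$, a copy of the gadget identifying $v$ with $y_{w_j(v)}$. Concretely, he emits every constraint of the canonical instance with $y_{w_j(v)}$ replaced by $v$. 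The resulting instance is satisfiable iff some automorphism $\pi$ makes $\pi\circ w_j$ satisfy Alice's constraints. Since automorphisms preserve satisfaction of the $c_i$, this happens iff $x_j=0$. The stream has length $O_R(n)+N$, which is polynomial. Hence any randomized streaming algorithm yields a one-way protocol for \Index on $N$ bits, and so needs $\Omega(N)$ space.

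\textbf{Main obstacle.} The step I expect to be hardest is the pinning step. Two things must be checked. First, replacing $y_{w_j(v)}$ by $v$ in several gadget copies, which share the $y$-variables, must still force all variables simultaneously to $\pi\circ w_j$ for a \emph{single} automorphism $\pi$. I would handle this by using one shared gadget and adding, for each $v$, the constraints of the canonical instance with $y_{w_j(v)}$ replaced by $v$. This forces $v$ to behave like $y_{w_j(v)}$ under every solution, using the fact that each solution restricted to the gadget is an automorphism together with an endomorphism-minimality argument. Second, the case where some value of $D'$ is absent from $C$'s projections needs care. The constant factors hidden in $\Omega_R$ come from the gadget size.
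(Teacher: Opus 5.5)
\textbf{The upper bound rests on a false invariant.} Your core reduction is fine, but it is not true that adding a constraint not implied by a nonredundant $S$ keeps $S$ nonredundant.

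Take $C=\CUT$ and let $S$ be the path $\CUT(x,y),\CUT(y,z),\CUT(z,w)$, which is nonredundant. Every solution of $S$ has $x=z$, so $\CUT(x,z)$ is not implied and your rule stores it. But now $\CUT(z,w)$ is redundant, because the other three constraints form a triangle and are unsatisfiable. So the stored set has $4>\NRD_4(\CUT)=3$ constraints, and no ``implication closure'' argument can repair a false invariant.

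In general, the rule ``store iff not implied'' produces a sequence with strictly decreasing solution sets. It therefore bounds $|S|$ only by the chain length of $C$. That parameter dominates $\NRD_n(C)$ but need not be within a constant factor of it; the paper's bibliography includes work separating the two. Correctness of the answer is not the issue, since $S$ does imply the whole prefix; the space bound is what fails.

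The fix is the algorithm the paper actually attributes to Sharma and Velusamy. After inserting the new constraint, repeatedly delete any stored constraint whose removal does not change the set of satisfying assignments. This preserves $\Sat$ of the prefix and ends in a nonredundant instance, so at most $\NRD_n(C)+1$ constraints are ever held. You should also note two small points. The $O(n)$ bits of brute-force workspace are absorbed because $\NRD_n(C)=\Omega(n)$ for non-trivial $C$ (\Cref{obs:nrd-linear}). If instead $C=D'^k$, every instance is satisfiable.

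\textbf{The lower bound is the paper's reduction with a different, valid gadget.} The overall structure matches: a maximum nonredundant instance, the witness $w_j$, pinning up to one automorphism, and \Cref{obs:automorphism-value}. The paper substitutes the pinned variable into a single coordinate $t$ with $r_t=s(x)$ and then needs \Cref{lem:core-separates-values}. You instead substitute $v$ for every occurrence of $y_a$, where $a=w_j(v)$.

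Your version works by a shorter argument. Let $\sigma(b)=\alpha(y_b)$, and let $\tau$ equal $\sigma$ on $D'\setminus\{a\}$ with $\tau(a)=\alpha(v)$. Then $\tau$ is an endomorphism of $C$, hence a bijection. Since it agrees with the bijection $\sigma$ off $a$, it must satisfy $\tau(a)=\sigma(a)$, which gives $\alpha(v)=\sigma(w_j(v))$. This is the ``endomorphism-minimality argument'' you allude to, and you should spell it out.

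Your worry about values absent from every tuple is vacuous. If $|D'|\ge 2$ and $a$ occurs in no tuple of $C$, then moving $a$ to another value is a non-injective endomorphism, contradicting that $C$ is a core.

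The one loose end is the variable count. Your final instance has $n+|D'|$ variables, so you still need $\NRD_{n-c}(C)=\Omega_R(\NRD_n(C))$ for constant $c$. The paper obtains this by keeping a random $(n-c)$-subset of the variables and restricting the witnesses.
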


Our minimization result comes from a stronger sparsification theorem, resembling those of \cite{KPS24, khanna2025efficient, brakensiek2025redundancy}. The goal here is, given a \CSP instance $C$, to build a small, re-weighted sub-instance (denoted $\hat{C}$) such that on \emph{every assignment} $(x_1,\ldots,x_n)\in D^n$, the number of unsatisfied constraints in $\hat{C}$ is essentially the same as the number of unsatisfied constraints in $C$. Clearly, once such a sparsifier is constructed, one can (approximately) solve $\MinCSP(R)$. Importantly, we show that building and maintaining such sparsifiers can still be done in space roughly equal to the non-redundancy:

\begin{theorem}[Informal \Cref{thm:streaming-sparsification} and \Cref{cor:mincsp-main} sparsification and minimization characterization]
For every fixed non-trivial relation $R$ and every $\varepsilon>0$, there is a deterministic one-pass algorithm that can store a $(1\pm\varepsilon)$ sparsifier using $O_{R,\varepsilon}(\NRD_n(R)\poly(\log n))$ space. Consequently, there is a deterministic one-pass $(1+\varepsilon)$-approximation algorithm for $\MinCSP(R)$ using $O_{R,\varepsilon}(\NRD_n(\Core(R))\poly(\log n))$ space, and every finite-factor randomized approximation requires $\Omega_R(\NRD_n(\Core(R)))$ space.
\end{theorem}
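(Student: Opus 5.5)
The plan has three parts: an upper bound via streaming sparsification, a reduction of $R$ to its core for the upper bound, and a lower bound obtained by embedding non-redundant $\Core(R)$-instances into an $\mathsf{INDEX}$-style communication game.

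\textbf{Upper bound.} I would use a merge-and-reduce scheme built on an offline sparsifier for $R$. The offline sparsifier, in the style of \cite{brakensiek2025redundancy}, gives each instance a reweighted sub-instance of size $O_{R,\varepsilon}(\NRD_n(R)\poly(\log n))$ that $(1\pm\varepsilon)$-preserves the number of unsatisfied constraints under every assignment. The stream is cut into blocks whose sizes double. Whenever two sparsifiers of equal level exist, we merge them and re-sparsify with accuracy parameter $\varepsilon/\log m$. There are $O(\log m)=O(\log n)$ levels, so the errors compound to at most $(1\pm\varepsilon/\log m)^{O(\log m)}=1\pm O(\varepsilon)$, and the space cost is only a polylogarithmic factor. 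Two technical points need care:
\begin{itemize}
\item the offline sparsifier must tolerate weighted inputs with polynomially bounded weights, which I would obtain by bucketing weights into powers of two;
\item it must be deterministic, which I would obtain by using the non-redundancy-based chaining construction rather than sampling.
\end{itemize}

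\textbf{From $R$ to $\Core(R)$.} Let $h:D\to D'$ be the retraction onto the core. Both the minimum number of unsatisfied constraints and satisfiability are identical for $R$ and $\Core(R)=R\cap D'^k$. One direction holds because composing any assignment with $h$ never increases the number of violated constraints. The other holds because a $D'$-assignment satisfies an $R$-constraint exactly when it satisfies the corresponding $\Core(R)$-constraint. The algorithm therefore treats each incoming $R$-constraint as a $\Core(R)$-constraint, maintains a sparsifier for $\Core(R)$, and outputs its minimum value by brute force at the end, since space rather than time is the resource being bounded. This gives the $O_{R,\varepsilon}(\NRD_n(\Core(R))\poly(\log n))$ bound.

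\textbf{Lower bound.} Take a non-redundant $\Core(R)$-instance $\{C_1,\dots,C_N\}$ with $N=\NRD_n(\Core(R))$, together with witness assignments $\sigma_i$: each $\sigma_i$ violates $C_i$ and satisfies every other constraint. Alice holds $x\in\{0,1\}^N$ and streams the constraints $C_i$ with $x_i=1$. Bob holds an index $j$ and must append constraints, on fresh copies of the variables, that force the assignment to be "essentially $\sigma_j$". If he can, then:
\begin{itemize}
\item if $x_j=0$, the instance is fully satisfiable, so the optimum is $0$;
\item if $x_j=1$, every assignment violates at least one constraint, so the optimum is at least $1$.
\end{itemize}
Any finite-factor approximation to $\MinCSP$ must separate $0$ from $\ge 1$, so it decides $\mathsf{INDEX}$ and needs $\Omega(N)$ space, even randomized.

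Bob's pinning is carried out with the core's canonical structure. Because $\Core(R)$ is a core, the primitive-positive definable relations of its structure (one variable per domain element) have the property that every endomorphism is an automorphism. This allows pinning each variable to $\sigma_j(v)$ up to a global automorphism. Bob adds these pp-gadgets with copies weighted by $m$, i.e.\ repeated enough that violating them never pays, as in unweighted instances. Up to the automorphism, the only way to violate few constraints is then to follow $\sigma_j$. Handling the automorphism is exactly where the core matters: an automorphism maps $C_j$ to another violated constraint, so whether $x_j$ equals $0$ or $1$ is preserved.

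\textbf{Main obstacle.} The step I expect to be hardest is making this pinning robust in the $\MinCSP$ setting rather than only for exact satisfiability. The gadget must use unweighted $R$-constraints only, and its violation cost must dominate, so that no near-optimal assignment escapes by breaking the gadget. I would first try to make every constraint of the pp-definition hold with multiplicity $m+1$, keeping the total number of constraints within the polynomial bound on stream length.
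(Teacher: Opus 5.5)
This is essentially the paper's proof: merge-and-reduce over the Brakensiek--Guruswami code sparsifier, then the retraction onto $\Core(R)$, and for the lower bound the observation that a finite-factor approximation decides $\SAT(R)$, whose $\Index$ reduction uses hard pinning up to one automorphism. Two details differ: the paper expands integer weights into unit copies, which leaves the non-redundancy of the violation code unchanged, whereas power-of-two bucketing by weight class loses a factor of $2$ unless it is an exact binary-digit split; and determinism comes from exhaustive search, which is allowed because time is unrestricted. The step you flag as the main obstacle is unnecessary: when $x_j=1$, an assignment either satisfies all of Bob's gadget constraints, in which case it restricts to some $\pi\circ\sigma_j$ and (by \Cref{obs:automorphism-value}) still violates $C_j$ itself, or it violates a gadget constraint, so $\OPTmin\ge 1$ either way and the satisfiability-level pinning of \Cref{lem:hard-pinning} suffices with no multiplicities.
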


Finally, for $\ExactCSP$, the space complexity is determined by the polynomial degree of the core\footnote{More formally, if $C$ is the domain of the core, we write our predicate $\Core(R)$ as a real-valued indicator function $f : C^k \to \mathbb R$, compute its Fourier decomposition, and let $d$ be degree of the highest-degree monomial with a non-zero coefficient.}.

\begin{theorem}[Informal \Cref{thm:exact-main} $\ExactCSP$ characterization]
Let $d$ be the degree of the indicator function of $\Core(R)$. Solving $\ExactCSP(R)$ in one pass on simple instances requires $\Omega_R(n^d)$ randomized space and has a deterministic $O_R(n^d\log n)$-space algorithm.
\end{theorem}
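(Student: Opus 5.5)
The plan is to prove the upper and lower bounds separately. Both rest on two facts about cores: an instance and its core projection have the same optimum value, and the core admits pinning gadgets. Write $C$ for the domain of $\Core(R)$ and fix a retraction $r: D\to C$ that is an endomorphism of $R$.

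\textbf{Upper bound.} Every $R$-assignment $x$ satisfies at most as many constraints as $r\circ x$, because $r$ maps tuples in $R$ to tuples in $R\cap C^k=\Core(R)$. Hence $\OPTmax$ of an $R$-instance equals $\OPTmax$ of the same instance read as a $\Core(R)$-instance over $C$. So it suffices to compute $\OPTmax$ for $\Core(R)$ exactly.

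Write the indicator $f:C^k\to\{0,1\}$ as a polynomial of degree $d$ in the standard basis, i.e.\ products over at most $d$ coordinates of indicator functions $\mathbf 1[x_{i}=a]$. Summed over constraints, the number of satisfied constraints under an assignment is
\[
\sum_{S,\ |S|\le d}\ \sum_{\text{variable tuples } u,\ a\in C^S} c_{u,a}\prod_{j\in S}\mathbf 1[x_{u_j}=a_j].
\]
The streaming algorithm deterministically maintains all coefficients $c_{u,a}$. There are $O_R(n^d)$ of them, each a counter of $O(\log n)$ bits since $m=\poly(n)$. At the end it brute-forces over all assignments, using no further stream space. Repeated variables inside a constraint only collapse monomials to lower degree, so the bound is unaffected.

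\textbf{Lower bound.} The plan is to reduce from $\Index$ on $\Theta(n^d)$ bits, following the \cite{KPSY23} template.

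1. Pick a degree-$d$ monomial with nonzero Fourier coefficient, say on coordinates $\{1,\dots,d\}$. By restricting the remaining $k-d$ coordinates to constants, obtain a restriction $g:C^d\to\{0,1\}$ of $f$ that still has full degree $d$. Averaging over the fixed constants guarantees such constants exist.

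2. Use the core property to realize those constants without literals. Since $\Core(R)$ is a core, its polymorphism clone contains only surjective unary maps (automorphisms). So, by the standard pp-definability of singletons up to automorphism, we introduce $|C|$ fresh "palette" variables $z_c$, heavily constrained by a copy of the core structure. In any optimal assignment the palette is then forced to be an automorphic image of the identity. For $\ExactCSP$ this forcing is soft: we repeat the gadget constraints $M=\poly(n)$ times so any violation costs more than the whole remaining instance. We work up to the automorphism throughout, and the instance stays simple in the sense the theorem requires. Plugging $z_c$ into the fixed coordinates then realizes $g$ on variable $d$-tuples.

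3. Alice holds $x\in\{0,1\}^{\binom{n}{d}}$ and encodes each bit $x_T$ by including or omitting, for the $d$-set $T$, a small gadget of $g$-constraints whose contribution to the optimum differs depending on the bit. Bob holds an index $T^*$. He appends pinning constraints, again weighted via the palette, that force the variables of $T^*$ to chosen values $b$ and force all other variables to a "neutral" value. Bob's gadget is built by inclusion–exclusion over subsets of $T^*$, as in \cite{KPSY23}, using that $g$ has a nonzero top coefficient. With it, the optimum value differs by a known amount according to whether $x_{T^*}=1$. This gives a one-way protocol, and hence $\Omega(n^d)$ randomized space.

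\textbf{Main obstacle.} The hardest step is step 2 together with the neutrality requirement in step 3. We need Bob's pinning to make every non-$T^*$ constraint of Alice contribute a value independent of $x$, using only the ability to pin to core elements modulo automorphisms. The remedy I would try first is to choose pinned values via the Fourier expansion: enumerate over the $|C|^d$ pinnings of $T^*$ (or fold them into an averaged instance) and take the alternating combination that isolates the top coefficient. The lower-order contributions from other $d$-sets then cancel, because each of those sets shares fewer than $d$ coordinates with $T^*$.
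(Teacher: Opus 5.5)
Your upper bound is the paper's argument: pass to the core, keep the $O_R(n^d)$ integer coefficients of the degree-$d$ indicator expansion, and brute-force at the end. Your cancellation mechanism for the lower bound is also the paper's. The queried constraint sees all the pinned tuples, while every other constraint misses some toggled variable and cancels in the signed sum. The gap is in Step~2, which you correctly flagged as the hard part.

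You make the pinning robust by repeating the palette and attachment constraints $M=\poly(n)$ times, and then assert that ``the instance stays simple.'' These two claims contradict each other. A repeated constraint repeats its ordered scope, and the lower bound in the statement is specifically for simple instances; for symmetric $R$ the paper even forbids repeated unordered scopes. Remove the repetitions and your gadget has only $|C|$ palette variables. That gives $O_R(1)$ distinct palette scopes and $O_R(1)$ distinct attachment scopes per variable. Meanwhile a variable of Alice's instance can lie in $\Theta(n^{d-1})$ constraints, and your shared constant variables lie in up to $\Theta(n^d)$ of them. So the gadget cannot in general stop an optimal assignment from violating $O_R(1)$ gadget constraints to gain many of Alice's. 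Then $\OPTmax$ no longer reveals the point evaluation $\Val_P(s)$.

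The missing idea is the block gadget of \Cref{lem:exact-pinning}, which gets weight from many distinct scopes instead of multiplicity.
\begin{itemize}
\item Replace each palette variable by a block $Z_a$ of $m=C_RN$ fresh variables.
\item Add $R(z_1,\ldots,z_k)$ for every $r\in R$ and every choice of pairwise distinct $z_i\in Z_{r_i}$.
\item Attach each original variable $x$ through all constraints $R(z_1,\ldots,z_{t-1},x,z_{t+1},\ldots,z_k)$ with $r_t=s(x)$ and distinct $z_j\in Z_{r_j}$.
\end{itemize}
The analysis cannot assume the blocks are constant. Let $\sigma(a)$ be a most frequent value in $Z_a$. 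If $\sigma$ is not an endomorphism, at least $\eta_Rm^k>N^d$ distinct block constraints are violated, so $\sigma\in\Aut(R)$ because $R$ is a core. Every original variable disagreeing with $\sigma\circ s$ then forfeits at least $\eta_Rm^{k-1}$ attached constraints, which exceeds its degree $N^{d-1}$ in Alice's instance.

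To keep that degree bound, the paper does not plug shared constant variables into Alice's constraints. It fills the non-top coordinates with padding variables $w_{j,u_1}$, and Bob pins these to the basepoint as part of each assignment $\alpha^S$.

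There is a second, smaller problem with Step~3 as first written. A single appended gadget after which ``the optimum differs by a known amount'' cannot work: under any one pinned assignment, the non-queried constraints contribute an unknown, input-dependent amount. Bob must run $2^d$ separate continuations from copies of Alice's memory state, with error amplified so all runs are simultaneously correct, and then take the signed sum. Your ``main obstacle'' paragraph essentially contains this, but ``folding them into an averaged instance'' in one run would not work.
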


Thus all three problems first reduce to the core. After that reduction, non-redundancy characterizes satisfiability and minimization up to polylogarithmic factors, while degree characterizes $\ExactCSP$ up to a logarithmic factor.

\paragraph{The Core}

The reason that $\Core(R)$ is central to each of these results is due to the core being the \emph{canonical fixed-point} of $R$ under homomorphisms. More precisely, given a finite domain $D$ and a relation $R \subseteq D^k$, a homomorphism of $R$ to itself is a map $h : D \to D$ such that for every tuple $(t_1, \hdots, t_k) \in R$, we have that $(h(t_1), \hdots, h(t_k)) \in R$. Given an $R$-instance of a \CSP, a homomorphism $h$ always maps a satisfying assignment to a satisfying assignment.
Furthermore, in the optimization setting, applying a homomorphism to an arbitrary assignment cannot decrease the number of satisfied constraints.
Therefore, in any of our streaming problems, we may apply any homomorphism we desire at no cost.

If the map $h$ is not a bijection, then $h$ maps $R$ to a smaller-domain relation $h(R)$. Since the domain $D$ is finite, such a shrinking can only be done a finite number of times until a fixed point is reached--this is precisely the core of the relation. As such, for any of our streaming problems, an algorithm for $\Core(R)$ is a natural upper bound on the problem. A corresponding lower bound follows from the fact that \CSP cores have canonical gadgets that can be used to fix variables. Further explanation and concrete examples are given in the technical overview (\cref{sec:overview}).

\begin{samepage}
\section*{AI Disclosure.}
Generative AI tools including GPT $5.6$ Sol Medium/High were used for language editing, improving exposition, checking the clarity of parts of the manuscript and to assist with routine mathematical tasks that do not require a significant level of innovation or with verifying proofs. They were not used to generate the mathematical results or proofs, and the authors take full responsibility for the contents of the paper.
\par
\end{samepage}

\section{Technical overview}
\label{sec:overview}

In this section, we explain the main structural ideas behind our results. Our central idea is to replace every relation by a reduced form called its \emph{core}, which we define formally in \Cref{subsec:core-prelims}. We first use a relation of Sharma and Velusamy~\cite{sharma2026characterizingstreamingdecidabilitycsps} to explain why we must reduce to the core. We then explain the satisfiability lower bound using $\CUT = \{(0,1),(1,0)\}$ and return to $\CUT$ to explain the exact optimization lower bound.  As we will show, the satisfiability and exact-optimization lower bounds crucially rely on whether we can \emph{hard-pin} variables, meaning fix them to chosen values in the domain.
$R$-constraints usually cannot do this directly, so we instead build gadgets that simulate such pinnings up to unavoidable symmetries of the relation. The main fact that makes these gadgets possible is that the relation is a core. For satisfiability, we need a gadget whose satisfying assignments give the variables the desired values. For exact optimization, an optimal assignment may violate some gadget constraints, so we need a stronger construction that makes every deviation from the desired values unprofitable.

\subsection{Reduction to the core}

Sharma and Velusamy~\cite{sharma2026characterizingstreamingdecidabilitycsps} proved that the one-pass space needed to decide satisfiability for \CSPs with literals is characterized, up to a logarithmic factor, by the \emph{non-redundancy} parameter $\NRD_n(R)$. This parameter is the maximum number of constraints in an $R$-instance in which every constraint is independently necessary. Equivalently, every constraint has a \emph{witness assignment} that violates it and satisfies all the other constraints. We give the formal definition in \Cref{subsec:nrd}.

They also showed that without literals, $\NRD_n(R)$ need not be the right parameter, using the following counterexample
\[
T=\{(0,1),(1,0),(0,3),(3,0),(1,2),(2,1)\}\subseteq\{0,1,2,3\}^2.
\]
Using known characterizations of non-redundancy, one can prove that $\NRD_n(T)=\Theta(n^2)$ \cite{BZ20,bessiere2020chain,khanna2025efficient,sharma2026characterizingstreamingdecidabilitycsps}. However, satisfiability for $T$ can be checked in $O(n\log n)$ space. To see this, notice that $T(a,b)$ holds when $a$ and $b$ have opposite parity and at least one of them belongs to $\{0,1\}$. Given a $T$-instance, form its underlying graph whose vertices are the variables and whose edges are the pairs $\{x,y\}$ on which a constraint $T(x,y)$ appears. %
The $T$-instance is satisfiable exactly when this graph is bipartite. Indeed, every satisfying assignment gives a bipartition according to the parity of the assigned values. Conversely, any bipartition can be made satisfying by assigning $0$ to one side and $1$ to the other. Bipartiteness can be decided in one pass using  $O(n\log n)$ bits \cite{sharma2026characterizingstreamingdecidabilitycsps}.

This suggests that the values $2$ and $3$ are unnecessary for solving the instance. So, let's consider the map
\[
\sigma(0)=\sigma(2)=0
\qquad\text{and}\qquad
\sigma(1)=\sigma(3)=1.
\]
If $(a,b)\in T$, then $a$ and $b$ have opposite parity. Therefore $(\sigma(a),\sigma(b))$ is either $(0,1)$ or $(1,0)$ and still belongs to $T$. Applying $\sigma$ to an assignment therefore preserves every constraint that was already satisfied. Thus every assignment can be replaced by a Boolean assignment satisfying at least the same constraints.

The restriction of $T$ to $\{0,1\}$ is precisely $\CUT=\{(0,1),(1,0)\}$. In this sense, $T$ is essentially the same relation as $\CUT$. Every assignment over the original domain can be mapped to a Boolean assignment without losing a satisfied constraint, while every Boolean assignment is already available over the original domain. It follows that restricting the domain to $\{0,1\}$ preserves satisfiability, the minimum number of unsatisfied constraints, and the maximum number of satisfied constraints. This is also the reason the bipartiteness argument above works, but the map $\sigma$ states it in a form that extends to every relation.

Formally, a map that preserves a relation in this way is called an \emph{endomorphism}, defined formally in \Cref{subsec:core-prelims}. The \emph{core} of a relation is obtained by choosing an endomorphism with the smallest possible image and restricting the relation to that image. In this example, $\sigma$ has image $\{0,1\}$, and no endomorphism can have a one-element image because $T$ contains no pair $(a,a)$. We therefore have $\Core(T)=\CUT$.

Notice that $\NRD_n(\CUT)=\Theta(n)$. If a $\CUT$-instance is satisfiable and nonredundant, its graph must be a forest because every edge on a cycle is implied by the remaining edges. If it is unsatisfiable and nonredundant, removing any edge must make it bipartite, so its nonisolated part is an odd cycle. In either case it has $O(n)$ edges, while a forest gives the matching lower bound. Reducing to the core therefore replaces the incorrect quadratic parameter $\NRD_n(T)$ by the correct linear parameter $\NRD_n(\CUT)$.

The examples above suggest the following general statement, which we prove formally in \Cref{sec:cores}.

\begin{theorem}[Informal \Cref{lem:core-reduction,lem:core-space}]
For every relation $R$, replacing each $R$-constraint by the $\Core(R)$-constraint on the same scope preserves satisfiability, the minimum number of unsatisfied constraints, and the maximum number of satisfied constraints. Consequently, this replacement preserves the one-pass streaming space of all three problems we consider.
\end{theorem}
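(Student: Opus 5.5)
The plan is to fix an endomorphism $h:D\to D$ of $R$ whose image $C=h(D)$ has minimum size, so that $\Core(R)=R\cap C^k$. As a first step I will check that one may take $h(R)\subseteq R\cap C^k$ with $h$ acting as the identity on $C$. The restriction of $h$ to $C$ is a map $C\to C$. If it were not injective, then $h\circ h$ would be an endomorphism with strictly smaller image, contradicting minimality. So $h|_C$ is a permutation of the finite set $C$, and some power $h^{m}$ is the identity on $C$. Replacing $h$ by $h^{m}$ gives an endomorphism that is a retraction onto $C$. Since $h(t)\in R$ for every $t\in R$ and $h(t)\in C^k$, we get $h(R)\subseteq \Core(R)$.

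Next I compare the two instances. Let $\mathcal{I}$ be an $R$-instance on variables $x_1,\dots,x_n$, and let $\mathcal{I}'$ be the $\Core(R)$-instance with the same scopes. Every assignment $\alpha:[n]\to C$ satisfies exactly the same set of constraints in $\mathcal{I}$ and in $\mathcal{I}'$, because $R\cap C^k=\Core(R)$. Hence $\mathrm{OPT}_{\max}(\mathcal{I})\ge \mathrm{OPT}_{\max}(\mathcal{I}')$. For the reverse inequality, take any $\alpha:[n]\to D$. For each constraint on scope $(x_{i_1},\dots,x_{i_k})$ satisfied by $\alpha$, the tuple $h(\alpha(x_{i_1}),\dots,\alpha(x_{i_k}))$ lies in $h(R)\subseteq\Core(R)$. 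So $h\circ\alpha$, which takes values in $C$, satisfies in $\mathcal{I}'$ every constraint that $\alpha$ satisfied in $\mathcal{I}$. Together these give equality of the maximum number of satisfied constraints. Since the total number of constraints $m$ is the same in both instances, the minimum number of unsatisfied constraints, $m-\mathrm{OPT}_{\max}$, also agrees. Satisfiability is the special case where this minimum is $0$. The argument is per-constraint, so it applies equally to weighted instances or multisets of constraints.

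For the space statement, the transformation acts constraint by constraint and keeps the scope unchanged. An algorithm for the $\Core(R)$ problem therefore runs on an $R$-stream by reinterpreting each incoming constraint, with no extra space, and it outputs the same answer because all three values coincide. Conversely, a $\Core(R)$-stream can be fed to an $R$-algorithm by relabeling each constraint as an $R$-constraint on the same scope, using a fixed embedding of $C$ into $D$. This preserves the objective values by the same argument. So the space complexities are equal, up to the $O_R(1)$ cost of encoding a constraint.

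I expect the only real subtlety to be the retraction step, that is, justifying that a minimum-image endomorphism can be chosen idempotent so that $h(R)$ lands inside $R\cap C^k$. Without this step, $h(R)$ is merely some relation on $C$ and is not obviously contained in $\Core(R)$.
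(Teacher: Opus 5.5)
Your proposal is correct and follows essentially the paper's argument. You push an arbitrary assignment through the idempotent minimum-image endomorphism for one inequality, and use $R\cap C^k=\Core(R)$ on $C$-valued assignments for the other. You then apply the replacement constraint by constraint in both directions for the streaming statement. The only cosmetic differences are that you obtain the $\OPTmin$ equality from $\OPTmax$ via the total weight, where the paper tracks $\UnsatVal$ directly, and that you reprove the idempotent choice, which the paper handles in its preliminaries.
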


Moreover, this reduction can be performed as the stream arrives. Whenever an $R$-constraint appears, we replace it by the $\Core(R)$-constraint on the same scope and give it to the algorithm for the core relation.

Now, the remaining task is to prove tight space bounds for the core relation. The satisfiability upper bound now follows immediately. We reduce the stream to the core and apply the algorithm of Sharma and Velusamy~\cite{sharma2026characterizingstreamingdecidabilitycsps}, which uses $O_R(\NRD_n(\Core(R))\log n)$ space. As we will see later in \Cref{subsec:over-exact}, the upper bound for exact-optimization follows as well using a simple argument. It remains to prove the lower bound matching this up to the logarithmic factor.

For both the lower bounds, we use the standard connection between streaming algorithms and one-way communication together with the hard problem $\Index$, where Alice receives a string $b\in\{0,1\}^m$ and Bob receives an index $j$ and must recover $b_j$. 
At a high level, we reduce this communication problem to a streaming problem as follows. 
Alice runs the streaming algorithm on constraints determined by her input and sends its memory state to Bob. Bob then appends constraints determined by his query and uses the output to recover the answer. The difficulty is that every constraint Bob appends must be a constraint of the core relation. He cannot directly fix variables to the values determined by his query. The next subsection explains how the core property lets him simulate this operation for satisfiability. The corresponding construction for exact optimization will require a stronger version of the same idea.

\subsection{Satisfiability and hard pinning}

We now explain why every one-pass algorithm for satisfiability over a core relation $R$ requires $\Omega_R(\NRD_n(R))$ space, as stated formally in \Cref{thm:sat-core-lb}. The proof first gives an ideal reduction in which Bob can fix variables to chosen values. We then use $\CUT$ to see why unique fixing is sometimes impossible and what weaker property is enough. Finally, we state the general hard-pinning lemma for core relations.

Let $I=\{C_1,\ldots,C_m\}$ be a maximum \emph{nonredundant instance} over a core relation $R$, where $m=\NRD_n(R)$. Recall from \Cref{subsec:nrd} that every constraint $C_j$ has a \emph{witness assignment} $\alpha_j$ that violates $C_j$ and satisfies every other constraint in $I$.

In \Index, Alice receives a string $b\in\{0,1\}^m$ and inserts $C_i$ whenever $b_i=1$. Bob receives an index $j$ and must recover $b_j$. If Bob could append constraints that \emph{hard-pin} every variable to the value given by $\alpha_j$, meaning that the appended constraints force this assignment, then the reduction would be immediate. If $b_j=0$, the witness $\alpha_j$ satisfies Alice's entire instance. If $b_j=1$, Alice also inserted $C_j$, which $\alpha_j$ violates. Thus the final instance would be satisfiable exactly when $b_j=0$.

The obstacle is that $R$-constraints may not be able to force a unique assignment. The following example shows both the obstacle and the weaker form of pinning that we need.

\begin{example}[Hard pinning for $\CUT$]
Consider $R=\CUT$. Complementing every value preserves every constraint, so any instance satisfied by $\alpha_j$ is also satisfied by its complement. Bob does not need uniqueness, however. It is enough to append an instance whose satisfying assignments on the original variables are $\alpha_j$ and its complement.

Bob introduces two dummy variables $z_0,z_1$ and adds the constraint $\CUT(z_0,z_1)$. For every original variable $x$, he adds $\CUT(x,z_1)$ if $\alpha_j(x)=0$ and $\CUT(x,z_0)$ if $\alpha_j(x)=1$. Any assignment satisfying these constraints must give $z_0$ and $z_1$ different values. Once those two values are chosen, every original variable is forced. The assignment on the original variables is therefore either $\alpha_j$ or its complement.

This completes the reduction for $\CUT$. If $b_j=0$, the assignment $\alpha_j$ satisfies Alice's constraints and Bob's constraints. If $b_j=1$, Alice inserted $C_j$, which is violated by $\alpha_j$. Its complement also violates $C_j$ because complementing every value preserves whether a $\CUT$-constraint is satisfied. Hence the final instance is satisfiable exactly when $b_j=0$.

The map that swaps $0$ and $1$ is an endomorphism of $\CUT$ that only permutes its values. Such a bijective endomorphism is called an \emph{automorphism}. Applying an automorphism to every value preserves every constraint, which explains why the unavoidable ambiguity between $\alpha_j$ and its complement is harmless.
\end{example}

The $\CUT$ example shows that Bob does not need to force one assignment uniquely. It is enough to force all original variables according to the same symmetry of the relation. For a core relation, every endomorphism is an automorphism, and this is precisely the property that makes the general construction work.

\begin{lemma}[Informal \Cref{lem:hard-pinning}]
Let $R$ be a core relation and let $\alpha$ be an assignment to a set of variables. There is an $R$-instance whose satisfying assignments on these variables are exactly the assignments $\pi\circ\alpha$, where the same automorphism $\pi$ of $R$ is applied to every variable.
\end{lemma}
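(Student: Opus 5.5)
The plan is to use the canonical instance of $R$, the ``indicator instance'' whose variables are the domain elements. Let $D$ be the domain of the core relation $R\subseteq D^k$. Introduce fresh dummy variables $z_a$ for each $a\in D$. For every tuple $(t_1,\ldots,t_k)\in R$, add the constraint $R(z_{t_1},\ldots,z_{t_k})$. Call this instance $G$. Its key property is that the assignments $h:\{z_a\}\to D$ satisfying $G$ are exactly the maps $a\mapsto h(z_a)$ that are endomorphisms of $R$. This holds by definition, since each tuple of $R$ must be mapped into $R$. Because $R$ is a core, every endomorphism is an automorphism. Hence the satisfying assignments of $G$ are exactly $z_a\mapsto\pi(a)$ for $\pi\in\Aut(R)$.

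Next, each original variable $x$ must be tied to the dummy $z_{\alpha(x)}$. There is no equality constraint available, so I would simulate it with the same trick. Take the instance on the variable set $\{z_a\}\cup\{x\}$ obtained from $G$ by additionally adding, for every tuple $t\in R$ and every coordinate $i$ with $t_i=\alpha(x)$, the constraint $R$ applied to $z_{t_1},\ldots,z_{t_k}$ with position $i$ replaced by $x$. Equivalently, $x$ is a second copy of the element $\alpha(x)$ in the canonical instance. Consider any satisfying assignment $h$. Define a map $g:D\to D$ by $g(a)=h(z_a)$, and a map $g'$ that agrees with $g$ except that $g'(\alpha(x))=h(x)$. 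Every tuple of $R$ is sent into $R$ by $g'$, because each such tuple appears as a constraint with the $\alpha(x)$-occurrences replaced by $x$. The subtle point is tuples containing $\alpha(x)$ several times. There we replace all such occurrences simultaneously by $x$, so we add the constraint with every coordinate equal to $\alpha(x)$ substituted by $x$. With that choice, $g'$ is an endomorphism, hence a bijection.

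At the same time, $g=\pi$ is already an automorphism. The maps $g$ and $g'$ agree off the single point $\alpha(x)$, and both are bijections of $D$. The only value left for $g'(\alpha(x))$ is therefore $\pi(\alpha(x))$, so $h(x)=\pi(\alpha(x))$. Doing this for all original variables with a shared $G$ forces $h|_{\text{vars}}=\pi\circ\alpha$ for the single automorphism $\pi$ determined by the dummies.

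For the converse, given any $\pi\in\Aut(R)$, set $z_a=\pi(a)$ and $x=\pi(\alpha(x))$. Every added constraint is the image under $\pi$ of a tuple in $R$, so it is satisfied.

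I expect the main obstacle to be the repeated-coordinate issue just mentioned. Replacing only some occurrences of $\alpha(x)$ would give constraints mixing $x$ and $z_{\alpha(x)}$. Those constraints are harmless but do not by themselves certify that $g'$ is an endomorphism, so the full-substitution version is what makes the argument go through. A second minor point is checking that the instance size is $O_R(n)$: there are $|R|$ constraints for $G$ plus at most $|R|$ per variable. This size bound matters for the later streaming reduction.
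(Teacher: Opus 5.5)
Your proof is correct and has the same architecture as the paper's: one dummy per domain value, a constraint $R(z_{t_1},\ldots,z_{t_k})$ for each $t\in R$ to force an automorphism $\pi$, and then attached constraints tying each $x$ to $\pi(\alpha(x))$. The difference is in the attaching step.

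\textbf{The paper's attaching step.} The paper substitutes $x$ into a \emph{single} coordinate $t$ with $r_t=\alpha(x)$, leaving any other occurrences of $\alpha(x)$ as the dummy $z_{\alpha(x)}$. It then argues by contradiction using \Cref{lem:core-separates-values}. If $b=\pi^{-1}(h(x))\neq\alpha(x)$, that lemma gives $r\in R$ and $t$ with $r_t=\alpha(x)$ and $r^{t\leftarrow b}\notin R$. The corresponding attached constraint then receives $\pi(r^{t\leftarrow b})\notin R$, a contradiction.

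\textbf{Your attaching step.} Your full-substitution constraints certify directly that the modified map $g'$ is an endomorphism. Since $g'$ and $g$ are both bijections agreeing off one point, $g'(\alpha(x))=g(\alpha(x))$. This is self-contained and avoids the separating lemma. That lemma is itself just a one-coordinate-at-a-time refinement of the fact that collapsing $\alpha(x)$ to $b$ is not an endomorphism.

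\textbf{The obstacle you anticipate is not real.} Single-position substitutions alone do suffice, precisely because of \Cref{lem:core-separates-values}. They just need that lemma rather than your bijection argument.

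\textbf{What each route buys.} Your route is cleaner for this lemma alone. The paper's single-coordinate form pays off in \Cref{lem:exact-pinning}. There each deviating variable must violate about $m^{k-1}$ attached constraints to outweigh its degree $\Delta\le N^{d-1}$, with blocks of size $m=\Theta(N)$. A full-substitution constraint built from a tuple containing $c$ copies of $\alpha(x)$ leaves only $k-c$ free block coordinates. That yields only about $m^{k-c}$ such constraints, which need not exceed $N^{d-1}$ when $c\ge 2$ and $d=k$.
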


Here is the idea behind the lemma. Introduce one dummy variable $y_a$ for every value $a$ in the domain. Add constraints among these variables so that, under any satisfying assignment, the map sending $a$ to the value of $y_a$ preserves $R$ and is therefore an endomorphism. Since $R$ is a core relation, this map must be an automorphism. We then attach every original variable $x$ to these dummy variables so that $x$ must receive the image of $\alpha(x)$ under the same automorphism. Thus all original variables follow one common symmetry of $R$.

Every assignment $\pi\circ\alpha_j$ satisfies and violates exactly the same $R$-constraints as $\alpha_j$. Bob can therefore append the instance from the informal lemma in the ideal reduction above. If $b_j=0$, an automorphic image of $\alpha_j$ satisfies the final instance. If $b_j=1$, every assignment satisfying Bob's constraints violates $C_j$. A streaming algorithm for satisfiability would therefore give a one-way protocol for $\Index_m$, proving the $\Omega_R(\NRD_n(R))$ lower bound for core relations. Reducing an arbitrary relation to its core gives the characterization in \Cref{sec:sat}.

For the sake of brevity, we next discuss the techniques that underlie the Exact-CSP results. For a discussion of the sparsification and minimization results, we encourage the reader to look at \cref{sec:mincsp}.

\subsection{\texorpdfstring{\ExactCSP}{Exact-CSP}}\label{subsec:over-exact}

We now explain why exact optimization requires a stronger pinning construction than satisfiability. Let $d$ be the \emph{degree} of the indicator function of the core relation. Informally, this is the largest number of coordinates that occur together in a monomial of its polynomial representation. Our result gives an $O_R(n^d\log n)$ upper bound and a matching $\Omega_R(n^d)$ lower bound up to the logarithmic factor. The formal statement is \Cref{thm:exact-main}, and the formal definition of degree is in \Cref{subsec:degree}.

As before, we first reduce the relation to its core. The upper bound stores the coefficients of the low-degree polynomial representing the value of the instance. For the lower bound, the ideal reduction again asks Bob to hard-pin complete assignments. It is tempting to use \Cref{lem:hard-pinning} for this purpose. But that lemma is not strong enough here. It describes assignments that satisfy every pinning constraint, whereas an optimal assignment may violate some pinning constraints in order to satisfy more of Alice's constraints. We therefore first describe the ideal reduction and then explain how a stronger pinning construction ensures that such a violation cannot improve the objective.

\begin{example}[Degree of $\CUT$]
For $\CUT$ on the domain $\{0,1\}$, the indicator is $\mathbf 1_{\CUT}(x,y)=x+y-2xy$. Its degree is therefore $2$.
\end{example}

\begin{example}[Degree of $\NAE$]
Let $R_{\NAE}\subseteq\{0,1\}^3$ contain the tuples whose three entries are not all equal. Its indicator is $\mathbf 1_{R_{\NAE}}(x,y,z)=x+y+z-xy-xz-yz$. Thus $R_{\NAE}$ has degree $2$ even though its arity is $3$.
\end{example}

\paragraph{Upper bound.}

The upper bound follows directly from the polynomial representation. Kol, Paramonov, Saxena, and Yu~\cite{KPSY23} give the Boolean version. They write the value of an instance as a polynomial in the variables and store its coefficients. Each arriving constraint adds one fixed low-degree polynomial to this coefficient table. We use the same idea over an arbitrary finite domain, using the value indicators from \Cref{subsec:degree} in place of Boolean variables.

\begin{example}[The upper bound for $\CUT$]
Consider a stream of edges defining a $\CUT$-instance. By the expansion above, an arriving edge $\{x_u,x_v\}$ contributes $x_u+x_v-2x_ux_v$ to the value of every assignment. The algorithm adds these three coefficients to its table. After the stream, the sum of the stored terms is exactly the number of edges cut by every assignment. There are only $O(n^2)$ possible terms, and their coefficients use $O(\log n)$ bits on a simple instance.  
The algorithm may then try every assignment, evaluate the stored polynomial, and return its maximum. Thus $\ExactCSP(\CUT)$ uses $O(n^2\log n)$ space.
\end{example}

The same argument works over every finite domain. If the indicator of the core relation has degree $d$, each arriving constraint contributes a fixed polynomial whose terms involve at most $d$ variables. Across all scopes, there are only $O_R(n^d)$ possible terms. The algorithm stores the coefficient of each term, and after the stream it tries all assignments and returns the largest value. This gives the $O_R(n^d\log n)$ upper bound and extends the Boolean argument of Kol, Paramonov, Saxena, and Yu to arbitrary finite domains.

\paragraph{The lower bound.}
We again reduce from one-way $\Index$. We call the values of Alice's instance at assignments chosen by Bob \emph{point evaluations}. The following $\CUT$ example shows how degree lets Bob combine these point evaluations to recover his bit.

\begin{example}[Four point evaluations for $\CUT$]
Alice receives bits $b_{ij}$ indexed by $(i,j)\in[N]^2$. She creates vertices $x_1,\ldots,x_N$ and $y_1,\ldots,y_N$, and inserts the edge $\{x_i,y_j\}$ whenever $b_{ij}=1$. Let $P$ denote the resulting graph.

Bob receives a pair $(p,q)$. He uses four assignments. The assignment $s_{00}$ sets every variable to $0$. The assignment $s_{10}$ sets only $x_p$ to $1$, the assignment $s_{01}$ sets only $y_q$ to $1$, and the assignment $s_{11}$ sets only $x_p$ and $y_q$ to $1$. In the ideal reduction, Bob makes four copies of Alice's streaming state and appends a different input to each copy, one for each of these assignments. The four point evaluations satisfy
\[
\Val_P(s_{11})-\Val_P(s_{10})-\Val_P(s_{01})+\Val_P(s_{00})=-2b_{pq}.
\]

To see the cancellation, consider one edge of $P$. An edge not incident to $x_p$ or $y_q$ contributes zero to the signed sum. An edge incident to exactly one of them contributes equally to two terms with opposite signs. The only edge that can remain is $\{x_p,y_q\}$. If it is present, it is cut under $s_{10}$ and $s_{01}$ but not under $s_{00}$ or $s_{11}$, giving a total contribution of $-2$.
\end{example}

Thus Bob can recover $b_{pq}$ if the run from each copy reveals the value of $P$ at its chosen assignment. If Bob could directly hard-pin the chosen assignment, he could append the required constraints in each run. For $\CUT$, it is enough to force either that assignment or its complement, since complementation preserves the value of every $\CUT$-instance. The optimum of the resulting instance, after subtracting the known number of pinning constraints, would reveal the required point evaluation.

Roughly, the same ideal reduction extends to every core relation of degree $d$. Alice's bits are indexed by $[N]^d$, and Bob considers $2^d$ assignments obtained by changing the selected variable in each of $d$ groups. A signed sum of their values cancels every constraint except the one indexed by Bob's query. If Bob could hard-pin each assignment in one run of the streaming algorithm, these $2^d$ optimum values would reveal his bit.

It would be tempting to use the same satisfiability hard-pinning lemma, \Cref{lem:hard-pinning}. However, it only controls assignments that satisfy the whole pinning instance. An optimum for the exact-optimization problem might instead violate one pinning constraint and gain many constraints from Alice's instance. We therefore replace the constant-size pinning gadget by large blocks whose constraints outweigh every possible gain from $P$.

\begin{example}[Strong pinning for $\CUT$]
The graph $P$ has $2N$ vertices, at most $N^2$ edges, and maximum degree at most $N$. Fix an assignment $s$ to its vertices. We want to add a gadget that forces every optimum to agree with $s$ on the original vertices, up to complementing every value. To do this, we introduce extra vertices. Let $L=1000N$, and create two blocks $Z_0,Z_1$, each containing $L$ vertices. Add every edge between the two blocks. For every original vertex $x$, add $10N$ edges from $x$ to distinct vertices in $Z_1$ when $s(x)=0$, and to distinct vertices in $Z_0$ when $s(x)=1$.

We first show that every optimum makes the two blocks monochromatic, meaning that all vertices in each block receive the same value, and gives the blocks opposite values. Choose a most frequent value in each block. If these two values are the same, at least $L^2/4$ edges between the blocks are uncut. Making the blocks monochromatic with opposite values cuts all block edges and loses at most the $20N^2$ edges attached to the original vertices. Since $L^2/4>20N^2$, an optimum must have opposite majority values. Up to complementing the entire cut, assume that the majority value of $Z_0$ is $0$ and that of $Z_1$ is $1$.

Let $a$ be the number of vertices assigned $1$ in $Z_0$, and let $c$ be the number assigned $0$ in $Z_1$. The number of uncut edges between the blocks is $a(L-c)+(L-a)c=L(a+c)-2ac\ge L(a+c)/2$. Making both blocks monochromatic according to their majority values gains these edges and loses at most $2N(a+c)$ attached edges. Since $L/2>2N$, an optimum must have $a=c=0$. Thus $Z_0$ is entirely $0$ and $Z_1$ is entirely $1$, up to complementing all values.

Once the blocks are fixed, every original vertex $x$ receives all its $10N$ attached edges exactly when $x=s(x)$. Giving $x$ the other value loses $10N$ attached edges and can gain at most its $N$ incident edges in $P$. Hence every optimum agrees with $s$ on all original vertices, up to complementing the whole assignment.

Let $H_s$ denote the block and attached edges. The intended assignment cuts every edge of $H_s$, and complementation preserves the value of every $\CUT$-instance. Therefore
\[
\OPTmax(P\cup H_s)=|H_s|+\Val_P(s).
\]
The construction depends only on the public value $N$ and on $s$, so Bob can build it without knowing which edges Alice inserted.
\end{example}

Bob now makes four copies of Alice's streaming state and appends $H_{s_{00}}$, $H_{s_{10}}$, $H_{s_{01}}$, and $H_{s_{11}}$ to the respective copies. From each optimum, he subtracts the known size of the appended gadget and obtains one point evaluation of $P$. Their alternating sum reveals $b_{pq}$, giving an $\Omega(N^2)$ lower bound for $\ExactCSP(\CUT)$.\footnote{Kol, Paramonov, Saxena, and Yu~\cite{KPSY23} also prove an ad hoc $\Omega(n^2)$ lower bound for Max-Cut via a different reduction.}

For a general core relation of degree $d$, the construction in \Cref{lem:exact-pinning} plays the same role. It replaces the two $\CUT$ blocks by one block for every value in the core domain. The constraints among the blocks force their most frequent values to define one common endomorphism, which must be an automorphism because the relation is a core. The attached constraints then force the original variables to follow the desired assignment under this automorphism. As in the $\CUT$ example, the blocks are large enough that every deviation loses more attached constraints than it can gain in Alice's instance.

The formal construction gives exactly the property needed in the communication reduction.

\begin{lemma}[Informal \Cref{lem:exact-pinning}]
Let $R$ be a core relation, and let $P$ be a simple $R$-instance on $O_R(N)$ variables with at most $N^d$ constraints and maximum variable degree at most $N^{d-1}$. For every assignment $s$ to the variables of $P$, there is an $R$-instance $H_s$, using $O_R(N)$ additional variables and constructible without knowing the constraints of $P$, such that $\OPTmax(P\cup H_s)=|H_s|+\Val_P(s)$.
\end{lemma}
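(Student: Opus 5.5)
We generalize the $\CUT$ strong-pinning construction: one large block per value of the core domain, then a layered "majority, then purity, then original variables" argument. Let $D$ be the domain of the core relation $R$, $|D|=q$, and $k$ the arity.

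\textbf{Construction of $H_s$.} For every value $a\in D$ create a block $Z_a$ of $L=c_R N$ fresh variables, with $c_R$ a large constant chosen later. For every tuple $t=(t_1,\dots,t_k)\in R$ and every choice $(z_1,\dots,z_k)\in Z_{t_1}\times\cdots\times Z_{t_k}$ with distinct entries, add the constraint $R(z_1,\dots,z_k)$. Call these the block constraints; they are $\Theta_R(L^k)$ in number and simple. For every original variable $x$, and every tuple $t\in R$ and position $i$ with $t_i=s(x)$, attach $M=c'_R N^{d-1}$ constraints $R(w_1,\dots,w_k)$ with $w_i=x$ and $w_j$ fresh distinct vertices of $Z_{t_j}$ for $j\ne i$. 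Choose $L$ large relative to the $O_R(N)\cdot M$ attachments so that each block vertex receives only $O_R(N^{d-1}\cdot N/L)$ attached constraints. Under the intended assignment $y\equiv a$ on $Z_a$ and $x=s(x)$, every constraint of $H_s$ is satisfied. Hence $\OPTmax(P\cup H_s)\ge |H_s|+\Val_P(s)$, and the same bound holds for any automorphism $\pi$ applied throughout, since $\pi$ preserves $\Val_P$ and $H_s$.

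\textbf{Upper bound, in three stages.} Fix an optimal assignment $\beta$.

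First, for each $a$, let $\mu(a)$ be a most frequent value of $\beta$ on $Z_a$. We claim $\mu$ is an endomorphism of $R$. Suppose not: then some $t\in R$ has $\mu(t)\notin R$. Every combination of majority-valued vertices in $Z_{t_1}\times\cdots\times Z_{t_k}$ is then violated, which is at least $(L/q)^k-O(L^{k-1})$ violated block constraints. Replacing $\beta$ on the blocks by the intended block assignment satisfies all block constraints. This loses at most all attached constraints, $O_R(N\cdot N^{d-1})=O_R(N^d)$, plus at most $N^d$ constraints of $P$. Taking $L=c_RN$ and $k\ge d$ makes the gain $\Omega(L^k)$ dominate, so $\mu$ is an endomorphism. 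Since $R$ is a core, $\mu$ is an automorphism. Composing $\beta$ with $\mu^{-1}$ changes no values, so we may assume $\mu=\mathrm{id}$.

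Second, we show the blocks are pure. Let $e$ be the number of block vertices whose value differs from their block's label. Each such vertex $v\in Z_a$ lies in $\Omega_R(L^{k-1})$ block constraints whose other entries are label-valued. Because $\beta(v)\ne a$, and because a core has no nontrivial "column substitution" preserving all of $R$, at least one such constraint type is violated. More precisely, some $t\in R$ with $t_i=a$ has $t[i\mapsto\beta(v)]\notin R$; otherwise the map sending $a\mapsto\beta(v)$ and fixing everything else would be a non-injective endomorphism. So at least $\Omega_R(L^{k-1})$ constraints involving $v$ are violated, minus those containing other deviating vertices, giving $\Omega(eL^{k-1})-O(e^2L^{k-2})\ge\Omega(eL^{k-1})/2$ as in the $\CUT$ computation. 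Resetting all deviating vertices costs at most $e\cdot O_R(N^{d})/L$ attached constraints, and $P$ is unaffected. Hence $e=0$.

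Third, with pure blocks, each original $x$ with $\beta(x)\ne s(x)$ fails at least $M$ attached constraints. This uses the same core argument: some $t\in R$ with $t_i=s(x)$ has $t[i\mapsto\beta(x)]\notin R$. Meanwhile $x$ gains at most its degree $N^{d-1}$ in $P$. Choosing $c'_R>1$ forces $\beta=s$ on the originals, which gives equality.

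\textbf{Main obstacle.} The delicate step is the single-coordinate substitution claim used in stages two and three. It needs care when a tuple repeats the value $a$ in several coordinates, so the map $a\mapsto b$ must be checked coordinatewise; it is cleanest to argue via the non-injective map $h$ with $h(a)=b$ and $h=\mathrm{id}$ elsewhere. If it fails in general, I would instead attach $x$ through all tuples simultaneously with several copies of $x$, and invoke the gadget of \Cref{lem:hard-pinning}, weighted by $M$.
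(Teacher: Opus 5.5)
Your proof is correct in outline, but it is organized differently from the paper's. The paper attaches each original variable $x$ through \emph{all} admissible block choices: every $r\in R$, every position $t$ with $r_t=s(x)$, and every tuple of pairwise distinct block variables in the other coordinates. This makes your purity stage unnecessary. Once the majority map $\sigma$ is known to be an automorphism, the paper counts, for each deviating $x$, only those attached constraints whose block entries carry their block's majority value. There are at least $\eta_R m^{k-1}>\Delta$ of them however impure the blocks are. Each is assigned $\sigma(r^{t\leftarrow b})\notin R$, and $\Val_P(\alpha|_X)\le\Val_P(s)+\Delta|W|$ finishes the proof.

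The paper's first stage is also simpler. It compares $\alpha$ directly with $|H_s|$: the $\eta_R m^k>B$ violated block constraints already push the value below $|H_s|$. So no accounting of attached constraints is needed, and $C_R$ does not depend on the constant in $|X|=O_R(N)$. Your sparse attachments ($M=c'_RN^{d-1}$ per family) penalize a deviating $x$ only when those particular block vertices hold their labels. That is exactly why you need the extra exchange argument forcing pure blocks. The route works but gains nothing, since $H_s$ still has $\Theta_R(N^k)$ block constraints and only the $O_R(N)$ new variables matter downstream.

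Two details of your stage two need repair.
\begin{enumerate}
\item The subtraction of $O(e^2L^{k-2})$ is unnecessary, and the stated inequality fails for $e=\Theta(L)$. You already take the other entries label-valued, and there are at least $L/q$ such choices per block after normalizing $\mu=\mathrm{id}$. So every counted constraint contains exactly one deviating vertex. The counts for different deviating vertices are therefore disjoint and give at least $\eta_R eL^{k-1}$ violated block constraints outright.
\item You do not need a balanced distribution of attachments, which ``choosing $L$ large'' does not by itself provide. Within one family $(x,t,i)$ the attached tuples are distinct, so a fixed block vertex lies in at most $(k-1)L^{k-2}$ of them. Summing over the $O_R(N)$ families gives load $O_R(NL^{k-2})$. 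This is below $\eta_RL^{k-1}$ once $L=c_RN$ with $c_R$ large.
\end{enumerate}

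Finally, the single-coordinate substitution you flag as the main obstacle is exactly \Cref{lem:core-separates-values}. Take $r\in R$ with $h(r)\notin R$ and replace the occurrences of $a$ by $b$ one at a time. The first step that leaves $R$ gives the required tuple and coordinate, so no fallback is needed.
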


Alice's instance and each pinning gadget use only $O_R(N)$ variables. Bob can therefore recover the $2^d$ point evaluations needed by the signed sum, and the lower bound for $\Index_{N^d}$ gives the $\Omega_R(n^d)$ space lower bound.

The two lower bounds therefore use the same structural principle. We first reduce exactly to the core and then recover the variable pinnings. For satisfiability, it is enough to restrict the satisfying assignments up to one common automorphism. For exact optimization, the restriction must additionally be strong enough that violating it is never profitable. The sparsification result uses the same core reduction to obtain the minimization characterization.%

\subsection{Relation to prior work}

We first explain how our results extend the earlier work of Sharma and Velusamy~\cite{sharma2026characterizingstreamingdecidabilitycsps}. Here, a constraint language is a fixed finite set of relations over the same domain. They proved the satisfiability characterization in two settings. First, they considered every finite domain but allowed literals: when the domain is $\mathbb{Z}_q$, each occurrence of a variable $x$ in a constraint may be replaced by $x+\lambda$ modulo $q$, for an arbitrary $\lambda\in\mathbb{Z}_q$. In the Boolean setting, this is exactly the choice between $x$ and $\neg x$. These shifts give the lower bound construction more freedom. For a fixed relation, a lower bound that uses nonzero shifts need not apply when all shifts are zero.

Their second result concerns positive Boolean \CSPs, meaning Boolean \CSPs without literals. They assume that the constraint language is not ``constant-satisfiable''. This means that there is no value $c\in\{0,1\}$ for which assigning every variable the value $c$ satisfies every relation in the language. This is the general interesting case for positive Boolean \CSPs, since otherwise every instance has the same constant satisfying assignment.

\begin{lemma}[Informal \Cref{lem:shift-core-nrd,lem:boolean-languages-are-cores}]
Reducing a language with literals to its core does not change its non-redundancy. Moreover, a Boolean constraint language is a core if and only if it is not constant-satisfiable.
\end{lemma}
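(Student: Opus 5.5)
We prove the two parts separately, working with a constraint language $\Gamma$ over a common domain $D$. Here an endomorphism is a map $h:D\to D$ preserving every relation of $\Gamma$. In the literal model over $\mathbb Z_q$, the effective language is the closure $\Gamma^{\pm}$ of $\Gamma$ under coordinatewise shifts by arbitrary $\lambda\in\mathbb Z_q^k$.

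\textbf{Part 1: $\Gamma^{\pm}$ is already a core.} We would show that every endomorphism of $\Gamma^{\pm}$ is a bijection, so the core of the shifted language is the language itself and its non-redundancy is unchanged.

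Let $h$ be an endomorphism of $\Gamma^{\pm}$ and fix a nonempty relation $R\in\Gamma$ of arity $k$. Since $R$ is nonempty, its shifts cover every tuple $c\in\mathbb Z_q^k$. In particular some shift of $R$ contains a tuple whose first coordinate is any prescribed value $a$.

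Now suppose $h(a)=h(b)$ with $a\neq b$. We look for a shift $R+\lambda$ that contains a tuple $t$ with $t_1=a$, and whose shifted copy $t'$ (with $t'_1=b$ and the other coordinates equal to those of $t$) is not in $R+\lambda$. Such a shift exists unless $R$ is invariant under changing its first coordinate, i.e.\ $R=D\times R'$ in that coordinate. Given such a shift, we argue that $h(t)=h(t')$. We would then pick a further shift of $R$ separating these two images; this uses that $h$ must preserve every shift simultaneously.

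A cleaner route is to note that the unary relations $\{a\}$ are pp-definable from $\Gamma^{\pm}$ whenever some relation is nontrivial. One can pin a variable by intersecting suitable shifts of a relation that is non-full in some coordinate. An endomorphism preserves pp-definable relations, so $h(a)=a$ for every $a$, and $h$ is the identity.

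The main obstacle is exactly this pinning step. Intersecting all shifts $R+\lambda$ with $\lambda$ zero off one coordinate produces a unary relation equal to some coset structure, not necessarily a singleton. If $R$ depends on a coordinate, that coordinate's projection pattern is periodic under some subgroup. We would handle this by quotienting, and by checking that non-redundancy depends only on the "essential" part. If direct pinning fails, we fall back to the weaker claim: the core of $\Gamma^{\pm}$ is obtained by restricting to a subset closed under the relevant structure. We would then directly compare non-redundant instances via the retraction, using that a retraction maps witnesses to witnesses. This gives $\NRD$ equality in either case, since restriction to a retract preserves non-redundancy (witnesses $\alpha_j$ map to $h\circ\alpha_j$, which still violate $C_j$ and satisfy the others).

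\textbf{Part 2: Boolean languages.} If $\Gamma$ is constant-satisfiable via $c$, the constant map to $c$ is a non-bijective endomorphism, so $\Gamma$ is not a core. Conversely, if $\Gamma$ is not a core, some endomorphism is non-bijective. On $\{0,1\}$ any non-bijective map is constant, say $h\equiv c$. Applying $h$ to any tuple of a nonempty relation gives $(c,\dots,c)$, which must then lie in that relation. Hence the all-$c$ assignment satisfies every relation, so $\Gamma$ is constant-satisfiable.
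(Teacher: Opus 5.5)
Your Part 2 is correct and matches the paper's argument. Note that you implicitly need every relation in $\Gamma$ to be nonempty, as the paper assumes: an empty relation is preserved by the constant maps but contains no constant tuple.

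Part 1, however, has a genuine gap. Its headline claim is false. Over $\mathbb{Z}_4$, the shift closure of the unary relation $\{0,2\}$ is $\{\{0,2\},\{1,3\}\}$. The map sending $0,2\mapsto 0$ and $1,3\mapsto 1$ is a non-bijective endomorphism of it. So singletons are not pp-definable, and the shift closure need not be a core. Your fallback does not rescue this. The principle you invoke, that for a retraction $h$ the assignment $h\circ\alpha_j$ still violates $C_j$, is false in general: endomorphisms preserve satisfied constraints but may repair violated ones. The relation $T$ from the overview is a counterexample. We have $(2,3)\notin T$ while $\sigma(2,3)=(0,1)\in T$, and indeed $\NRD_n(T)=\Theta(n^2)$ whereas $\NRD_n(\Core(T))=\Theta(n)$. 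The inequality $\NRD_n^+(\Gamma)\le\NRD_n^+(\Core(\Gamma))$ is exactly where shift-closure must be used, and your argument never uses it there.

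The missing idea is a specific core retraction that \emph{reflects} membership rather than merely preserving it. Let $H$ be the subgroup of $d\in\mathbb{Z}_q$ with $S+de_i=S$ for every $S\in\Gamma$ and every coordinate $i$. Let $\rho$ map each element to a fixed representative of its $H$-coset. Then $x\in S\iff\rho(x)\in S$, so $\rho\circ\alpha_j$ is still a witness for $C_j$. Also $S\mapsto\rho(S)$ is injective, so instance sizes are preserved.

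You must also show that $\rho$ has minimum image, so that $\rho(\Gamma)$ really is the core. For an endomorphism $h$, put $\delta(a)=h(a)-a$. Given $x\in S$ and any tuple $z$, shift-closure gives $z\in S^{x-z}\in\Gamma$. Hence $h(z)\in S^{x-z}$, i.e.\ $x+\delta(z)\in S$, so translation by $\delta(z)$ preserves $S$. Comparing two tuples that differ only in coordinate $i$ shows that translation by $(\delta(a)-\delta(b))e_i$ preserves $S$. Thus $\delta(a)-\delta(b)\in H$, and $h(a)=h(b)$ forces $b-a\in H$. Your first attempt (finding a shift separating $t$ from $t'$) breaks down precisely when $b-a$ is such an invisible shift. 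That is why it cannot be completed as stated; the obstruction is not $R=D\times R'$.
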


Our core viewpoint uses these two facts to explain both earlier results. For the literal model, treat every shifted use of a relation as a separate relation. The resulting language is shift-closed, so the first part of the lemma shows that passing to its core does not change its non-redundancy. Our general core characterization therefore gives the earlier characterization for \CSPs with literals. For the positive Boolean result, the only non-bijective maps on the Boolean domain are the two constant maps, which send both domain values to $0$ or both to $1$. The second part of the lemma therefore shows that the assumption of not being constant-satisfiable is exactly the assumption that the language is a core.

Our work then goes beyond both results by giving a characterization in the most general finite-domain setting.
The precise statements and their proofs are given in \Cref{sec:literals-and-constants}.

We also compare our exact-optimization result with the work of Kol, Paramonov, Saxena, and Yu~\cite{KPSY23}. Their general characterization is restricted to Boolean \CSPs with literals and constants. Our result applies to arbitrary relations, requiring neither literals nor constants, giving a characterization in the more general \CSP setting. However, their lower bound does extend to a constant number of passes. We view an extension of our result to the multi-pass setting as an exciting future direction.

\section{Preliminaries}
\label{sec:prelims}

In this section, we state the definitions used in the rest of the paper. %

\subsection{\texorpdfstring{\CSPs}{CSPs}}
\label{subsec:csp-prelims}

Throughout the paper, $D$ is a fixed finite nonempty domain and $R\subseteq D^k$ is a fixed relation of arity $k\ge 1$. We work with ordinary $R$-constraints, without literals or shifts. We refer to this as the general finite-domain setting, since the finite domain and relation are arbitrary. We call $R$ \emph{non-trivial} if $\varnothing\subsetneq R\subsetneq D^k$. The asymptotic symbols $O_R$, $\Omega_R$, and $\Theta_R$ hide constants that may depend on $R$.%

Let $V=\{x_1,\ldots,x_n\}$ be a set of variables. An \emph{$R$-constraint} is an expression $R(x_{i_1},\ldots,x_{i_k})$, and its \emph{scope} is the ordered variable tuple $(x_{i_1},\ldots,x_{i_k})$. The order matters unless $R$ is symmetric. Variables may repeat within a scope. An assignment $\alpha:V\to D$ satisfies this constraint if $(\alpha(x_{i_1}),\ldots,\alpha(x_{i_k}))\in R$.

An unweighted $R$-instance $I$ is a finite multiset of $R$-constraints. Thus the same constraint may occur more than once, and $|I|$ counts constraints with multiplicity. A weighted instance gives every occurrence of a constraint $C$ a nonnegative rational weight $w(C)$ represented in binary. Unweighted instances are the special case in which every weight is $1$. The support of a weighted instance is the set of constraint occurrences with nonzero weight. A weighted subinstance $H\subseteq I$ uses only occurrences from $I$, but it may give them new nonnegative weights.

An instance is \emph{simple} if no ordered scope occurs more than once. The degree of a variable is the number of constraint occurrences whose scopes contain it. A constraint contributes once to this degree even if the variable appears more than once in its scope. The maximum variable degree is denoted by $\Delta(I)$. A relation is \emph{symmetric} if membership in $R$ is unchanged by permuting the $k$ coordinates. For a symmetric relation, we also use the stronger notion of \emph{unordered simplicity}. This means that no two scopes become the same multiset of variables after their coordinates are reordered. %

\begin{example}
Let $D=\{0,1\}$ and $\CUT=\{(0,1),(1,0)\}$. A $\CUT$-constraint $\CUT(x,y)$ requires $x$ and $y$ to receive different values. Thus a $\CUT$-instance is a graph, with one variable at each vertex and one constraint at each edge.
\end{example}

\subsection{Objectives and streaming space}
\label{subsec:objectives}

We next define the three objectives studied in the paper. We state them for a weighted $R$-instance, which is the more general setting. The inputs to our three problems are unweighted unless we say otherwise. Let $I$ be a weighted $R$-instance and let $\alpha$ be an assignment to its variables. We write
\[
\Val_I(\alpha)=\sum_{\substack{C\in I\\\alpha\text{ satisfies }C}}w(C)
\qquad\text{and}\qquad
\UnsatVal_I(\alpha)=\sum_{\substack{C\in I\\\alpha\text{ violates }C}}w(C).
\]
Thus $\Val_I(\alpha)$ is the weight of the satisfied constraints and $\UnsatVal_I(\alpha)$ is the weight of the violated constraints. We define
\[
\OPTmax(I)=\max_\alpha\Val_I(\alpha)
\qquad\text{and}\qquad
\OPTmin(I)=\min_\alpha\UnsatVal_I(\alpha).
\]
The set of satisfying assignments is $\Sat(I)=\{\alpha:\UnsatVal_I(\alpha)=0\}$.

The problem $\SAT(R)$ asks whether $\Sat(I)$ is nonempty. The problem $\MinCSP(R)$ asks for $\OPTmin(I)$, and a $\gamma$-approximation must return a number between $\OPTmin(I)$ and $\gamma\OPTmin(I)$. We write $\MinCSP_\gamma(R)$ for this approximation problem. Finally, $\ExactCSP(R)$ asks for the exact value $\OPTmax(I)$. Instances are unweighted unless we explicitly say otherwise.

We will also use a stronger guarantee than merely approximating $\OPTmin(I)$. A weighted subinstance $H\subseteq I$ is a $(1\pm\varepsilon)$ \emph{sparsifier} of $I$ if $(1-\varepsilon)\UnsatVal_I(\alpha)\le\UnsatVal_H(\alpha)\le(1+\varepsilon)\UnsatVal_I(\alpha)$ for every assignment $\alpha$. As we show in \Cref{cor:mincsp-main}, running the sparsifier with a slightly smaller error and minimizing these inequalities gives a $(1+\varepsilon)$-approximation to $\OPTmin(I)$.

In the streaming model, the constraints arrive one at a time in an arbitrary order, and the algorithm makes one pass over this insertion-only stream. The domain and relation are fixed and are not part of the stream. We use $n$ for the number of variables and $m$ for the stream length. There is no restriction on the update time or on the time used after the stream. We measure only the number of stored bits.

Unless stated otherwise, a randomized algorithm must be correct with probability at least $2/3$ on every input. We write $\SPACE(\Pi)$ for the minimum worst-case space of such a one-pass algorithm for a problem $\Pi$. Upper bounds described as deterministic use no randomness, while all lower bounds also apply to randomized bounded-error algorithms.

\subsection{Non-redundancy}
\label{subsec:nrd}

Non-redundancy measures how many constraints can be independently necessary for satisfiability. Let $I$ be an unweighted $R$-instance and let $C\in I$. We call $C$ \emph{redundant in $I$} if every assignment satisfying all constraints obtained by removing this occurrence of $C$ also satisfies $C$. Otherwise, there is an assignment that violates $C$ and satisfies every other constraint of $I$. We call such an assignment a \emph{witness for $C$}. The instance $I$ is \emph{nonredundant} if every one of its constraints has a witness.

We define $\NRD_n(R)$ to be the maximum number of constraints in a nonredundant $R$-instance on at most $n$ variables.

\begin{example}
For the relation $\CUT$, every forest gives a nonredundant instance. Fix an edge and remove it. The two endpoints are now in different components. We can properly $2$-color each component and choose the colors so that these endpoints receive the same value. This assignment violates the removed edge and satisfies every other edge, so it is a witness for that edge.
\end{example}

We will use the following elementary lower bound to account for the space needed to store an assignment.

\begin{observation}
\label{obs:nrd-linear}
If $R\subsetneq D^k$ is nonempty, then $\NRD_n(R)=\Omega_R(n)$.
\end{observation}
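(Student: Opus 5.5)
The plan is to exhibit, for every $n$, an explicit nonredundant $R$-instance on at most $n$ variables with $\Omega_R(n)$ constraints. Each constraint will have its own private variable, and the witnesses will be built from one fixed satisfying tuple and one fixed violating tuple. Since $R$ is nonempty, fix a tuple $t=(t_1,\ldots,t_k)\in R$. Since $R\subsetneq D^k$, fix a tuple $u=(u_1,\ldots,u_k)\in D^k\setminus R$.

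\textbf{Construction.} Let $m=\lfloor n/k\rfloor$. Take $m$ pairwise disjoint blocks of $k$ distinct variables, $(x^{(j)}_1,\ldots,x^{(j)}_k)$ for $j=1,\ldots,m$. For each $j$, add the single constraint $C_j=R(x^{(j)}_1,\ldots,x^{(j)}_k)$. The resulting instance uses $mk\le n$ variables and has $m=\Omega_R(n)$ constraints, the hidden constant being $1/k$. For $n<k$ the bound is trivial once the constant is adjusted.

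\textbf{Witnesses.} For a fixed $j$, define $\alpha_j$ by setting $\alpha_j(x^{(j)}_i)=u_i$ for each $i\in[k]$, and $\alpha_j(x^{(j')}_i)=t_i$ for every $j'\neq j$. The blocks are disjoint and the variables inside a block are distinct, so this assignment is well defined. Under $\alpha_j$, the scope of $C_j$ receives the tuple $u\notin R$, so $C_j$ is violated. Every other constraint $C_{j'}$ receives the tuple $t\in R$, so it is satisfied. Thus $\alpha_j$ is a witness for $C_j$. Since every constraint has a witness, the instance is nonredundant, and $\NRD_n(R)\ge \lfloor n/k\rfloor=\Omega_R(n)$.

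There is no real obstacle in this argument. The only point requiring care is that the scopes are disjoint and consist of distinct variables. This is what lets us assign each constraint's tuple independently, with no interaction from repeated variables or shared scopes.
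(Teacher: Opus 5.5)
Your proposal is correct and matches the paper's proof: it takes one tuple in $R$ and one outside $R$, places one constraint on each of $\lfloor n/k\rfloor$ disjoint blocks of $k$ variables, and witnesses each constraint by putting the violating tuple on its block and the satisfying tuple on all the others. One small caveat: for $n<k$ with $\NRD_n(R)=0$ no positive constant makes the bound hold, so that case is covered by reading $\Omega_R(n)$ as an asymptotic statement rather than by adjusting the constant.
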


\begin{proof}
Choose a tuple $r\in R$ and a tuple $u\in D^k\setminus R$. Partition $k\lfloor n/k\rfloor$ variables into disjoint ordered scopes of size $k$, and put one $R$-constraint on each scope. To witness one chosen constraint, assign the tuple $u$ to its scope and the tuple $r$ to every other scope. The chosen constraint is violated and all other constraints are satisfied. The resulting instance is nonredundant and has $\lfloor n/k\rfloor=\Omega_R(n)$ constraints.
\end{proof}

By definition, a non-trivial relation is nonempty and is not equal to $D^k$. Thus the observation shows that non-redundancy is at least linear for every non-trivial fixed relation, although it may be much larger. Our satisfiability and minimization results depend on its value for the core of $R$, which we define next.

\subsection{Cores, automorphisms, and hard pinnings}
\label{subsec:core-prelims}

We now define the core of a relation. A map $\sigma:D\to D$ is an \emph{endomorphism} of $R$ if applying $\sigma$ to every coordinate of a tuple in $R$ always gives another tuple in $R$. Equivalently, $\sigma(R)\subseteq R$. An endomorphism is an \emph{automorphism} if it is a bijection. In this case $\sigma(R)=R$. We write $\Aut(R)$ for the set of automorphisms of $R$.

An endomorphism $\sigma$ is \emph{idempotent} if $\sigma(\sigma(a))=\sigma(a)$ for every $a\in D$. Choose an endomorphism with minimum image. Its restriction to its image is a permutation, since applying the map again cannot make the image smaller. A suitable power is therefore the identity on its image and is idempotent. We fix such an idempotent endomorphism $\sigma$ and write $C=\sigma(D)$. We define
\[
\Core(R)=\sigma(R)=R\cap C^k,
\]
and call $C$ its \emph{core domain}. We fix one such choice throughout the paper. The equality holds because an idempotent map fixes every value in its image. The relation $\Core(R)$ is a core relation because every one of its endomorphisms is an automorphism. Indeed, a non-bijective endomorphism composed with $\sigma$ would give an endomorphism of $R$ with image smaller than $C$. We call $R$ itself a \emph{core relation} when every endomorphism of $R$ is an automorphism.

\begin{example}
For $\CUT$, every endomorphism must send $0$ and $1$ to different values. Hence its only endomorphisms are the identity and the map that swaps $0$ and $1$. Both are automorphisms, so $\CUT$ is a core relation.

For a relation that is not a core, let $D=\{0,1,2\}$ and let
\[
R_{\mathrm{star}}=\{(0,1),(1,0),(0,2),(2,0)\}.
\]
The map that fixes $0$ and $1$ and sends $2$ to $1$ is a non-bijective endomorphism of $R_{\mathrm{star}}$. Therefore $R_{\mathrm{star}}$ is not a core relation. Its image is $\{0,1\}$, and restricting the relation to this image gives $\CUT$. Hence $\Core(R_{\mathrm{star}})=\CUT$ on the core domain $\{0,1\}$.
\end{example}

The automorphisms of a core relation are the only unavoidable ambiguity when we try to force variables to chosen values. Let $X$ be a set of variables and let $s:X\to D$ be a fixed assignment. We say that an $R$-instance $P_s$ \emph{hard-pins $s$ up to one common automorphism} if it may use additional variables and
\[
\{\alpha|_X:\alpha\in\Sat(P_s)\}=\{\sigma\circ s:\sigma\in\Aut(R)\}.
\]
\begin{example}
For $\CUT$, introduce two variables $y_0,y_1$ and require $\CUT(y_0,y_1)$. If $s(x)=0$, add $\CUT(x,y_1)$. If $s(x)=1$, add $\CUT(x,y_0)$. These constraints force every $x\in X$ to agree with $s$ after either the identity or the common swap of $0$ and $1$.

The relation $R_{\mathrm{star}}$ above explains why we first reduce to a core. Its non-bijective endomorphism replaces the value $2$ by $1$. Applying this map to any satisfying assignment gives another satisfying assignment, so constraints over $R_{\mathrm{star}}$ cannot in general distinguish these two values. After reducing $R_{\mathrm{star}}$ to its core $\CUT$, the construction above applies.
\end{example}

In \Cref{subsec:hard-pinning}, we prove that every core relation admits the same kind of construction.

\subsection{Finite-domain degree}
\label{subsec:degree}

The parameter for $\ExactCSP$ is the degree of the indicator function of the core relation. We define degree for any function $f:D^k\to\mathbb R$. The domain $D$ need not contain a value already called $0$. We choose any value of $D$ and call it $0$. This choice only fixes the notation, and the reader may think of it as the usual value $0$. For each coordinate $i\in[k]$ and each $a\in D\setminus\{0\}$, define $Z_{i,a}(z)=\mathbf 1[z_i=a]$.

Every function $f:D^k\to\mathbb R$ has a unique expansion
\[
f(z)=\sum_{T\subseteq[k]}\ \sum_{b\in(D\setminus\{0\})^T}\widehat f(T,b)\prod_{i\in T}Z_{i,b_i}(z),
\]
where $b$ chooses one nonzero value $b_i$ for every $i\in T$. These products form a basis for all real-valued functions on $D^k$. This follows by evaluating the function on points that are $0$ outside $T$ and using inclusion-exclusion. In particular, when $f$ is integer-valued, all the coefficients $\widehat f(T,b)$ are integers. We define $\Deg_D(f)$ to be the largest $|T|$ for which $\widehat f(T,b)\ne0$, with constant functions having degree $0$. Changing the basepoint changes the coefficients but not the degree. Indeed, the indicator of the old basepoint is one minus the sum of the indicators of all other values. Thus either set of indicators can be written using the other without increasing degree.

For a relation $R\subseteq D^k$, we write $\mathbf 1_R$ for its indicator function and use $\Deg_D(\mathbf 1_R)$ as the degree of $R$.

\begin{example}
For $\CUT$, let $Z_i(z)=\mathbf 1[z_i=1]$. Then
\[
\mathbf 1_{\CUT}(z_1,z_2)=Z_1(z)+Z_2(z)-2Z_1(z)Z_2(z),
\]
so $\Deg_{\{0,1\}}(\mathbf 1_{\CUT})=2$.

As another example, let $R_{\NAE}\subseteq\{0,1\}^3$ contain the tuples that are not constant. With $Z_i(z)=\mathbf 1[z_i=1]$, we have
\[
\mathbf 1_{R_{\NAE}}(z)=Z_1(z)+Z_2(z)+Z_3(z)-Z_1(z)Z_2(z)-Z_1(z)Z_3(z)-Z_2(z)Z_3(z).
\]
Thus $\Deg_{\{0,1\}}(\mathbf 1_{R_{\NAE}})=2$, even though $R_{\NAE}$ has arity $3$.
\end{example}

The upper and lower bounds for $\ExactCSP$ will both use this degree after reducing the relation to its core.

\subsection{Communication complexity}
\label{subsec:communication}

Our streaming lower bounds use the one-way communication problem $\Index_M$. Alice receives a string $b\in\{0,1\}^M$, Bob receives an index $i\in[M]$, and Bob must output $b_i$. In a public-coin one-way protocol, Alice and Bob share a random string, Alice sends one message to Bob, and Bob produces the answer from this message, his index, and the shared randomness.

The connection to streaming is standard. Alice creates the first part of a stream from her input, runs the streaming algorithm on it, and sends the memory state to Bob. Bob resumes the algorithm from this state and appends constraints determined by his index. The memory state is the only message, so an $s$-space streaming algorithm gives a one-way protocol with $s$ bits of communication. The random tape of the streaming algorithm can be treated as the public random string.

We use the following standard lower bound.

\begin{lemma}[One-way complexity of $\Index$, {\cite{Index1,Index2}}]
\label{lem:index}
Every randomized public-coin one-way protocol for $\Index_M$ with constant error uses $\Omega(M)$ bits of communication.
\end{lemma}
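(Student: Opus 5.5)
The statement is the standard lower bound for $\Index_M$: any public-coin one-way protocol with constant error needs $\Omega(M)$ bits. Since the paper cites \cite{Index1,Index2}, the cleanest route is an information-theoretic argument, which I would reproduce in outline.

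First reduce to the distributional setting. By Yao's minimax principle, it suffices to take $b$ uniform on $\{0,1\}^M$ and $i$ uniform on $[M]$, independent of each other. We then show that any deterministic one-way protocol with error at most $\delta<1/2$ under this distribution has a message $\mu(b)$ of length at least $(1-H_2(\delta))M$. Here $H_2$ is the binary entropy function. Public randomness is handled by fixing the random string that minimizes the average error. For the private-coin variant, we condition on the shared string instead, and the argument below goes through conditioned on it.

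The core step is a chain-rule bound. On the one hand,
\[
|\mu|\ \ge\ H(\mu)\ \ge\ I(\mu;b)=\sum_{j=1}^M I(\mu;b_j\mid b_{<j})\ \ge\ \sum_{j=1}^M I(\mu;b_j),
\]
where the last inequality uses that the $b_j$ are independent. On the other hand, for each $j$, Bob with index $j$ outputs a function of $\mu$ that equals $b_j$ with probability $1-\delta_j$, where $\frac1M\sum_j\delta_j\le\delta$. By Fano's inequality, $H(b_j\mid\mu)\le H_2(\delta_j)$, so $I(\mu;b_j)\ge 1-H_2(\delta_j)$. Summing and using concavity of $H_2$ (Jensen) gives $|\mu|\ge M(1-H_2(\delta))=\Omega(M)$ for constant $\delta<1/2$.

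The only delicate point is the bookkeeping of randomness. With public coins, the message may depend on the shared string $r$, so the inequalities should be run conditioned on $r$: $H(\mu\mid r)\le|\mu|$, and the per-coordinate errors are averaged over $r$. Because $b$ is independent of $r$, the chain rule and Fano's inequality still apply conditionally, and Jensen's inequality handles the averaging. Apart from this, the proof is routine.
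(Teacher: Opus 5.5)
Your proof is correct. The paper gives no proof of this lemma and simply cites Ablayev and Kremer--Nisan--Ron; your argument (fix the public string by averaging, then apply the chain rule with independence of the $b_j$, then Fano's inequality and Jensen) is the standard self-contained proof, and it yields the explicit bound $(1-H_2(\delta))M$ for every constant $\delta<1/2$. The only blemish is your aside about a ``private-coin variant'' being conditioned on a ``shared string''. That remark is muddled but harmless: private-coin protocols are a special case of public-coin ones, and once you have fixed $r$ by averaging, no further conditioning on $r$ is needed.
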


In each of our lower bounds, Alice's part of the stream encodes her string, while Bob appends constraints that reveal the bit at his index. The rest of each reduction is devoted to constructing those constraints.

\section{Properties of Cores}
\label{sec:cores}

In this section, we prove the facts about cores that we will use throughout the paper. We first show that reducing to the core does not change any of our three objectives. We then give two ways to force an assignment using only constraints from a core relation. The first forces the assignment up to one common automorphism and will be used for satisfiability. The second makes the same idea strong enough for exact optimization.

We will repeatedly use the fact that applying one automorphism to every assigned value changes nothing. We state it in the following observation.

\begin{observation}
\label{obs:automorphism-value}
Let $R\subseteq D^k$, let $I$ be an $R$-instance, and let $\sigma\in\Aut(R)$. For every assignment $\alpha$, we have $\Val_I(\sigma\circ\alpha)=\Val_I(\alpha)$. Therefore, $\UnsatVal_I(\sigma\circ\alpha)=\UnsatVal_I(\alpha)$.
\end{observation}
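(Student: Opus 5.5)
The plan is to verify the claim one constraint at a time and then sum over the instance. Fix a constraint $C\in I$ with scope $(x_{i_1},\ldots,x_{i_k})$, and write $t=(\alpha(x_{i_1}),\ldots,\alpha(x_{i_k}))$ for the tuple that $\alpha$ assigns to it. The assignment $\sigma\circ\alpha$ assigns the coordinatewise image $\sigma(t)$. The key step is to show that $t\in R$ if and only if $\sigma(t)\in R$.

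For the forward direction, $\sigma$ is an endomorphism, so $\sigma(t)\in R$ whenever $t\in R$. For the converse, we use that $\sigma$ is a bijection of the finite set $D$. Then the coordinatewise map $t\mapsto\sigma(t)$ is injective on $D^k$. Since $\sigma(R)\subseteq R$ and $R$ is finite, injectivity gives $|\sigma(R)|=|R|$, and therefore $\sigma(R)=R$. This equality is the one already noted in the definition of automorphism. Hence $\sigma^{-1}$ also maps $R$ into $R$. Applying $\sigma^{-1}$ coordinatewise to $\sigma(t)\in R$ then yields $t\in R$. Consequently, $\alpha$ satisfies $C$ exactly when $\sigma\circ\alpha$ does.

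With this equivalence, the sums defining $\Val_I(\sigma\circ\alpha)$ and $\Val_I(\alpha)$ range over the same multiset of constraint occurrences, with the same weights, so the two values are equal. For the second claim, every occurrence is either satisfied or violated, so $\UnsatVal_I(\beta)=\sum_{C\in I}w(C)-\Val_I(\beta)$ for any assignment $\beta$. Applying this identity to $\alpha$ and to $\sigma\circ\alpha$ gives $\UnsatVal_I(\sigma\circ\alpha)=\UnsatVal_I(\alpha)$.

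There is no real obstacle in this argument. The only point that needs care is the reverse implication, which relies on $\sigma(R)=R$. That equality follows from bijectivity together with the finiteness of $R$, and it is this step that uses the hypothesis that $\sigma$ is an automorphism rather than an arbitrary endomorphism.
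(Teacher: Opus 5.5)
Your proposal is correct and follows the paper's argument: it shows constraint by constraint that $\alpha$ satisfies $C$ if and only if $\sigma\circ\alpha$ does, then sums over the instance. Your explicit justification of the reverse implication via $\sigma(R)=R$, and your derivation of the $\UnsatVal$ equality from total weight minus $\Val$, spell out steps the paper leaves implicit.
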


\begin{proof}
Consider a constraint $R(x_1,\ldots,x_k)$ of $I$. Since $\sigma$ is an automorphism, the constraint is satisfied by $\alpha$ if and only if it is satisfied by $\sigma\circ\alpha$. This holds for every constraint, which proves both equalities.
\end{proof}

\subsection{Reduction to the core}
\label{subsec:core-reduction}

The first goal is to reduce all three streaming problems from $R$ to its core $R^\star$. The reduction keeps every constraint on the same scope. It only replaces the relation $R$ by $R^\star$. The next lemma states that this preserves the streaming space exactly.

\begin{lemma}
\label{lem:core-space}
Let $R^\star=\Core(R)$. The following statements hold.
\begin{enumerate}[label=(\roman*),leftmargin=*]
\item $\SPACE(\SAT(R))=\SPACE(\SAT(R^\star))$.
\item for every $\gamma\ge 1$,
$\SPACE(\MinCSP_\gamma(R))=\SPACE(\MinCSP_\gamma(R^\star))$.
\item $\SPACE(\ExactCSP(R))=\SPACE(\ExactCSP(R^\star))$.
\end{enumerate}
The reductions preserve the number of variables and constraints, the scope of each corresponding constraint, stream order, and simplicity. If $R$ is symmetric, then $R^\star$ is symmetric as well.
\end{lemma}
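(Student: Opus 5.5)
The plan is to prove one combinatorial statement first and then obtain all three space equalities from it. Let $\sigma$ be the fixed idempotent endomorphism with image $C$, so that $R^\star=\sigma(R)=R\cap C^k$. For an $R$-instance $I$, write $I^\star$ for the $R^\star$-instance with the same multiset of scopes. The key claim is
\[
\Sat(I)\neq\varnothing \iff \Sat(I^\star)\neq\varnothing,\qquad \OPTmin(I)=\OPTmin(I^\star),\qquad \OPTmax(I)=\OPTmax(I^\star).
\]
Since $R^\star\subseteq R$, every $R^\star$-assignment $\beta:V\to C$ is also an assignment over $D$, and every constraint satisfied by $\beta$ in $I^\star$ is satisfied in $I$. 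This gives $\OPTmax(I)\ge\OPTmax(I^\star)$ and $\OPTmin(I)\le\OPTmin(I^\star)$. For the other direction, take any $\alpha:V\to D$ and set $\beta=\sigma\circ\alpha$, whose values lie in $C$. If $\alpha$ satisfies $R(x_{i_1},\dots,x_{i_k})$, then $\sigma$ applied to that tuple lies in $\sigma(R)=R^\star$, so $\beta$ satisfies the corresponding $R^\star$-constraint. Hence $\Val_{I^\star}(\beta)\ge\Val_I(\alpha)$ and $\UnsatVal_{I^\star}(\beta)\le\UnsatVal_I(\alpha)$. This gives the reverse inequalities, and the case $\UnsatVal=0$ handles satisfiability.

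Next, both directions of each space equality follow by a streaming simulation. For $\le$, the algorithm for $R$ relabels each arriving $R$-constraint as the $R^\star$-constraint on the same scope and feeds it to the $R^\star$ algorithm. Correctness follows from the key claim, and no extra memory is needed because the relabelling is stateless and $R$, $R^\star$ are fixed. For $\ge$, we relabel $R^\star$-constraints as $R$-constraints. Every $R^\star$-instance $J$ is $I^\star$ for the instance $I$ with the same scopes over $R$. Here we use that $\Core(R)$ is defined as $R^\star$ itself, so $J=I^\star$ literally.

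The outputs then transfer directly. For $\SAT$ and $\ExactCSP$ the answers coincide exactly. For $\MinCSP_\gamma$, a number in $[\OPTmin,\gamma\OPTmin]$ for one instance is valid for the other because the optima are equal. Randomized algorithms carry over, since the error bounds hold per input. The relabelling keeps the number of variables, the number of constraints, the scopes, the order, and simplicity, because simplicity depends only on scopes.

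It remains to check symmetry. If $R$ is symmetric and $\pi$ permutes the coordinates, then $\pi(R\cap C^k)=\pi(R)\cap\pi(C^k)=R\cap C^k$, so $R^\star$ is symmetric as well. The only point needing any care is the one already addressed: $\sigma(R)=R\cap C^k$, which holds because $\sigma$ fixes $C$ pointwise. This was established in the preliminaries, so no real obstacle remains.
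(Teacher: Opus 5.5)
Your proposal is correct and takes the paper's route. The paper likewise first proves the instance-level equalities of both optima (\Cref{lem:core-reduction}, using $\sigma\circ\alpha$ in one direction and $C$-valued assignments in the other). It then obtains all three space equalities by relabelling each constraint on the same scope in both directions, with symmetry following from $R^\star=R\cap C^k$.
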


To prove \Cref{lem:core-space}, we first prove that the same replacement preserves the two optimum values of every instance.

\begin{lemma}
\label{lem:core-reduction}
Let $R\subseteq D^k$, let $\sigma:D\to C$ be an idempotent endomorphism of minimum image, and let $R^\star=\Core(R)=\sigma(R)=R\cap C^k$. Given an $R$-instance $I$, let $I^\star$ be the $R^\star$-instance obtained by replacing every constraint $R(x_1,\ldots,x_k)$ by $R^\star(x_1,\ldots,x_k)$ on the same scope. Then $\OPTmax(I)=\OPTmax(I^\star)$ and $\OPTmin(I)=\OPTmin(I^\star)$. In particular, $I$ is satisfiable if and only if $I^\star$ is satisfiable.
\end{lemma}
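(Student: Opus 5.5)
The plan is to compare the two instances assignment by assignment, using $\sigma$ to move from arbitrary assignments in $D^V$ to assignments taking values in $C$. Two observations drive everything. First, since $R^\star=R\cap C^k\subseteq R$, every assignment $\alpha$ (to the variables of $I$, with values in $D$) satisfies each $R^\star$-constraint only if it satisfies the corresponding $R$-constraint; hence $\Val_{I^\star}(\alpha)\le\Val_I(\alpha)$ for every $\alpha$. Second, for any $\alpha$, the assignment $\sigma\circ\alpha$ takes values in $C$. If $\alpha$ satisfies $R(x_1,\ldots,x_k)$, then $(\sigma(\alpha(x_1)),\ldots,\sigma(\alpha(x_k)))\in\sigma(R)=R^\star$. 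So $\sigma\circ\alpha$ satisfies the corresponding $R^\star$-constraint, giving $\Val_{I^\star}(\sigma\circ\alpha)\ge\Val_I(\alpha)$.

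Combining these, $\OPTmax(I^\star)\ge\max_\alpha\Val_{I^\star}(\sigma\circ\alpha)\ge\OPTmax(I)$, while the first observation gives $\OPTmax(I^\star)\le\OPTmax(I)$. The minimization statement needs no separate argument. Both instances have the same multiset of scopes with the same weights, so $\UnsatVal=w(I)-\Val$ with the same total weight $w(I)$ on both sides. Therefore $\OPTmin(I)=w(I)-\OPTmax(I)=w(I^\star)-\OPTmax(I^\star)=\OPTmin(I^\star)$. Satisfiability is the special case $\OPTmin=0$.

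One point of care is which assignments are allowed for $I^\star$. If $I^\star$ is viewed as an instance over the domain $C$, we only ever evaluate it at $C$-valued assignments, and $\sigma\circ\alpha$ is one. The upper bound $\Val_{I^\star}(\beta)\le\Val_I(\beta)$ for $C$-valued $\beta$ still holds, since such $\beta$ are also $D$-valued. So the argument works whether the domain of $I^\star$ is taken to be $C$ or $D$. I expect no real obstacle. The only substantive input is the identity $\sigma(R)=R\cap C^k$, which was already justified in the preliminaries through idempotence. Minimality of the image is not even needed for this lemma; any idempotent endomorphism would do.
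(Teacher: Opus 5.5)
Your proposal is correct and follows essentially the same route as the paper. You push an arbitrary assignment through $\sigma$ to get $\OPTmax(I^\star)\ge\OPTmax(I)$, and use $R^\star\subseteq R$ (the paper equivalently restricts to $C$-valued assignments) for the reverse inequality; the only cosmetic difference is that you derive the $\OPTmin$ equality from the $\OPTmax$ one via $\UnsatVal=w(I)-\Val$, whereas the paper handles both objectives in parallel.
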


\begin{proof}
Let $\alpha$ be any assignment for $I$. If $R(x_1,\ldots,x_k)$ is satisfied by $\alpha$, then $(\alpha(x_1),\ldots,\alpha(x_k))\in R$, and applying the endomorphism $\sigma$ gives $(\sigma(\alpha(x_1)),\ldots,\sigma(\alpha(x_k)))\in\sigma(R)=R^\star$. Hence $\Val_{I^\star}(\sigma\circ\alpha)\ge\Val_I(\alpha)$ and $\UnsatVal_{I^\star}(\sigma\circ\alpha)\le\UnsatVal_I(\alpha)$. Taking an assignment that is optimal for the relevant objective gives $\OPTmax(I^\star)\ge\OPTmax(I)$ and $\OPTmin(I^\star)\le\OPTmin(I)$.

For the reverse direction, consider any assignment $\beta$ taking values in $C$. Since $R^\star=R\cap C^k$, every constraint is satisfied by $\beta$ as an $R$-constraint if and only if it is satisfied by $\beta$ as an $R^\star$-constraint. In particular, an assignment attaining either optimum for $I^\star$ is also an assignment for $I$ with the same value. Therefore $\OPTmax(I)\ge\OPTmax(I^\star)$ and $\OPTmin(I)\le\OPTmin(I^\star)$. Combining the two directions proves both equalities, and the satisfiability claim follows from $\OPTmin=0$.
\end{proof}

We can now prove the streaming reduction. The point is that the instance level argument did not change any scope.

\begin{proof}[Proof of \Cref{lem:core-space}.]
Given a stream of $R$-constraints, replace each arriving constraint by the $R^\star$-constraint on the same scope and run an algorithm for $R^\star$. By \Cref{lem:core-reduction}, the final instance has the same two optimum values and is satisfiable exactly when the original instance is satisfiable. Conversely, replace every $R^\star$-constraint by the $R$-constraint on the same scope and use the same lemma again. Both transformations use only constant extra space and preserve the number of variables and constraints, the stream order, and simplicity. Finally, if $R$ is symmetric, then $R^\star=R\cap C^k$ is symmetric as well.
\end{proof}

Thus, from this point on, we may reduce to the core before proving any upper or lower bound.

\subsection{Hard pinnings in a core relation}
\label{subsec:hard-pinning}

For the satisfiability lower bound, Bob will want to force the variables to a chosen witness assignment $s$. Ordinary $R$-constraints may not be able to distinguish $s$ from $\sigma\circ s$ for an automorphism $\sigma$, but this ambiguity is harmless by \Cref{obs:automorphism-value}. The next lemma shows that this is the only ambiguity we need to keep.

\begin{lemma}[Hard pinning]
\label{lem:hard-pinning}
Let $R\subseteq D^k$ be a core relation, let $X$ be a set of variables, and let $s:X\to D$. There is an $R$-instance $P_s$ on $X$ and $O_R(1)$ additional variables such that
\[
\{\alpha|_X:\alpha\in\Sat(P_s)\}=\{\sigma\circ s:\sigma\in\Aut(R)\}.
\]
Moreover, $P_s$ has $O_R(|X|+1)$ constraints.
\end{lemma}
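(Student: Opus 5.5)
We use the standard "canonical instance" (diagram) construction from the algebraic theory of CSPs. Introduce one fresh variable $y_a$ for every value $a\in D$; this gives $|D|=O_R(1)$ additional variables. For every tuple $t=(t_1,\ldots,t_k)\in R$, add the constraint $R(y_{t_1},\ldots,y_{t_k})$. Call this constant-size instance $Q$; it has $|R|\le|D|^k=O_R(1)$ constraints.

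The key property of $Q$ is the following. An assignment $\beta$ to $\{y_a\}_{a\in D}$ satisfies $Q$ if and only if the map $h_\beta:a\mapsto\beta(y_a)$ satisfies $h_\beta(R)\subseteq R$, i.e.\ $h_\beta$ is an endomorphism. Since $R$ is a core relation, every endomorphism is an automorphism. Hence the satisfying assignments of $Q$ correspond exactly to $\Aut(R)$: $\beta=\sigma\circ\iota$ with $\iota(y_a)=a$ and $\sigma\in\Aut(R)$.

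Next, each $x\in X$ must be tied to $y_{s(x)}$ so that $x$ is forced to equal $y_{s(x)}$. For this, use the "same construction with one coordinate replaced". For every tuple $t\in R$ and every coordinate $i$ with $t_i=s(x)$, add the constraint $R(y_{t_1},\ldots,y_{t_{i-1}},x,y_{t_{i+1}},\ldots,y_{t_k})$. This adds $O_R(1)$ constraints per $x$, so $P_s$ has $O_R(|X|+1)$ constraints in total.

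Soundness of the intended assignments is immediate. Setting $y_a=\sigma(a)$ and $x=\sigma(s(x))$ turns every added constraint into $\sigma(t)\in R$.

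Conversely, take a satisfying $\alpha$. Its restriction to the $y$'s gives an automorphism $\sigma$. Define $h:D\to D$ by $h(a)=\sigma(a)$ for $a\ne s(x)$ and $h(s(x))=\alpha(x)$. This is not yet quite right: a tuple in which $s(x)$ appears in several coordinates only had one coordinate replaced. To handle it, replace the $x$-constraints by the following: for every $t\in R$ and every nonempty set $S\subseteq\{i:t_i=s(x)\}$, put $x$ in all coordinates of $S$ and $y_{t_j}$ elsewhere. This is still $O_R(1)$ constraints per $x$. With this modification, the satisfied constraints (together with those of $Q$, which cover $S=\emptyset$) say exactly that $h(R)\subseteq R$. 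So $h$ is an endomorphism, hence an automorphism, hence injective. Since $h$ agrees with the bijection $\sigma$ on $D\setminus\{s(x)\}$, injectivity forces $h(s(x))=\sigma(s(x))$, i.e.\ $\alpha(x)=\sigma(s(x))$. The automorphism $\sigma$ is determined by the $y$'s alone, so it is common to all $x\in X$. This gives $\alpha|_X=\sigma\circ s$.

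The main obstacle is exactly this converse. It needs (a) the multi-coordinate substitution so that $h$ is genuinely an endomorphism, and (b) the pigeonhole/injectivity argument that pins $\alpha(x)$ using only that the values $\sigma(D\setminus\{s(x)\})$ are already taken.
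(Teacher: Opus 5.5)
Your proof is correct. Your first construction, with $x$ substituted into a single coordinate of each tuple, is exactly the paper's gadget $P_s$; the two proofs differ only in how the converse is argued.

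Your remark that the single-substitution version is ``not yet quite right'' is too pessimistic. The paper keeps single substitutions and closes the gap with \Cref{lem:core-separates-values}, which is a hybrid argument:
\begin{itemize}
\item Suppose $b=\sigma^{-1}(\alpha(x))\ne a=s(x)$. The non-injective map sending $a\mapsto b$ and fixing everything else is not an endomorphism, so some $r\in R$ leaves $R$ under it.
\item Change the occurrences of $a$ in $r$ to $b$ one at a time. This yields a coordinate $t$ with $r_t=a$ and $r^{t\leftarrow b}\notin R$.
\item The single-substitution constraint for $(r,t)$ then receives the tuple $\sigma(r^{t\leftarrow b})$, which is not in $R$.
\end{itemize}

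Your route instead substitutes $x$ into all occurrences of $s(x)$ at once. Only the full set $S=\{i:t_i=s(x)\}$ is actually needed, not every nonempty subset. With this, the modified map $h$ is literally an endomorphism, and injectivity finishes the argument in one line. This is self-contained and avoids the auxiliary lemma. The cost is constraints with repeated occurrences of $x$ and a constant factor more of them, still $O_R(1)$ per variable.

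The paper's single-occurrence form buys reusability. In \Cref{lem:exact-pinning}, every violated attached constraint must contain $x$ once together with $k-1$ distinct block variables, giving $\eta_R m^{k-1}$ violated copies that outweigh the degree bound $N^{d-1}$. A full substitution into a tuple with $j$ occurrences of $s(x)$ would leave only about $m^{k-j}$ copies, which need not suffice. This is why the same separation lemma is invoked there verbatim.
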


The proof uses the following property of cores. Changing one domain value to another must make some tuple leave the relation.

\begin{lemma}
\label{lem:core-separates-values}
Let $R\subseteq D^k$ be a core relation. For every distinct $a,b\in D$, there is a tuple $r=(r_1,\ldots,r_k)\in R$ and a coordinate $t\in[k]$ such that $r_t=a$ and $r^{t\leftarrow b}\notin R$.
\end{lemma}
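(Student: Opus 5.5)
We argue by contradiction. Fix distinct $a,b\in D$ and suppose that no tuple $r\in R$ and coordinate $t$ with $r_t=a$ satisfy $r^{t\leftarrow b}\notin R$. In other words, for every $r\in R$ and every coordinate $t$ with $r_t=a$, we have $r^{t\leftarrow b}\in R$. We will show that the map $\tau:D\to D$ with $\tau(a)=b$ and $\tau(c)=c$ for $c\neq a$ is an endomorphism of $R$. Since $a\neq b$, the value $a$ is not in the image of $\tau$, so $\tau$ is not injective and therefore not a bijection. This contradicts the assumption that every endomorphism of the core relation $R$ is an automorphism.

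To verify $\tau(R)\subseteq R$, take $r\in R$ and let $t_1,\ldots,t_j$ be the coordinates with $r_{t_i}=a$. Then $\tau(r)$ is obtained from $r$ by replacing these $a$ entries by $b$ one at a time. We induct on the number of replacements made. Let $r^{(0)}=r$ and let $r^{(i)}=(r^{(i-1)})^{t_i\leftarrow b}$. Suppose $r^{(i-1)}\in R$. Its coordinate $t_i$ still equals $a$, because earlier steps only changed coordinates $t_1,\ldots,t_{i-1}$. The assumption applied to the tuple $r^{(i-1)}$ and coordinate $t_i$ then gives $r^{(i)}\in R$. After $j$ steps we reach $r^{(j)}=\tau(r)\in R$.

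The only point needing care is that the assumption must be applied to the intermediate tuples $r^{(i)}$, not just to the original $r$. This is valid because the assumption quantifies over all tuples of $R$, and each intermediate tuple has already been shown to lie in $R$. The degenerate case where $r$ contains no entry equal to $a$ is immediate, since then $\tau(r)=r$. With no real obstacle remaining, the lemma follows.
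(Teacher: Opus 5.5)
Your proposal is correct and is essentially the paper's argument: both use the non-injective map sending $a$ to $b$ and fixing everything else, and both replace the occurrences of $a$ one at a time. The only difference is presentation. You phrase it as an induction showing this map would be an endomorphism, while the paper starts from a tuple whose image leaves $R$ and takes the first replacement at which the tuple falls out of $R$.
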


\begin{proof}
Define $\sigma:D\to D$ by $\sigma(a)=b$ and $\sigma(c)=c$ for $c\ne a$. The map $\sigma$ is not injective, so it cannot be an endomorphism of the core relation $R$. Choose $r\in R$ such that $\sigma(r)\notin R$, and replace the occurrences of $a$ in $r$ by $b$ one at a time. The initial tuple is in $R$ and the final tuple is not, so at some first replacement we obtain a tuple outside $R$. Immediately before that replacement we have a tuple $r'\in R$ with $r'_t=a$, while $(r')^{t\leftarrow b}\notin R$. This is the required tuple.
\end{proof}

We now build the hard-pinning instance. Intuitively, we introduce additional variables to represent the values of $D$, and a group of constraints forces their assigned values to form one automorphism of $R$.

\begin{proof}[Proof of \Cref{lem:hard-pinning}.]
The first step is to introduce extra variables that represent the values in the domain, up to one common automorphism. For $\CUT$, these are two adjacent variables, so they must receive opposite values. More formally, for every $a\in D$, introduce an extra variable $y_a$. For every tuple $r=(r_1,\ldots,r_k)\in R$, add the constraint
$R(y_{r_1},\ldots,y_{r_k}).$
Next, we add constraints that pin the variables in $X$ to $s$ up to the same automorphism. For every $x\in X$, every $r\in R$, and every $t\in[k]$ with $r_t=s(x)$, also add $R(y_{r_1},\ldots,y_{r_{t-1}},x,y_{r_{t+1}},\ldots,y_{r_k})$.
Call the resulting instance $P_s$. It uses $|D|=O_R(1)$ extra variables and $O_R(|X|+1)$ constraints.

We now prove that the added constraints give the required hard pinning. Let $\alpha$ satisfy $P_s$ and define $\sigma:D\to D$ by $\sigma(a)=\alpha(y_a)$. The first family of constraints shows that $\sigma(r)\in R$ for every $r\in R$, so $\sigma$ is an endomorphism of $R$. Since $R$ is a core, $\sigma$ is an automorphism. Fix $x\in X$, put $a=s(x)$, and let $b=\sigma^{-1}(\alpha(x))$. Suppose $b\ne a$. By \Cref{lem:core-separates-values}, there are $r\in R$ and $t\in[k]$ such that $r_t=a$ and $r^{t\leftarrow b}\notin R$. The corresponding constraint from the second family is assigned the tuple $\sigma(r^{t\leftarrow b})$, which is not in $R$ because $\sigma$ is an automorphism. This contradicts that $\alpha$ satisfies $P_s$. Hence $\alpha(x)=\sigma(s(x))$ for every $x\in X$.

Conversely, fix any $\sigma\in\Aut(R)$ and assign $y_a\mapsto\sigma(a)$ and $x\mapsto\sigma(s(x))$. Every constraint in both families is then assigned a tuple of the form $\sigma(r)$ for some $r\in R$, so all constraints are satisfied. The restrictions of satisfying assignments to $X$ are therefore exactly the assignments $\sigma\circ s$ with $\sigma\in\Aut(R)$.
\end{proof}

The gadget therefore behaves like a genuine pinning whenever the objective is unchanged by a common automorphism. For exact optimization we need more because the goal is not merely to satisfy the whole instance. An assignment may gain constraints in the original instance by violating some constraints in the pinning gadget. We therefore need to ensure that deviating from the intended assignment loses more gadget constraints than it can gain in the original instance. We do this next.

\subsection{Pinning for exact optimization}
\label{subsec:exact-pinning}

We now strengthen the construction for exact optimization. The input instance $P$ may reward an assignment for moving away from $s$, so a merely satisfiable pinning gadget is not enough. Intuitively, we replace each domain value by a block of many variables. The constraints among the blocks create a large loss unless their most frequent values form an automorphism, and the constraints joining $P$ to the blocks create a large loss whenever an original variable disagrees with $s$ under that automorphism.

\begin{lemma}
\label{lem:exact-pinning}
Let $R\subseteq D^k$ be a core relation with $k\ge 2$, and fix $1\le d\le k$. Let $P$ be a simple $R$-instance on $O_R(N)$ variables, with at most $N^d$ constraints and maximum variable degree at most $N^{d-1}$. For every assignment $s$ to the variables of $P$, there is an unweighted $R$-instance $H_s$, using the variables of $P$ and $O_R(N)$ additional variables, such that $P\cup H_s$ is simple and
\[
\OPTmax(P\cup H_s)=|H_s|+\Val_P(s).
\]
If $R$ is symmetric and $P$ has no repeated unordered scope, then $P\cup H_s$ has no repeated unordered scope.
Moreover, $H_s$ can be constructed from $R$, the variable set of $P$, the assignment $s$, and the public parameters $N$ and $d$. It does not depend on the constraints of $P$.
\end{lemma}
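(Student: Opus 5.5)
We generalize the strong-pinning construction given for $\CUT$ in the technical overview, using the ingredients of the proof of \Cref{lem:hard-pinning}. For every value $a\in D$ we create a block $Y_a$ of $L=c_R N$ fresh variables, with $c_R$ a large constant depending only on $R$. For every tuple $r\in R$, we add constraints $R(y_1,\ldots,y_k)$ with $y_i\in Y_{r_i}$, taking all such choices in which the $k$ chosen variables are distinct. Fixing a canonical choice of coordinate order (and using one representative per multiset when $R$ is symmetric) makes these scopes distinct, so they are simple. When $r$ has repeated values, we need $|Y_a|\ge k$, which holds.

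We then add attaching constraints. For each $x\in X$, each $r\in R$ and each $t$ with $r_t=s(x)$, we add $M=c'_R N^{d-1}$ constraints $R(y_1,\ldots,x,\ldots,y_k)$ with $x$ in position $t$. Here $y_i\in Y_{r_i}$ are chosen so that different $x$'s and different copies use disjoint tuples of block variables. Such disjoint tuples exist because there are about $L^{k-1}\gg N\cdot M$ of them; this uses $k\ge2$ and $d\le k$, so $N^{d}\ll L^{k-1}$ after choosing $c_R$ large (for $d=k$ we need $L^{k-1}\ge N^{k}$, so we may instead take $L=c_RN$ and enough copies using different vertices). Every scope contains block variables, and attaching scopes contain exactly one original variable, so $P\cup H_s$ stays simple, also for unordered scopes. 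The construction clearly uses only $X$, $s$, $N$, $d$.

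The lower bound $\OPTmax\ge|H_s|+\Val_P(s)$ comes from the intended assignment ($Y_a\mapsto a$, $x\mapsto s(x)$), which satisfies all of $H_s$ by the same argument as in \Cref{lem:hard-pinning}. For the upper bound, take an optimal $\alpha$. By \Cref{obs:automorphism-value} it suffices to show that $\alpha$ is, up to an automorphism, the intended assignment on the blocks and on $X$; then it loses nothing in $H_s$ and gains exactly $\Val_P(s)$. We argue in three steps, mirroring the $\CUT$ example.

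\emph{Step 1.} Let $\sigma(a)$ be a most frequent value in $Y_a$. If $\sigma$ is not an endomorphism, some $r\in R$ has $\sigma(r)\notin R$. Then at least $(L/|D|)^k-O(L^{k-1})$ block constraints for $r$ are violated, which is $\Omega(L^k)$. Resetting all blocks to the identity and all of $X$ to $s$ violates at most $|P|\le N^d$ constraints and loses no gadget constraint. Since $L^k\gg N^d$, this contradicts optimality. Hence $\sigma$ is an endomorphism, and since $R$ is a core it is an automorphism; composing with $\sigma^{-1}$, we may assume $\sigma=\mathrm{id}$.

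\emph{Step 2.} Let $m$ be the number of block variables not equal to their block's label. Each such variable $v\in Y_a$ lies in $\Omega(L^{k-1})$ block constraints whose other entries take majority values. By \Cref{lem:core-separates-values} applied to $a$ and $\alpha(v)$, a constant fraction of these (indexed by the separating tuple, with $v$ in coordinate $t$) are violated. Counting with multiplicity $\le k$ and subtracting overlaps among wrong variables, which we handle by pairing wrong variables into the $-O(m^2L^{k-2})$ term and using $m\le L$ with a constant-fraction bound as in the $ac$ computation for $\CUT$, the total violated is $\Omega(mL^{k-1})$. Fixing these $m$ variables costs at most the attaching constraints touching them. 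Each block variable lies in at most $O(N M/L)=O(N^{d-1})$ attaching constraints if we spread them evenly, and $L^{k-1}\gg N^{d-1}$. Hence $m=0$.

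\emph{Step 3.} With the blocks now exact, an $x$ with $\alpha(x)=b\ne s(x)$ violates at least $M$ attaching constraints, by \Cref{lem:core-separates-values} with $a=s(x)$. Changing $x$ can gain at most $\deg_P(x)\le N^{d-1}<M$ constraints of $P$. Hence $\alpha|_X=s$.

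I expect Step 2 to be the main obstacle. The requirements are to handle general arity and repeated values in $r$, to control the double counting of constraints containing several wrong variables, and to ensure the attaching constraints are evenly distributed over block variables. That last requirement is what forces the careful choice of disjoint, balanced attaching tuples in the construction.
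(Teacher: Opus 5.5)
Your Steps~1 and~3 are sound and use the majority map and \Cref{lem:core-separates-values} the same way the paper does. However, the proposal has a genuine gap at Step~2, and the construction it relies on breaks in the central case $d=k$ (e.g.\ $\CUT$).

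\textbf{The construction at $d=k$.} You require the attaching tuples of different $x$'s to be distinct. That needs about $|X|\cdot M=\Theta(N^{d})$ tuples, but only $O(L^{k-1})=O(N^{k-1})$ exist when $L=c_RN$, and your parenthetical remark does not fix this. Distinctness across different $x$ is not needed for simplicity, since the scope already contains $x$. You do use it, though, to get the evenly spread attachment degree $O(N^{d-1})$ in Step~2, so that bound is unsupported exactly when $d=k$.

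\textbf{The counting in Step~2.} This is only sketched. Subtracting an $O(m^2L^{k-2})$ overlap term fails when $m=\Theta(L)$, because it is then of the same order as the main term $mL^{k-1}$, whose constant is tiny.

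\textbf{Step~2 cannot be dropped in your design.} With only $M$ hand-picked tuples per $(x,r,t)$, the count of $M$ violated constraints in Step~3 depends on those particular block variables being correct.

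\textbf{The missing idea.} You never need to show the blocks are exactly correct.
\begin{itemize}
\item The paper attaches each $x$ to \emph{every} tuple of pairwise distinct block variables, for each $r\in R$ and $t$ with $r_t=s(x)$. So $x$ has $\Theta_R(L^{k-1})$ attached constraints.
\item Take an arbitrary $\alpha$ with majority map $\sigma$, which is an automorphism by Step~1, and let $W=\{x\in X:\alpha(x)\neq\sigma(s(x))\}$.
\item Each $x\in W$ has at least $\eta_RL^{k-1}$ violated attached constraints. To see this, take the separating $(r,t)$ and fill the other $k-1$ coordinates only with block variables that take their block's majority value. At least $L/|D|$ such variables exist in each block, whatever $\alpha$ does elsewhere.
\item These sets of constraints are disjoint for different $x$. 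Also, $\alpha|_X$ differs from $\sigma\circ s$ only on $W$. With $\Delta\le N^{d-1}$ the maximum degree of $P$, \Cref{obs:automorphism-value} gives directly
\[
\Val_{P\cup H_s}(\alpha)\le|H_s|+\Val_P(s)+(\Delta-\eta_RL^{k-1})|W|\le|H_s|+\Val_P(s).
\]
\end{itemize}
This needs no balance or disjointness condition and no analysis of wrong block variables.

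The same majority-filling trick would also repair your Step~2 if you wanted to keep it. Each counted block constraint then contains a unique wrong variable, so there is no overlap to subtract. You would still need the complete (or a carefully balanced) attaching family to bound the attachment degree of each block variable below $\eta_RL^{k-1}$.
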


\begin{proof}[Proof of \Cref{lem:exact-pinning}.]
Let $X$ be the variable set of $P$, let $B$ be the number of constraints in $P$, and let $\Delta$ be its maximum variable degree. Thus $B\le N^d$ and $\Delta\le N^{d-1}$. We will choose a constant $C_R$ below and set $m=C_RN$.

The construction follows the same idea as \Cref{lem:hard-pinning}. We first create many variables that represent each value of the domain up to one common automorphism. We then connect the variables of $P$ to these new variables. More formally, for every $a\in D$, create a \emph{block} $Z_a$ of $m$ new variables. For every tuple $r=(r_1,\ldots,r_k)\in R$ and every choice of pairwise distinct variables $z_i\in Z_{r_i}$, add the constraint $R(z_1,\ldots,z_k)$. We call these the \emph{block constraints}. Next fix $x\in X$ and put $a=s(x)$. For every $r\in R$, every $t\in[k]$ with $r_t=a$, and every choice of pairwise distinct variables $z_j\in Z_{r_j}$ for $j\ne t$, add the constraint $R(z_1,\ldots,z_{t-1},x,z_{t+1},\ldots,z_k)$. We call these the \emph{attached constraints for $x$}. If the same ordered scope arises more than once, keep only one constraint. When $R$ is symmetric, keep only one constraint for each unordered scope. Let $H_s$ be the union of the block constraints and all the attached constraints.

Firstly note that the instance $P\cup H_s$ is simple. This is because the constraints of $P$ use only variables in $X$, block constraints use only variables from the blocks, and an attached constraint for $x$ uses $x$ together with $k-1$ variables from the blocks. The removal of duplicate scopes makes each family simple, and attached constraints for two different variables of $X$ cannot have the same scope. The number of new variables is $|D|m=O_R(N)$. This construction uses the variable set of $P$ but not its constraints, as required in the final part of the lemma.

We next show the counting bound that we need. Let $q=|D|$. Under any assignment $\alpha$, choose a most frequent value in each block $Z_a$ and denote it by $\sigma(a)$. This value appears at least $m/q$ times in $Z_a$. Consider any sequence of $\ell\le k$ blocks and choose one variable from each block in order. At every step, at most $k-1$ variables are forbidden by the earlier choices. Hence, when $m\ge2qk$, there are at least $m/(2q)$ choices at every step and at least $(m/(2q))^\ell$ pairwise distinct choices in total. At most $k!$ ordered choices give the same unordered scope. We may therefore fix a constant $\eta_R>0$ such that there are at least $\eta_Rm^\ell$ distinct ordered scopes, and also at least $\eta_Rm^\ell$ distinct unordered scopes when $R$ is symmetric. For a fixed tuple $r$, different ordered choices give different ordered scopes of block constraints. The same is true for the attached constraints when $x$, $r$, and $t$ are fixed. Thus the bound continues to hold after we remove repeated scopes.

Choose $C_R$ large enough that $C_R\ge2qk$, $\eta_RC_R^k>1$, and $\eta_RC_R^{k-1}>1$. Since $m=C_RN$, $B\le N^d$, $\Delta\le N^{d-1}$, and $d\le k$, we then have $\eta_Rm^k>B$ and $\eta_Rm^{k-1}>\Delta$ for every $N\ge1$.

We can now compare every assignment with the intended one. Assigning $x\mapsto s(x)$ for $x\in X$ and assigning every variable in $Z_a$ the value $a$ satisfies all constraints of $H_s$. Therefore $\OPTmax(P\cup H_s)\ge |H_s|+\Val_P(s)$.

For the reverse inequality, fix an arbitrary assignment $\alpha$ and define $\sigma:D\to D$ from the most frequent values of the blocks as above. Suppose first that $\sigma$ is not an endomorphism of $R$. Choose $r\in R$ with $\sigma(r)\notin R$. At least $\eta_Rm^k>B$ block constraints are assigned the tuple $\sigma(r)$, so all of them are violated. Even if $\alpha$ satisfies every constraint of $P$, it has value less than $|H_s|$, while the intended assignment has value at least $|H_s|$. Thus any assignment that attains the optimum must have $\sigma$ as an endomorphism and, since $R$ is a core, as an automorphism.

Thus, we only need to consider the case in which $\sigma$ is an automorphism. Let $W=\{x\in X:\alpha(x)\ne\sigma(s(x))\}$. Fix $x\in W$, put $a=s(x)$, and put $b=\sigma^{-1}(\alpha(x))$, so $b\ne a$. By \Cref{lem:core-separates-values}, there are $r\in R$ and $t\in[k]$ such that $r_t=a$ and $r^{t\leftarrow b}\notin R$. In every other coordinate $j$, choose a variable from $Z_{r_j}$ assigned its most frequent value $\sigma(r_j)$. There are at least $\eta_Rm^{k-1}$ such attached constraints for $x$, and each is assigned the tuple $\sigma(r^{t\leftarrow b})$. All of them are violated because $\sigma$ is an automorphism. The constraints counted for different variables $x$ are disjoint, so $\alpha$ violates at least $\eta_Rm^{k-1}|W|$ attached constraints.

The assignment $\sigma\circ s$ has value $\Val_P(s)$ on $P$ by \Cref{obs:automorphism-value}. Changing the variables in $W$ can affect at most $\Delta|W|$ constraints of $P$, and hence $\Val_P(\alpha|_X)\le\Val_P(s)+\Delta|W|$. Since $\alpha$ violates at least $\eta_Rm^{k-1}|W|$ attached constraints, we have $\Val_{P\cup H_s}(\alpha)\le |H_s|+\Val_P(s)+(\Delta-\eta_Rm^{k-1})|W|\le |H_s|+\Val_P(s)$. Together with the intended assignment, this proves the required equality.
\end{proof}

The exact-pinning gadget still leaves one common automorphism, but that is fine since the objective of $P$ is invariant under it, while every other disagreement with $s$ loses more gadget constraints than it can gain in $P$.

\section{Satisfiability}
\label{sec:sat}

We now characterize the one-pass streaming space needed to decide satisfiability. We will prove that after reducing to the core, the right parameter is non-redundancy. The upper bound stores a small instance with the same satisfying assignments as the constraints seen so far. For the lower bound, a maximum nonredundant instance gives the bits of an $\Index$ instance, and hard pinning lets Bob test the bit corresponding to his constraint.

\begin{theorem}[Satisfiability]
\label{thm:sat-main}
Let $R\subseteq D^k$ be a fixed non-trivial relation and let $R^\star=\Core(R)$. The following statements hold.
\begin{enumerate}[label=(\roman*),leftmargin=*]
\item there is a deterministic one-pass streaming algorithm for $\SAT(R)$ using $O_R(\NRD_n(R^\star)\log n)$ bits of space.
\item every randomized bounded-error one-pass streaming algorithm for $\SAT(R)$ uses $\Omega_R(\NRD_n(R^\star))$ bits of space.
\end{enumerate}
Consequently,
\[
\Omega_R\!\left(\NRD_n(\Core(R))\right)
\le \SPACE(\SAT(R))
\le O_R\!\left(\NRD_n(\Core(R))\log n\right).
\]
\end{theorem}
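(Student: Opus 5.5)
By \Cref{lem:core-space}(i), $\SPACE(\SAT(R))=\SPACE(\SAT(R^\star))$, and the reduction preserves the number of variables. So we may assume throughout that $R=R^\star$ is a core relation and prove both bounds in terms of $\NRD_n(R)$. One check: $R^\star$ is non-trivial whenever $R$ is. It is nonempty since $\sigma(R)\neq\varnothing$. It is a proper subset of $C^k$ because otherwise every constant tuple $(c,\ldots,c)$ with $c\in C$ lies in $R$, so the constant map to $c$ is an endomorphism. Minimality of the image then forces $|C|=1$ and $R^\star=C^k$, and we must separately handle the degenerate case where $R$ contains a constant tuple. In that case $\SAT(R)$ is trivial, since every instance is satisfiable, while $\NRD_n(R^\star)=0$. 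The statement should be read with that convention, and I would remark on it.

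For the upper bound (i), I would maintain a nonredundant subinstance $S$ of the constraints seen so far such that $\Sat(S)$ equals the satisfying set of the full prefix. When a constraint $c$ arrives, test whether $c$ is implied by $S$, i.e.\ every $\alpha\in\Sat(S)$ satisfies $c$; time is unrestricted, so this is a brute-force check. If $c$ is implied, discard it. Otherwise add it, then repeatedly delete any constraint of $S$ that became redundant, which keeps the solution set unchanged. After this cleanup $S$ is nonredundant on at most $n$ variables, so $|S|\le\NRD_n(R)$. Each constraint is stored as $k$ indices of $O(\log n)$ bits, giving $O_R(\NRD_n(R)\log n)$ space. At the end, answer whether $\Sat(S)\neq\varnothing$. (Alternatively, this is exactly the algorithm of Sharma and Velusamy applied after the on-the-fly core replacement.)

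For the lower bound (ii), fix a maximum nonredundant $R$-instance $I=\{C_1,\ldots,C_M\}$ on $n$ variables $X$ with $M=\NRD_n(R)$ and witnesses $\alpha_j$. Given $b\in\{0,1\}^M$, Alice streams $\{C_i:b_i=1\}$ and sends the memory state. Given $j$, Bob appends the gadget $P_{\alpha_j}$ of \Cref{lem:hard-pinning}. If $b_j=0$, then $\alpha_j$ satisfies all of Alice's constraints, and extending it by $y_a\mapsto a$ satisfies $P_{\alpha_j}$. If $b_j=1$, every satisfying assignment restricts on $X$ to $\sigma\circ\alpha_j$ with $\sigma\in\Aut(R)$, and this still violates $C_j$ by \Cref{obs:automorphism-value}. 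So the output reveals $b_j$, and \Cref{lem:index} gives $\Omega(M)$ communication.

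The main obstacle is bookkeeping the variable count. The combined instance has $n+|D|$ variables, so the bound is really $\Omega(\NRD_{n-|D|}(R))$. To fix this I would build $I$ on $n-|D|$ variables and show $\NRD_{n-O(1)}(R)=\Omega_R(\NRD_n(R))$. One route: delete the $|D|$ variables of highest degree from a maximum instance, and argue that each deleted variable's constraints can be discarded while the survivors stay nonredundant, since a subinstance of a nonredundant instance is nonredundant. This loses at most a constant fraction, because each variable has degree at most a $k|D|/n$-type share on average… but the high-degree ones could carry many constraints. So I would instead choose the $|D|$ variables of \emph{lowest} degree. These lie in at most $|D|k\cdot|I|/n$-proportion of the constraints, i.e.\ an $O(1/n)$ fraction. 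Combined with \Cref{obs:nrd-linear} for the positivity, this gives the $\Omega_R(\NRD_n(R^\star))$ bound.
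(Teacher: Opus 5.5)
Your proposal is correct and follows the paper's route. You reduce to the core via \Cref{lem:core-space}, use the Sharma--Velusamy nonredundant-subinstance algorithm for the upper bound, and reduce from $\Index$ with the gadget of \Cref{lem:hard-pinning} for the lower bound; your deletion of the $|D|$ lowest-degree variables is just a deterministic version of the paper's random-subset argument that $\NRD_{n-O(1)}(R)=\Omega_R(\NRD_n(R))$.

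The one bookkeeping item to add is that the brute-force redundancy and satisfiability checks need $O_R(n)$ working bits to hold an assignment. This is absorbed because $\NRD_n(R^\star)=\Omega_R(n)$ by \Cref{obs:nrd-linear} whenever $R^\star$ is non-trivial, which is exactly your case where $R$ contains no constant tuple.
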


We prove the two bounds for a core relation and then use \Cref{lem:core-space} to obtain the theorem for an arbitrary relation.

The upper bound is the following known consequence of non-redundancy \cite[Theorem~1(a)]{sharma2026characterizingstreamingdecidabilitycsps}.

\begin{lemma}[Satisfiability upper bound]\cite[Theorem~1(a)]{sharma2026characterizingstreamingdecidabilitycsps}
\label{lem:sat-upper}
If $R$ is a core relation, then $\SAT(R)$ has a deterministic one-pass streaming algorithm using $O_R(\NRD_n(R)\log n)$ bits of space.
\end{lemma}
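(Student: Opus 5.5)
The plan is to maintain, throughout the stream, a nonredundant subinstance $S$ of the constraints read so far that has exactly the same set of satisfying assignments as the full prefix. This is the algorithm of Sharma and Velusamy. The invariant is $\Sat(S)=\Sat(I_{\le t})$, where $I_{\le t}$ is the prefix read up to time $t$.

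\begin{enumerate}
\item Start with $S=\varnothing$.
\item When a constraint $C$ arrives, test by brute force over all of $D^n$ whether some assignment in $\Sat(S)$ violates $C$. Time is free; only space is charged.
\item If no such assignment exists, then $C$ is implied by $S$. Discard it: $\Sat(S\cup\{C\})=\Sat(S)$, so the invariant holds.
\item Otherwise, insert $C$ into $S$. The invariant again holds trivially, since $\Sat(S\cup\{C\})=\Sat(S)\cap\Sat(C)$.
\end{enumerate}

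The crux is to show that $S$ stays nonredundant, or at least stays small. A constraint $C_i\in S$ that was inserted earlier had, at its insertion time, a witness in $\Sat(S_{<i})$ violating $C_i$. Later insertions might make $C_i$ redundant, however, so $S$ need not stay nonredundant literally.

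The fix I would use is to order $S$ as $C_1,\dots,C_\ell$ by insertion time. Each $C_i$ is then nonredundant relative to its predecessors: there is an assignment satisfying $C_1,\dots,C_{i-1}$ and violating $C_i$. This is a chain of strict inclusions
\[
\Sat(S_0)\supsetneq\Sat(S_1)\supsetneq\cdots\supsetneq\Sat(S_\ell).
\]
Its length is bounded by the chain-length parameter, which is known to be at most a constant multiple of $\NRD_n$ for fixed $R$. This is where I would invoke the cited result \cite[Theorem~1(a)]{sharma2026characterizingstreamingdecidabilitycsps} (and the chain/NRD equivalence of \cite{bessiere2020chain}) rather than reprove it.

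I expect this chain-to-nonredundancy comparison to be the main obstacle if it had to be done from scratch. Since it is precisely the content of the cited upper bound, I will take it as given. If a self-contained version were needed, I would first try to extract a nonredundant subfamily of size $\Omega_R(\ell)$ from the chain.

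Space accounting then finishes the proof. Each stored constraint is a $k$-tuple of variable indices, costing $O_R(\log n)$ bits, so the total is $O_R(\NRD_n(R)\log n)$. At the end of the stream, the algorithm answers by checking whether $\Sat(S)\neq\varnothing$, which by the invariant equals $\Sat(I)$. The algorithm is deterministic.

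Nothing specific to cores is used in this argument. The hypothesis that $R$ is a core only matters when this lemma is combined with \Cref{lem:core-space}, which transfers the bound for $R^\star$ to an arbitrary $R$.
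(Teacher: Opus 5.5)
\textbf{The gap is the size bound on $S$.} Your algorithm only ever inserts. So the one thing you know about $S=\{C_1,\dots,C_\ell\}$ is that each $C_i$ is nonredundant relative to its predecessors. That bounds $\ell$ by the chain length of $R$, not by $\NRD_n(R)$.

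The step ``chain length is at most a constant multiple of $\NRD_n$ for fixed $R$'' is not something you can quote. Only $\NRD_n(R)\le{}$ chain length is immediate. The reverse comparison is a separate and delicate question; there is recent work specifically on separating the two parameters \cite{brakensiek2026separating}. It is also not what \cite[Theorem~1(a)]{sharma2026characterizingstreamingdecidabilitycsps} provides. That theorem is literally the lemma being proved, but its algorithm is not yours, so it says nothing about the space of your insertion-only procedure. As written, your argument only gives space proportional to the chain length times $\log n$.

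\textbf{The missing idea is to delete as well as insert}, which is what the paper's sketch does. After adding $C$, repeatedly remove any $C'\in S$ that is redundant in $S$, i.e., such that no assignment satisfies $S\setminus\{C'\}$ while violating $C'$; this is checkable by brute force. Then:
\begin{itemize}
\item Each removal leaves $\Sat(S)$ unchanged, so your invariant $\Sat(S)=\Sat(I_{\le t})$ survives.
\item The loop terminates because $S$ shrinks.
\item When it stops, every remaining constraint has a witness. So $S$ is a nonredundant instance on at most $n$ variables, and $|S|\le\NRD_n(R)$ straight from the definition, with one extra constraint held transiently.
\end{itemize}
You correctly noticed that later insertions can make earlier constraints redundant. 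The right response is to discard those constraints, not to account for them through a chain.

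A smaller omission: the brute-force checks need $O_R(n)$ bits of working memory for the current assignment. This is absorbed because $\NRD_n(R)=\Omega_R(n)$ for non-trivial $R$ (\Cref{obs:nrd-linear}), and for trivial $R$ the answer is immediate.
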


The algorithm maintains a nonredundant subinstance with exactly the same satisfying assignments as the current prefix of the stream. When a constraint arrives, the algorithm adds it and repeatedly removes any constraint whose deletion does not change the set of satisfying assignments. The stored instance therefore has at most $\NRD_n(R)$ constraints. Since $R$ is fixed, each constraint is specified by its scope using $O_R(\log n)$ bits. There is no update-time restriction, so redundancy and final satisfiability can be checked by trying all assignments using $O_R(n)$ additional bits. For a non-trivial relation, $\NRD_n(R)=\Omega_R(n)$, so this is within the stated space bound. For a trivial relation, satisfiability is immediate.

It remains to prove the matching lower bound. The reduction uses the witnesses of a nonredundant instance in the way suggested by the definition of non-redundancy.

\subsection{Lower bound for a core relation}
\label{subsec:sat-lb}

In this subsection, we prove the following main theorem.

\begin{theorem}[Satisfiability lower bound]
\label{thm:sat-core-lb}
Let $R\subseteq D^k$ be a core relation. Every randomized bounded-error one-pass streaming algorithm for $\SAT(R)$ uses $\Omega_R(\NRD_n(R))$ bits of space.
\end{theorem}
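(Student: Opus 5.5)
We reduce from one-way $\Index_m$ with $m=\NRD_{n'}(R)$ for a suitable $n'=n-O_R(1)$, following the ideal reduction of the overview and using \Cref{lem:hard-pinning} to replace genuine pinning. Fix a maximum nonredundant $R$-instance $I=\{C_1,\ldots,C_m\}$ on a variable set $X$ with $|X|\le n'$. For each $j$, fix a witness $\alpha_j:X\to D$ that violates $C_j$ and satisfies every $C_i$ with $i\neq j$. This instance and its witnesses are fixed in advance as a function of $n$ and $R$, so they are public to both players. We choose $n'=n-|D|$, so that the $|D|$ extra variables of the pinning gadget fit within $n$ variables in total.

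\textbf{Protocol.} Alice receives $b\in\{0,1\}^m$. She streams the constraints $C_i$ for all $i$ with $b_i=1$ into the streaming algorithm (using the public random tape) and sends the memory state to Bob. Bob receives $j$. He resumes the algorithm and appends the instance $P_{\alpha_j}$ given by \Cref{lem:hard-pinning} for the assignment $\alpha_j$ on $X$. This gadget uses $|D|$ fresh variables and $O_R(n)$ constraints, so the stream remains polynomial in length. Bob outputs $b_j=0$ exactly when the algorithm reports that the instance is satisfiable.

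\textbf{Correctness.} Suppose $b_j=0$. Extend $\alpha_j$ to the gadget variables by setting $y_a\mapsto a$, which is the case $\sigma=\mathrm{id}$ in the lemma. This assignment satisfies $P_{\alpha_j}$, and it satisfies every $C_i$ Alice inserted, since every such $i$ differs from $j$. Now suppose $b_j=1$. Any satisfying assignment $\beta$ of the combined instance also satisfies $P_{\alpha_j}$, so $\beta|_X=\sigma\circ\alpha_j$ for some $\sigma\in\Aut(R)$. Because $\sigma$ is an automorphism, $\sigma\circ\alpha_j$ violates $C_j$ exactly as $\alpha_j$ does (this is \Cref{obs:automorphism-value} applied to the single-constraint instance $\{C_j\}$). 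Since $C_j$ is present, we have a contradiction, so the instance is unsatisfiable. The streaming algorithm's error probability therefore transfers directly to the protocol. By \Cref{lem:index}, the space is $\Omega(m)=\Omega(\NRD_{n-|D|}(R))$.

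\textbf{Main obstacle.} The remaining step, and the only delicate one, is to turn $\NRD_{n-|D|}(R)$ into $\Omega_R(\NRD_n(R))$. We plan to show $\NRD_{n-c}(R)\ge \NRD_n(R)/(c+1)$ for constant $c$, and for $n\le 2c$ both quantities are $O_R(1)$ anyway. Start from a nonredundant instance $J$ on $n$ variables and delete the $c$ variables of smallest degree, together with every constraint touching them. Any subinstance of a nonredundant instance is still nonredundant, since the same witnesses work. The deleted variables have the smallest degrees, so they are incident to at most a $ck/n$ fraction of the constraints. When $n\ge 2ck$ this leaves at least half of $J$.

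An alternative that avoids the averaging argument is to observe directly that $\NRD$ is superadditive under disjoint unions. This also gives $\NRD_{n}\le O(\NRD_{n/2})\cdot$const, which suffices.
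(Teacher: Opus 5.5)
Your proof is correct and is essentially the paper's: it uses the same \Index reduction, with Bob appending the hard-pinning gadget $P_{\alpha_j}$. Your deletion of the $c$ minimum-degree variables, which loses at most a $ck/n$ fraction of the constraints, is a deterministic version of the paper's random-subset averaging for $\NRD_{n-c}(R)=\Omega_R(\NRD_n(R))$.

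Drop the closing ``alternative'', though. Disjoint unions are only superadditive up to an additive constant, since two unsatisfiable pieces do not combine into a nonredundant instance. More importantly, they give $\NRD_n(R)\ge 2\,\NRD_{n/2}(R)-O(1)$, which is the wrong direction for lower-bounding $\NRD_{n-c}(R)$.
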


\begin{proof}
The idea is simple. Alice streams a subinstance of a maximum nonredundant instance. Bob takes the witness for his queried constraint, meaning an assignment that violates this constraint and satisfies every other constraint in the nonredundant instance, and uses \Cref{lem:hard-pinning} to force that witness up to one automorphism. Since automorphisms preserve which constraints are satisfied, the resulting instance tells Bob whether Alice streamed the queried constraint.

Let $c_R$ be a constant such that the hard-pinning instance in \Cref{lem:hard-pinning} always uses at most $c_R$ additional variables. Fix a sufficiently large number $n$, put $n_0=n-c_R$, and let $F=\{C_1,\ldots,C_M\}$ be a nonredundant $R$-instance on $n_0$ variables with $M=\NRD_{n_0}(R)$. If $M=0$, the claimed lower bound is empty, so assume $M\ge1$. For every $i\in[M]$, choose a witness assignment $s_i$ that violates $C_i$ and satisfies every $C_j$ with $j\ne i$.

Suppose that a randomized one-pass streaming algorithm decides $\SAT(R)$ on $n$ variables using $S(n)$ bits of space. We use it to solve $\Index_M$. Alice receives a string $x\in\{0,1\}^M$, and Bob receives an index $i\in[M]$. Alice starts the streaming algorithm and streams $C_j$ exactly when $x_j=1$. She then sends the $S(n)$-bit memory state to Bob. Bob resumes the algorithm from this state and appends the hard-pinning instance $P_{s_i}$ from \Cref{lem:hard-pinning}. The final instance uses at most $n_0+c_R=n$ variables.

We claim that the final instance is satisfiable if and only if $x_i=0$. Suppose first that $x_i=0$. The assignment $s_i$ satisfies every constraint streamed by Alice, and it extends to a satisfying assignment of $P_{s_i}$ by using the identity automorphism in \Cref{lem:hard-pinning}. Hence the final instance is satisfiable.

Now suppose that $x_i=1$. Any assignment satisfying $P_{s_i}$ restricts on the original variables to $\sigma\circ s_i$ for some $\sigma\in\Aut(R)$. The assignment $s_i$ violates $C_i$, so \Cref{obs:automorphism-value} implies that $\sigma\circ s_i$ also violates $C_i$. Since Alice streamed $C_i$, no assignment can satisfy the final instance. This proves the claim.

Alice and Bob use the random tape of the streaming algorithm as public randomness. Bob outputs the complement of the algorithm's answer, which recovers $x_i$ with the same constant error probability. We have therefore obtained a public-coin one-way protocol for $\Index_M$ with $S(n)$ bits of communication. By \Cref{lem:index}, $S(n)=\Omega(M)=\Omega(\NRD_{n_0}(R))$.

Finally, we need to prove that removing a constant number of variables changes non-redundancy by at most a constant factor depending only on $R$. We start with a maximum nonredundant instance on at most $n$ variables and add unused variables so that its variable set has size exactly $n$. Choose a uniformly random set of $n_0=n-c_R$ variables. A constraint uses at most $k$ distinct variables, so for all sufficiently large $n$ the probability that all variables in its scope are chosen is bounded below by a positive constant dependent only on $R$. Some choice therefore keeps $\Omega_R(\NRD_n(R))$ constraints. These constraints remain nonredundant because their original witnesses can be restricted to the chosen variables. Thus $\NRD_{n_0}(R)=\Omega_R(\NRD_n(R))$, and consequently $S(n)=\Omega_R(\NRD_n(R))$.
\end{proof}

We now combine the core reduction with the two core relation bounds to conclude the proof of \Cref{thm:sat-main}.

\begin{proof}[Proof of \Cref{thm:sat-main}.]
By \Cref{lem:core-space}, deciding $\SAT(R)$ has the same streaming space complexity as deciding $\SAT(R^\star)$. Apply \Cref{lem:sat-upper} to $R^\star$ for the deterministic upper bound and \Cref{thm:sat-core-lb} to $R^\star$ for the randomized lower bound. This gives both parts of the theorem and the claimed characterization.
\end{proof}

Thus the streaming complexity of satisfiability is governed, up to the logarithmic factor in the upper bound, by the non-redundancy of the core relation.

\section{Sparsification and \texorpdfstring{\MinCSP}{Min-CSP}}
\label{sec:mincsp}

We first state our sparsification result for \CSPs that are revealed in streams.

\begin{theorem}[Sparsification]
\label{thm:streaming-sparsification}
Let $R\subseteq D^k$ be a fixed non-trivial relation and let $\varepsilon>0$. There is a deterministic one-pass streaming algorithm which, given an unweighted stream of constraints defining an $R$-instance $I$ on $n$ variables and with stream length $m$, stores a weighted subinstance $H\subseteq I$ such that for every assignment $\alpha$,
$
\UnsatVal_H(\alpha)\in(1\pm\varepsilon)\UnsatVal_I(\alpha).
$
The space used is $O_{R,\varepsilon}(\NRD_n(R)\cdot\operatorname{polylog}(n,m,1/\varepsilon))$.
\end{theorem}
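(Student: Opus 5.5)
The plan is to combine an offline non-redundancy-based sparsifier with the standard merge-and-reduce framework for streaming sparsification. For the offline ingredient, I would invoke the known result (in the spirit of \cite{khanna2025efficient,brakensiek2025redundancy}) that every weighted $R$-instance on $n$ variables has a $(1\pm\varepsilon)$ sparsifier, supported on a subinstance, with $O_{R,\varepsilon}(\NRD_n(R)\operatorname{polylog}(n))$ constraints. I would check that it applies to arbitrary relations without literals. Its proof only uses the fact that constraints are arbitrary Boolean-valued functions of assignments, namely violation indicators. Non-redundancy then bounds the ``code dimension'' that governs importance sampling and chaining.

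Since the update time is unrestricted, the sparsifier can be computed deterministically. We enumerate all candidate weighted subinstances with weights rounded to a polynomial grid, and verify the $(1\pm\varepsilon)$ condition against all $|D|^n$ assignments. The working memory for this is $O(n)$ bits plus the size of the candidate, which is within budget because $\NRD_n(R)=\Omega_R(n)$ by \Cref{obs:nrd-linear}.

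The streaming algorithm uses merge-and-reduce. We buffer incoming constraints until the buffer holds $S=\Theta_{R,\varepsilon}(\NRD_n(R)\operatorname{polylog}(n))$ constraints, and then sparsify it with error parameter $\varepsilon'=\varepsilon/(2\log m)$. We maintain at most one sparsifier per level of a binary counter over $\log m$ levels. Whenever two sparsifiers sit at the same level, we take their union, re-sparsify it with error $\varepsilon'$, and promote the result to the next level.

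Two properties drive the analysis. First, sparsifiers compose multiplicatively: a $(1\pm\varepsilon')$ sparsifier of a $(1\pm\varepsilon')$ sparsifier is a $(1\pm\varepsilon')^2$ sparsifier of the original. Second, the union of sparsifiers of disjoint instances is a sparsifier of their union, because $\UnsatVal$ is additive. After at most $\log m$ levels, the total error is therefore $(1\pm\varepsilon')^{\log m}\subseteq 1\pm\varepsilon$. At the end of the stream we output the union of the surviving levels.

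For space, each level stores $O(S)$ constraints with $\varepsilon'$ substituted into $S$. That gives $O_{R,\varepsilon}(\NRD_n(R)\operatorname{polylog}(n,m,1/\varepsilon))$ constraints in total. Each constraint needs $O_R(\log n)$ bits for its scope plus a weight.

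The main obstacle is precisely the weight encoding together with the requirement that the offline sparsifier accept weighted inputs. Repeated reweighting can blow up the bit length of the weights, so after each re-sparsification I would round every weight to a power of $(1+\varepsilon')$. This costs only another $(1\pm\varepsilon')$ factor per level. Weights stay within $[1/\poly(m), \poly(m)]$ when we track total weight, so they need only $O(\log\log m+\log(1/\varepsilon'))$ bits each.

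I would also verify that the offline theorem holds for weighted instances, with size depending on $\NRD_n(R)$ and not on the ratio of weights. The fallback, if only the unweighted version is available, is to group constraints into $O(\log m/\varepsilon')$ weight classes of near-equal weight and sparsify each class separately. This loses only polylogarithmic factors, which the stated bound absorbs.
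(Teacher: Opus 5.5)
Your proposal is correct and is in substance the paper's proof: a binary merge-and-reduce over $O(\log m)$ levels, with the Brakensiek--Guruswami code sparsifier applied to the violation code (whose non-redundancy is at most $\NRD_n(R)$) as the offline step, computed deterministically by brute-force search over candidates and assignments. The paper handles your weight-encoding obstacle more simply: it takes the Brakensiek--Guruswami weights to be nonnegative integers, expands integer weights into unit copies (which cannot increase the code's non-redundancy, since copies have identical columns), and uses the level-dependent schedule $\eta_\ell=\varepsilon/(20(\ell+1)^2)$, which keeps all weights integers below $2^{\ell+1}$ and avoids your need to know $m$ in advance to set $\varepsilon'=\varepsilon/(2\log m)$; note also that your lower bound on the weights does not follow from tracking total weight, and needs either such integrality or an explicit pruning of tiny weights.
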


As an immediate corollary, we obtain the following.

\begin{corollary}[\MinCSP]
\label{cor:mincsp-main}
Let $R\subseteq D^k$ be a fixed non-trivial relation, let $R^\star=\Core(R)$, and let $\varepsilon>0$.
\begin{enumerate}[label=(\roman*),leftmargin=*]
\item there is a deterministic one-pass $(1+\varepsilon)$-approximation algorithm for $\MinCSP(R)$ using
\(
O_{R,\varepsilon}\!\left(\NRD_n(R^\star)
\operatorname{polylog}(n,m,1/\varepsilon)\right)
\)
bits of space;
\item every randomized bounded-error one-pass finite-factor approximation algorithm for $\MinCSP(R)$ uses $\Omega_R(\NRD_n(R^\star))$ bits.

\end{enumerate}
Consequently,
\[
\begin{aligned}
\Omega_R\!\left(\NRD_n(\Core(R))\right)
\le \SPACE(\MinCSP_{1+\varepsilon}(R)) 
\le O_{R,\varepsilon}\!\left(\NRD_n(\Core(R))
\operatorname{polylog}(n,m,1/\varepsilon)\right).
\end{aligned}
\]
\end{corollary}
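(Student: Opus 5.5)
For part (i), we first reduce to the core: by \Cref{lem:core-space}, replacing each arriving $R$-constraint by the $R^\star$-constraint on the same scope changes neither the stream length nor the value of $\OPTmin$. The same argument gives more: by \Cref{lem:core-reduction}, every assignment $\alpha$ can be pushed through $\sigma$ without increasing its unsatisfied weight. We then run the streaming sparsifier of \Cref{thm:streaming-sparsification} for $R^\star$ with error $\varepsilon'=\varepsilon/3$. Its space is $O_{R,\varepsilon}(\NRD_n(R^\star)\operatorname{polylog}(n,m,1/\varepsilon))$, since $R^\star$ is determined by $R$.

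Suppose $R^\star$ is trivial. It is nonempty because $R$ is. If it equals $C^k$, then $\OPTmin=0$ and we output $0$. Otherwise $R^\star$ is non-trivial and the theorem applies. At the end of the stream, we enumerate all assignments $\alpha\in C^n$ using $O_R(n)$ extra bits. By \Cref{obs:nrd-linear}, this is within the space bound. We compute $\min_\alpha \UnsatVal_H(\alpha)/(1-\varepsilon')$ and output it. Since $\UnsatVal_H\in(1\pm\varepsilon')\UnsatVal_{I^\star}$ pointwise, the minimum of $\UnsatVal_H$ lies in $[(1-\varepsilon')\OPTmin,(1+\varepsilon')\OPTmin]$. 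Dividing by $1-\varepsilon'$ gives a value in $[\OPTmin,\frac{1+\varepsilon'}{1-\varepsilon'}\OPTmin]\subseteq[\OPTmin,(1+\varepsilon)\OPTmin]$ for small $\varepsilon$. In particular, when $\OPTmin=0$ the output is exactly $0$. The algorithm is deterministic because the sparsifier is.

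For part (ii), we observe that any finite-factor $\gamma$-approximation for $\MinCSP(R)$ decides $\SAT(R)$: the answer must be $0$ exactly when $\OPTmin(I)=0$, since $0\le\text{answer}\le\gamma\cdot0$ in that case and $\text{answer}\ge\OPTmin>0$ otherwise. So any such algorithm is a $\SAT(R)$ algorithm with the same space and error. The lower bound then follows directly from \Cref{thm:sat-main}(ii), giving $\Omega_R(\NRD_n(R^\star))$.

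The displayed consequence combines (i) and (ii), applying (ii) with $\gamma=1+\varepsilon$. The only delicate point is that the sparsifier bound must use $\NRD_n(R^\star)$ and not $\NRD_n(R)$. This is why the core reduction has to come before sparsification, and why the trivial-core case is handled separately. The rest is routine.
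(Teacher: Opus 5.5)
Your proposal is correct and follows essentially the same route as the paper. You reduce to the core, sparsify $I^\star$, minimize over $C^n$ and rescale by $1/(1-\varepsilon')$, handle the trivial-core case separately, and absorb the enumeration cost via \Cref{obs:nrd-linear}; part (ii) uses the observation that a finite-factor approximation decides satisfiability. The only cosmetic difference is your error parameter $\varepsilon/3$, which needs $\varepsilon\le 1$ (fine after capping $\varepsilon$ at $1$), whereas the paper's choice $\delta=\varepsilon/(2+\varepsilon)$ gives $\frac{1+\delta}{1-\delta}=1+\varepsilon$ exactly for every $\varepsilon>0$.
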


We first prove \Cref{thm:streaming-sparsification}.  The proof of \Cref{cor:mincsp-main} is given at the end of the section.

\subsection{Sparsification}
\label{subsec:sparsification}

At a high level, we use a sparsify--merge--sparsify procedure for building sparsifiers in insertion-only streams. These sparsifiers have been used before in insertion-only streams; see, for instance, \cite{mcgregor2014graph}. The only sparsification result we need as a black box is the following theorem of
Brakensiek and Guruswami.
\begin{fact}[\cite{brakensiek2025redundancy}]
    If
$\mathcal C\subseteq\{0,1\}^{M}$ is a binary code and $0<\eta<1$, then
there is a nonnegative weight vector $z$ such that
\[
  \langle z,c\rangle\in(1\pm\eta)\lVert c\rVert_1
  \qquad\text{for every }c\in\mathcal C,
\]
and, for an absolute constant $c_{\mathrm{BG}}>0$,
\begin{equation}
\label{eq:bg-code-sparsifier}
  |\operatorname{supp}(z)|
  \leq
  c_{\mathrm{BG}}\,\NRD(\mathcal C)
  \frac{\log_2^6(4M)}{\eta^2}.
\end{equation}
Importantly, their proof allows $z$ to be chosen with nonnegative integer entries. Indeed, their induction starts with unit weights and, in the recursive step, assigns weight $1$ to the explicitly retained coordinates and multiplies the recursively produced weights by $3$.
\end{fact}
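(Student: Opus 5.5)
The plan is to prove the Fact by induction on the length $M$, halving (more precisely, cutting to about a third) the number of \emph{unretained} coordinates at each level. With nonnegative integer weights this gives $z$ as a sum of unit weights on explicitly retained coordinates plus powers of $3$ on sampled ones. This is the sparsify-by-recursion template of Khanna--Putterman--Sudan adapted to non-redundancy, as in \cite{brakensiek2025redundancy}. Since the paper uses the Fact only as a black box, what follows is a sketch of how I would reprove it rather than a new argument.

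\textbf{Step 1: counting bound.} The main structural input bounds the number of distinct codewords of small weight in terms of $\NRD(\mathcal C)$. The claim to aim for is the following. For every $t$, after deleting a set $K$ of at most $O(\NRD(\mathcal C)\log M)$ "heavy" coordinates, every codeword $c$ with $\lVert c|_{\overline K}\rVert_1\le t$ has its restriction to $\overline K$ drawn from a family of size at most $M^{O(t\,\NRD(\mathcal C)/L)}$. Here $L$ is a threshold below which we simply keep coordinates explicitly.

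The route I would take is a greedy/entropy argument. A coordinate that is the unique support of some codeword among the remaining ones is non-redundant. Iteratively peeling such witnesses produces a nonredundant subinstance, which caps the peeled set by $\NRD(\mathcal C)$. Repeating this over $O(\log M)$ scales handles every weight range. The remaining coordinates are "well-connected": each low-weight codeword is determined by few of them, which yields the counting bound.

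\textbf{Step 2: one level of the induction.} Keep $K$ with weight $1$. Pick a uniformly random subset $S\subseteq\overline K$ in which each coordinate is included with probability $1/3$, and give each chosen coordinate weight $3$. For a fixed codeword $c$ with $w=\lVert c|_{\overline K}\rVert_1$, Chernoff gives
\[
\Pr\bigl[\,|3\lVert c|_S\rVert_1-w|>\eta' w\,\bigr]\le \exp(-\Omega(\eta'^2 w)).
\]
I would then union-bound over the codewords in each dyadic weight class $[t,2t]$, using the count from Step 1. Choosing $L=\Theta(\NRD(\mathcal C)\log M/\eta'^2)$ makes the failure probability summable across classes, so some $S$ works for all codewords at once. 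Codewords whose weight outside $K$ is below $L$ are handled because $K$ already captures them exactly. Making this work is the reason for the threshold.

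\textbf{Step 3: recursion.} The code restricted to $S$ has non-redundancy at most $\NRD(\mathcal C)$, since projection does not increase it, and its length is about $M/3$. Applying induction to it and multiplying the resulting weights by $3$ gives integrality. The errors compound over $O(\log M)$ levels, so I set $\eta'=\eta/\Theta(\log M)$. The final support is then $O(\log M)$ levels times $|K|+L$, which gives the stated $\NRD(\mathcal C)\log^{O(1)}(M)/\eta^2$ bound. Tracking the exponents precisely should yield the $\log^6$.

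\textbf{Main obstacle.} I expect Step 1 to be the hard part. It requires converting the non-redundancy parameter into a codeword-counting bound strong enough for the union bound, uniformly over all weight scales. My first attempt would iteratively extract, for each scale $t$, a maximal set of coordinates each witnessed by a codeword of weight $\le t$ supported mostly on fresh coordinates. I would then show that such witnesses form a nonredundant subfamily, so their number is at most $\NRD(\mathcal C)$.
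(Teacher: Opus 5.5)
The paper does not prove this Fact. It imports it from \cite{brakensiek2025redundancy} and adds only the remark that the recursion there produces integer weights. Your scaffolding is fine and matches that remark: keep $K$ with weight $1$, subsample $\overline K$ at rate $1/3$ with weight $3$, recurse on $\mathcal C|_S$ (using $\NRD(\mathcal C|_S)\le\NRD(\mathcal C)$), and take $\eta'=\eta/\Theta(\log M)$. But Steps 1--2 have a genuine gap: the structural claim is not merely unproven, it is false. Your peeling produces coordinates $i_1,i_2,\dots$ and codewords $c_1,c_2,\dots$ with $c_j(i_j)=1$ and $c_j(i_{j'})=0$ only for $j'>j$. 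That is a triangular pattern. It bounds the number of peels by the chain length, not by $\NRD$, because an identity submatrix needs zeros on both sides of the diagonal. (For linear codes each peel lowers the dimension, which is why the intuition works there.) The prefix code $\mathcal C=\{1^j0^{M-j}:1\le j\le M\}$ has $\NRD(\mathcal C)=1$, since nested rows never contain a $2\times 2$ identity, yet your peeling removes every coordinate. Worse, for this code no $K$ of size $o(1/\eta)$ can work at all. If $K\ne[M]$ and $j^*=\min\overline K$, the codeword $c=1^{j^*}0^{M-j^*}$ has a single coordinate outside $K$ and $\lVert c\rVert_1=j^*\le|K|+1$. Whatever the recursion does, $\langle z,c\rangle-\lVert c\rVert_1$ is $-1$ if $j^*\notin S$ and roughly $+2$ if $j^*\in S$. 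That is a relative error of at least $1/(|K|+1)$, which exceeds $\eta$ for your $|K|=O(\NRD\log M)$ once $\eta\ll 1/\log M$. So your sentence that codewords of small weight outside $K$ are ``captured exactly'' by $K$ presupposes a min-distance property after deleting few coordinates, and that property fails for non-linear codes. Your counting bound has the same defect in two ways:
\begin{itemize}
\item For $t$ below $L/O(\NRD\log M)$ it allows at most one restriction pattern of weight $\le t$ on $\overline K$, which the prefix code refutes.
\item For $t\ge L$ it is trivial: the VC dimension is at most $\NRD(\mathcal C)$, so Sauer--Shelah gives $|\mathcal C|\le (M+1)^{\NRD(\mathcal C)}$.
\end{itemize}
The step you flagged as the main obstacle is therefore not where the difficulty lies.

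The natural repair shows what is actually missing. Let $K$ be the set of coordinates lying in some codeword of total weight $<L$. A random restriction bounds its size: keep each coordinate with probability $1/(2L)$; a kept coordinate whose witness codeword contains no other kept coordinate contributes a row of an identity submatrix. This gives $|K|\le 4L\,\NRD(\mathcal C)$. Every codeword meeting $\overline K$ now has $\lVert c\rVert_1\ge L$. Bernstein's inequality then bounds its failure probability by $\exp(-\Omega(\eta'^2\lVert c\rVert_1))$, even when its weight on $\overline K$ is small. The Sauer--Shelah count closes the union bound once $L=\Theta(\NRD\log M/\eta'^2)$. But then $|K|$ can be of order $\NRD^2\log M/\eta'^2$ per level, so this route yields only $O(\NRD^2\operatorname{polylog}(M)/\eta^2)$. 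The linear dependence on $\NRD$ in the Fact needs an ingredient that controls every codeword with an essentially $\NRD$-free threshold. That ingredient cannot be a Karger-type count with an $\NRD$-free exponent: such a count fails for the code of all $L$-subsets of $[M]$, which has $\binom{M}{L}$ codewords of minimum weight. Supplying the missing ingredient is the substance of the cited theorem, and your proposal does not provide it.
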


Subsequent work sharpened and broadened this code-sparsification framework. Lovett, Meka, and Wang \cite{LovettMW26} improved the dependence on the block length from polylogarithmic to logarithmic, while Khanna, Putterman, and Sudan \cite{KhannaPS26} developed a unified sparsification theory for nonnegative real-valued codes.

Note that in the above, $\NRD(\mathcal C)$ refers to the largest \emph{identity} submatrix that can be found in $\mathcal{C}$ (when its vectors are listed as rows).

We now prove the following basic fact which shows that weighted instances (with small weights) can effectively be sparsified as \emph{unweighted} instances, just by repeating constraints.

\begin{lemma}[Integral-weight reduction]
\label{lem:integral-weight-reduction}
Let $J$ be an integer-weighted $R$-instance on at most $n$ variables. Let
\[
  W(J)=\sum_{C\in\operatorname{supp}(J)}w_J(C),
\]
denote the total weight of the instance. For every $0<\eta<1$, there is an integer-weighted subinstance $\operatorname{Red}_{\eta}(J)\subseteq J$ such that, simultaneously for every assignment $\alpha$,
\begin{equation}
\label{eq:reduction-guarantee}
  \UnsatVal_{\operatorname{Red}_{\eta}(J)}(\alpha)
  \in(1\pm\eta)\UnsatVal_J(\alpha),
\end{equation}
and
\begin{equation}
\label{eq:reduction-support}
  |\operatorname{supp}(\operatorname{Red}_{\eta}(J))|
  \leq
  c_{\mathrm{BG}}\,\NRD_n(R)
  \frac{\log_2^6(4\max\{W(J),1\})}{\eta^2}.
\end{equation}
\end{lemma}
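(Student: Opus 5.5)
Our plan is to encode $J$ as a binary code by expanding each integer weight into repeated copies, and then apply the Brakensiek--Guruswami fact. Let $M=W(J)$; if $M=0$ the instance is empty and we take $\operatorname{Red}_\eta(J)=J$. Otherwise, list every constraint $C\in\operatorname{supp}(J)$ exactly $w_J(C)$ times. This gives an unweighted multiset $C_1,\ldots,C_M$ of $R$-constraints on at most $n$ variables. For each assignment $\alpha$, define the codeword $c_\alpha\in\{0,1\}^M$ by $(c_\alpha)_i=1$ iff $\alpha$ violates $C_i$, and let $\mathcal C=\{c_\alpha\}$. Then $\lVert c_\alpha\rVert_1=\UnsatVal_J(\alpha)$ by construction.

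Next we apply the Fact with parameter $\eta$. This yields a nonnegative integer vector $z$ with $\langle z,c_\alpha\rangle\in(1\pm\eta)\UnsatVal_J(\alpha)$ for all $\alpha$, and with $|\operatorname{supp}(z)|\le c_{\mathrm{BG}}\NRD(\mathcal C)\log_2^6(4M)/\eta^2$. To return to the instance, define $\operatorname{Red}_\eta(J)$ by giving each $C\in\operatorname{supp}(J)$ the weight $\sum_{i:C_i=C}z_i$, which is an integer. Since all copies of $C$ have identical coordinates in every codeword, we get $\UnsatVal_{\operatorname{Red}_\eta(J)}(\alpha)=\langle z,c_\alpha\rangle$, which gives \eqref{eq:reduction-guarantee}. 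The support of $\operatorname{Red}_\eta(J)$ is at most $|\operatorname{supp}(z)|$ and uses only occurrences of $J$, so it is a subinstance.

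The key step, and the main obstacle, is bounding $\NRD(\mathcal C)\le\NRD_n(R)$. Suppose $\mathcal C$ contains an identity submatrix of size $t$. Then there are coordinates $i_1,\ldots,i_t$ and assignments $\alpha_1,\ldots,\alpha_t$ such that $\alpha_j$ violates $C_{i_j}$ and satisfies $C_{i_\ell}$ for every $\ell\ne j$.

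We first check that the constraints $C_{i_1},\ldots,C_{i_t}$ are distinct as constraints. Two copies of the same constraint would have equal columns, so they could not both appear in an identity submatrix. The same reasoning shows that even two distinct occurrences with the same scope are excluded.

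Hence $\{C_{i_1},\ldots,C_{i_t}\}$ is an $R$-instance on at most $n$ variables in which each $\alpha_j$ witnesses $C_{i_j}$. This instance is nonredundant, so $t\le\NRD_n(R)$.

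Substituting this bound and $M=\max\{W(J),1\}$ gives \eqref{eq:reduction-support}. The remaining care is only to ensure the integrality of $z$, which the Fact explicitly guarantees.
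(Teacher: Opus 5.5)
Your proposal is correct and follows essentially the same route as the paper. Both arguments expand the integer weights into unit copies, form the violation code, and bound its non-redundancy by $\NRD_n(R)$ via the identity-submatrix/witness correspondence, noting that copies or repeated scopes give identical columns. Both then apply the Brakensiek--Guruswami fact with integral weights and collapse the copies back onto occurrences of $J$. The only addition in the paper is a remark that the integral weight vector can be found deterministically by exhaustive search, which is finite because every weight is at most $(1+\eta)W(J)$. That remark matters for the deterministic streaming algorithm, not for the existence statement you proved.
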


\begin{proof}
The assertion is immediate when $W(J)=0$, so assume $W(J)>0$.  Because the instance is integer weighted, we simply replace each constraint $C$ of weight $w_J(C)$ by $w_J(C)$ labeled unit-weight copies.  Denote the resulting unweighted multiset by $J^{\uparrow}$; it has exactly $W(J)$ occurrences. Now, for this \emph{unweighted} \CSP instance, we construct the canonical ``violation code''.
\begin{equation}
\label{eq:violation-code}
  \mathcal C_{J^{\uparrow}}
  =
  \left\{
    \bigl(\mathbf 1[\alpha\text{ violates }C]\bigr)_{C\in J^{\uparrow}}
    :\alpha\colon V(J)\to D
  \right\}
  \subseteq\{0,1\}^{W(J)}.
\end{equation}
We claim that
\begin{equation}
\label{eq:violation-code-nrd}
  \NRD(\mathcal C_{J^{\uparrow}})
  \leq \NRD_n(R).
\end{equation}
Indeed, suppose that $t$ coordinates form an identity submatrix, and let $C_1,\ldots,C_t$ be the corresponding constraint copies.  For every $i\in[t]$, the row containing the $i$th diagonal $1$ is an assignment
that violates $C_i$ and satisfies every $C_j$ with $j\neq i$.  Hence $\{C_1,\ldots,C_t\}$ is a nonredundant $R$-instance on at most $n$ variables.  Notice also that two copies of the same constraint, or two distinct constraints operating on the same ordered tuple of variables, have identical columns and therefore cannot both appear in such an identity submatrix.

Next, we apply \eqref{eq:bg-code-sparsifier} to $\mathcal C_{J^{\uparrow}}$.  Its integer weight vector approximately preserves the number of violated copies for every assignment.  To build the ultimate sparsifier, we collapse all selected copies belonging to the same occurrence of $J$ and add their weights. Since we impose no restriction on computation time, the required weight vector can be found deterministically within the stated space bound by enumerating candidate supports and integer weights and checking the inequalities over all assignments. Coordinates that are identically zero may be discarded, while every remaining selected weight is at most $(1+\eta)W(J)$, so this search is finite. This then yields \eqref{eq:reduction-guarantee} and \eqref{eq:reduction-support}.
\end{proof}

Now, we proceed to give the streaming construction. It is enough to consider an ultimate sparsifier accuracy of $0<\varepsilon\leq 1/2$. For a larger error, run the construction with error $1/2$. We then set
\begin{equation}
\label{eq:error-schedule}
  \eta_\ell=\frac{\varepsilon}{20(\ell+1)^2}
  \quad(\ell\geq1).
\end{equation}

The algorithm maintains at most one sparsifier at each level $\ell\geq0$.  A level-$\ell$ sparsifier represents a block of exactly $2^\ell$ original constraints in the stream. Whenever the algorithm stores two sparsifiers at level $\ell$, we take their weighted union $J$ and replace it by
\[
  \operatorname{Red}_{\eta_{\ell+1}}(J),
\]
which now constitutes a sparsifier at level $\ell+1$.  We repeat this carry operation until every level again contains at most one sparsifier.

\begin{claim}
\label{claim:merge-reduce-error}
Let $H_B$ be a level-$\ell$ sparsifier and let $I_B$ be the unweighted
subinstance consisting of its underlying block of constraints.  Then, for
every assignment $\alpha$,
\begin{equation}
\label{eq:level-error-invariant}
  \left(\prod_{j=1}^{\ell}(1-\eta_j)\right)
  \UnsatVal_{I_B}(\alpha)
  \leq \UnsatVal_{H_B}(\alpha)
  \leq
  \left(\prod_{j=1}^{\ell}(1+\eta_j)\right)
  \UnsatVal_{I_B}(\alpha).
\end{equation}
Consequently, the union of the sparsifiers remaining at the end of the
stream is a $(1\pm\varepsilon)$-sparsifier of the entire input.
\end{claim}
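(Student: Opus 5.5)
We argue by induction on the level $\ell$. For $\ell=0$, a level-$0$ sparsifier is a single original constraint with unit weight, so $H_B=I_B$ and both products are empty (equal to $1$). For the inductive step, a level-$(\ell+1)$ sparsifier $H_B$ arises from two level-$\ell$ sparsifiers $H_{B_1},H_{B_2}$ whose underlying blocks $I_{B_1},I_{B_2}$ are disjoint and consecutive, with $I_B=I_{B_1}\cup I_{B_2}$. Let $J$ be their weighted union. Since $\UnsatVal$ is additive over weighted unions, $\UnsatVal_J(\alpha)=\UnsatVal_{H_{B_1}}(\alpha)+\UnsatVal_{H_{B_2}}(\alpha)$ and $\UnsatVal_{I_B}(\alpha)=\UnsatVal_{I_{B_1}}(\alpha)+\UnsatVal_{I_{B_2}}(\alpha)$. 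Applying the inductive hypothesis to each summand sandwiches $\UnsatVal_J(\alpha)$ between $\prod_{j\le\ell}(1\mp\eta_j)\UnsatVal_{I_B}(\alpha)$. Then \eqref{eq:reduction-guarantee} of \Cref{lem:integral-weight-reduction}, applied with $\eta_{\ell+1}$ to $H_B=\operatorname{Red}_{\eta_{\ell+1}}(J)$, multiplies by one more factor $(1\mp\eta_{\ell+1})$, which gives \eqref{eq:level-error-invariant} at level $\ell+1$. Because all stored weights are nonnegative integers (by the Fact and the lemma), $J$ is integer weighted, so the lemma applies. The bound holds uniformly for every assignment, so no union bound over assignments is needed.

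For the consequence, at the end of the stream the input $I$ is the disjoint union of the blocks $I_B$ underlying the remaining sparsifiers, one per level, corresponding to the binary expansion of $m$. Summing \eqref{eq:level-error-invariant} over these blocks, again using additivity, shows that the union $H$ satisfies the same sandwich with the worst products over all levels. It then remains to bound the infinite products. Using $\eta_j=\varepsilon/(20(j+1)^2)$, we have $\sum_{j\ge1}\eta_j\le\varepsilon\pi^2/120\le\varepsilon/10$. Hence $\prod_j(1+\eta_j)\le e^{\varepsilon/10}\le1+\varepsilon$ for $\varepsilon\le1/2$. In the other direction, $\prod_j(1-\eta_j)\ge1-\sum_j\eta_j\ge1-\varepsilon$ by the Weierstrass product inequality. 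Therefore $H$ is a $(1\pm\varepsilon)$-sparsifier.

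The main point needing care is the bookkeeping. We must check that a level-$\ell$ sparsifier's underlying block is well defined and that merged blocks are disjoint, so that additivity applies exactly. This follows because each original constraint enters as its own level-$0$ sparsifier and carries only combine two existing items. The other point is that the lemma's guarantee is relative to $J$, not to $I_B$, which is why the errors compound multiplicatively rather than additively. The numerical bound on the products is routine.
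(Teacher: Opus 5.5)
Your proposal is correct and follows the paper's proof: induction on the level, adding the two children's sandwich inequalities under the weighted union, picking up one extra factor $(1\pm\eta_{\ell+1})$ from $\operatorname{Red}$, and then bounding the infinite products using the smallness of $\sum_j\eta_j$. The differences are cosmetic: you use the looser bound $\sum_j\eta_j\le\varepsilon\pi^2/120$ with $e^{\varepsilon/10}\le1+\varepsilon$, where the paper uses $s<\varepsilon/20$ with $e^x\le1+2x$, and you spell out the final summation over the occupied levels, which the paper leaves implicit.
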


\begin{proof}
We prove~\eqref{eq:level-error-invariant} by induction on $\ell$.  It is
an equality at level zero.  Suppose two level-$(\ell-1)$ sparsifiers $H_{B_1}$ and $H_{B_2}$ are merged.  Since all values are nonnegative, adding their two inductive inequalities shows that their weighted union has the same multiplicative bounds relative to $I_{B_1}\uplus I_{B_2}$. Applying $\operatorname{Red}_{\eta_\ell}$ multiplies the lower and upper bounds by $1-\eta_\ell$ and $1+\eta_\ell$, respectively, thereby yielding the induction. 

The schedule satisfies
\[
  s:=\sum_{\ell\geq1}\eta_\ell
  <\frac{\varepsilon}{20}\leq\frac1{40}.
\]
Therefore
\[
  \prod_{\ell\geq1}(1-\eta_\ell)
  \geq1-s\geq1-\varepsilon
\]
and, using $e^x\leq1+2x$ for $0\leq x\leq1/2$,
\[
  \prod_{\ell\geq1}(1+\eta_\ell)
  \leq e^s\leq1+2s\leq1+\varepsilon.
\]
This then finishes the claim.
\end{proof}

\begin{claim}
\label{claim:merge-reduce-space}
Let $d=\NRD_n(R)$.  A level-$\ell$ sparsifier has support size at most
\begin{equation}
\label{eq:level-support-bound}
  O_R\!\left(
    \frac{d}{\varepsilon^2}(\ell+1)^{10}
  \right),
\end{equation}
and each of its positive weights is an integer smaller than $2^{\ell+1}$.  On a stream of length $m$, the total space used is bounded by
\begin{equation}
\label{eq:stream-space-bound}
  O_R\!\left(
    \frac{d}{\varepsilon^2}
    \log^{11}(m+2)  \bigl(\log(n+2)+\log(m+2)+\log(1/\varepsilon+2)\bigr)
  \right)
\end{equation}
bits.
\end{claim}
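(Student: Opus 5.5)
Both bounds follow by induction on the level $\ell$, using \Cref{lem:integral-weight-reduction}. I then sum over the levels that can be alive at once.

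\emph{Weights.} A level-$0$ sparsifier is a single unit-weight constraint. Suppose a level-$\ell$ sparsifier has positive integer weights with total weight $W\le(1+\eta_1)\cdots(1+\eta_\ell)\,2^\ell$. For $\ell=0$ this holds trivially. Merging two level-$\ell$ sparsifiers gives total weight at most $(1+\varepsilon)2^{\ell+1}$ before reduction. The reduction preserves total weight up to a factor $1+\eta_{\ell+1}$: evaluate the guarantee \eqref{eq:reduction-guarantee} at suitable assignments, or more simply note that every retained weight is at most $(1+\eta)W(J)$, as stated in the proof of \Cref{lem:integral-weight-reduction}. This keeps the bound \emph{per weight} below $(1+\varepsilon)2^{\ell+1}$. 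I expect this step to be the delicate one: the stated bound is strictly $<2^{\ell+1}$, while the crude bound allows a factor $1+\varepsilon$. I would first try an $O(1)\cdot 2^{\ell+1}$ bound, which suffices for all space purposes since it gives $O(\ell)$ bits per weight. To get the exact statement, I would check whether the Brakensiek--Guruswami integral weights inherit the property that each weight is at most the total multiplicity, via the recursive ``weight $1$ or multiply by $3$'' structure. If that fails, I would state it with a constant factor.

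\emph{Support.} The level-$\ell$ sparsifier is $\operatorname{Red}_{\eta_\ell}(J)$, where $J$ has $W(J)=O(2^\ell)$. By \eqref{eq:reduction-support} its support is at most
\[
c_{\mathrm{BG}}\,d\,\frac{\log_2^6(O(2^{\ell}))}{\eta_\ell^2}=O\!\left(d\,\ell^6\cdot\frac{400(\ell+1)^4}{\varepsilon^2}\right)=O_R\!\left(\frac{d}{\varepsilon^2}(\ell+1)^{10}\right),
\]
using the schedule \eqref{eq:error-schedule}.

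\emph{Total space.} At any moment there is at most one sparsifier per level, plus transiently two at one level during a carry. Only levels $\ell\le\log_2(m)+1$ occur. Each stored constraint needs $O_R(\log n)$ bits for its scope, since $R$ is fixed, and $O(\ell)=O(\log m)$ bits for its weight. We also store an $O(\log m)$ counter. Summing gives
\[
\sum_{\ell\le\log m+1}\frac{d}{\varepsilon^2}(\ell+1)^{10}\,(\log n+\log m)=O_R\!\left(\frac{d}{\varepsilon^2}\log^{11}(m+2)(\log(n+2)+\log(m+2))\right).
\]
The $\log(1/\varepsilon)$ term only absorbs the bookkeeping of the parameter.

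The computation of $\operatorname{Red}$ happens in workspace. Since the model places no restriction on time, this can be done by enumeration as in the proof of \Cref{lem:integral-weight-reduction}. The only extra memory is the working storage of the current candidate, which is of the same size as the output, plus $O(n)$ bits for an assignment. That is within budget because $d=\Omega_R(n)$ by \Cref{obs:nrd-linear}.
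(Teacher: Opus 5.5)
You have a genuine gap. Your support and total-space computations match the paper's, but both the per-weight bound and the input $W(J)=O(2^\ell)$ to \eqref{eq:reduction-support} rest on an induction that does not hold. The hypothesis $W(H_B)\le b_\ell 2^\ell$, with $b_\ell=\prod_{j\le\ell}(1+\eta_j)$, is not preserved by $\operatorname{Red}$. The guarantee \eqref{eq:reduction-guarantee} holds pointwise in $\UnsatVal$, and total weight equals an $\UnsatVal$ value only if a single assignment violates every supported constraint. In general no such assignment exists.

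A valid output can really carry more mass. Take $R=\{0,1\}^3\setminus\{(0,1,1),(0,1,0),(1,0,1)\}$ and the constraints $C_1=R(x,y,y)$, $C_2=R(y,x,x)$, $C_3=R(x,y,x)$. They are violated exactly when $(x,y)=(0,1)$, when $(x,y)=(1,0)$, and when $x\ne y$, respectively. So putting weight $t+1$ on each of $C_1,C_2$ is an exact integral sparsifier of $C_1,C_2$ together with $t$ copies of $C_3$. Its total weight is $2t+2$ rather than $t+2$, and the enumeration in \Cref{lem:integral-weight-reduction} is free to return it.

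Your fallback, that each retained weight is at most $(1+\eta)W(J)$, bounds the output weights by the \emph{input} mass, so it gives no mass bound for the output. Iterating it multiplies by a support size at every merge. It therefore does not yield $\log W(J)=O_R(\ell+1)$, which is what the $(\ell+1)^{10}$ bound needs.

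The missing idea is to compare $H_B$ directly with its original unweighted block $I_B$ through the cumulative invariant \eqref{eq:level-error-invariant}, not level by level. First discard constraints satisfied by every assignment. For each retained $C$, pick an assignment $\alpha_C$ violating it. Then $w_{H_B}(C)\le\UnsatVal_{H_B}(\alpha_C)\le b_\ell\,\UnsatVal_{I_B}(\alpha_C)\le b_\ell 2^\ell<2^{\ell+1}$, because $b_\ell\le 1+\varepsilon<2$. This gives the strict bound you flagged as delicate immediately, with no need to look inside the Brakensiek--Guruswami recursion.

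For the mass, the paper averages over a uniformly random assignment. Such an assignment violates each retained constraint with probability at least $|D|^{-k}$, so $W(H_B)\le |D|^k b_\ell 2^\ell$ and hence $W(J)=O_R(2^\ell)$. Alternatively, $\operatorname{supp}(H_B)$ lies among the $2^\ell$ occurrences of the block, so $W(H_B)<2^{2\ell+1}$. With either bound in place, the rest of your argument goes through as written.
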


\begin{proof}
Let
\[
  b_\ell=\prod_{j=1}^{\ell}(1+\eta_j).
\]
The proof of Claim~\ref{claim:merge-reduce-error} gives $b_\ell<2$. If a constraint $C$ has positive weight in a level-$\ell$ sparsifier $H_B$, then $C$
is not vacuously true (otherwise, we simply ignore the constraint), so some assignment $\alpha_C$ violates it.  Hence
\begin{equation}
\label{eq:level-weight-bound}
  w_{H_B}(C)
  \leq\UnsatVal_{H_B}(\alpha_C)
  \leq b_\ell\UnsatVal_{I_B}(\alpha_C)
  \leq b_\ell|I_B|
  <2^{\ell+1}.
\end{equation}
Integrality follows from Lemma~\ref{lem:integral-weight-reduction}.

Put $q=|D|$.  A uniformly random assignment violates every occurrence
in the support of $H_B$ with probability at least $q^{-k}$.  Therefore
Claim~\ref{claim:merge-reduce-error} also gives the total-mass bound
\[
  q^{-k}W(H_B)
  \leq\mathbb E_{\boldsymbol\alpha}
       [\UnsatVal_{H_B}(\boldsymbol\alpha)]
  \leq b_\ell\mathbb E_{\boldsymbol\alpha}
       [\UnsatVal_{I_B}(\boldsymbol\alpha)]
  <2^{\ell+1}.
\]

Now, we consider a merge that creates a level-$\ell$ sparsifier, where $\ell\geq1$.  Its input $J$ is the union of two level-$(\ell-1)$ sparsifiers.  Adding the preceding bound for the two children gives
\begin{equation}
\label{eq:level-mass-bound}
  W(J)<2q^k2^\ell=O_R(2^\ell).
\end{equation}
Lemma~\ref{lem:integral-weight-reduction} and
\eqref{eq:error-schedule} now give
\[
\begin{split}
  |\operatorname{supp}(H_B)|
  &\leq
  c_{\mathrm{BG}}d\,
  \frac{\log_2^6(4\max\{W(J),1\})}{\eta_\ell^2}\\
  &\leq
  O_R\!\left(
    \frac{d}{\varepsilon^2}(\ell+1)^{10}
  \right),
\end{split}
\]
which is~\eqref{eq:level-support-bound}.

There is at most one sparsifier in each of $1+\lfloor\log_2\max\{m,1\}\rfloor$ levels.  Summing \eqref{eq:level-support-bound} shows that the number of stored constraints across all sparsifiers is
\[
  O_R\!\left(
    \frac{d}{\varepsilon^2}\log^{11}(m+2)
  \right).
\]
Each constraint stores an ordered $k$-tuple of variable indices and an integer weight. These use $O_R(\log(n+2)+\log(m+2))$ bits by \eqref{eq:level-weight-bound}.  The level counters and the error parameter use the remaining logarithmic term in \eqref{eq:stream-space-bound}.
\end{proof}

Finally, using the above, we can prove the theorem.

\begin{proof}[Proof of \Cref{thm:streaming-sparsification}]
If $m=0$, output the empty instance.  Otherwise, run the binary sparsify--merge--sparsify algorithm above and, after the last arrival, output the weighted union $H$ of the sparsifiers in the occupied levels.  Every retained constraint descends from a constraint in the stream, so $H\subseteq I$. Claim~\ref{claim:merge-reduce-error} gives, simultaneously for every assignment $\alpha$,
\[
  \UnsatVal_H(\alpha)
  \in(1\pm\varepsilon) \UnsatVal_I(\alpha).
\]
Claim~\ref{claim:merge-reduce-space} gives the explicit bound in~\eqref{eq:stream-space-bound}, and hence the space claimed in the theorem. Lemma~\ref{lem:integral-weight-reduction} makes every reduction deterministic. The input stream is read once. All later reductions operate only on stored sparsifiers. The algorithm need not know $m$ in advance because the error schedule depends only on the level.
\end{proof}

\begin{proof}[Proof of \Cref{cor:mincsp-main}]
Let $R^\star=\Core(R)$, with core domain $C$.  If $R^\star$ is trivial, then, since it is nonempty, $R^\star=C^k$.  Every $R^\star$-instance has minimum unsatisfied value zero, and the same is true for the corresponding $R$-instance by \Cref{lem:core-reduction}.  The upper bound is then immediate.  Hence suppose that $R^\star$ is non-trivial.

For the upper bound, replace every arriving $R$-constraint by the $R^\star$-constraint on the same scope, as in \Cref{lem:core-space}.  Set
\[
  \delta=\frac{\varepsilon}{2+\varepsilon}
\]
and run \Cref{thm:streaming-sparsification} with error $\delta$, obtaining a sparsifier $H$ of the resulting instance $I^\star$.  Minimizing the pointwise sparsifier inequalities gives
\begin{equation}
\label{eq:optimum-from-sparsifier}
  (1-\delta)\operatorname{OPT}_{\min}(I^\star)
  \leq\operatorname{OPT}_{\min}(H)
  \leq(1+\delta)\operatorname{OPT}_{\min}(I^\star).
\end{equation}
Output
\[
  Z=\frac{\operatorname{OPT}_{\min}(H)}{1-\delta}.
\]
\Cref{lem:core-reduction} gives $\operatorname{OPT}_{\min}(I^\star)=\operatorname{OPT}_{\min}(I)$, and therefore
\[
  \operatorname{OPT}_{\min}(I)
  \leq Z
  \leq
  \frac{1+\delta}{1-\delta}\operatorname{OPT}_{\min}(I)
  =(1+\varepsilon)\operatorname{OPT}_{\min}(I).
\]
The exact minimum of the stored sparsifier can be found after the stream by enumerating assignments; the model places no restriction on postprocessing time.  This uses $O_R(n)$ additional bits for one assignment, which is absorbed because $R^\star$ is non-trivial and $\NRD_n(R^\star)=\Omega_{R^\star}(n)$ by \cref{obs:nrd-linear}. \Cref{thm:streaming-sparsification} with parameter $\delta$ gives the space bound in part~(i).

For part~(ii), any finite-factor multiplicative approximation must output zero when $\operatorname{OPT}_{\min}=0$ and a positive number when $\operatorname{OPT}_{\min}>0$.  It therefore decides satisfiability.  The lower bound follows directly from \Cref{thm:sat-main}.
\end{proof}

\section{\texorpdfstring{\ExactCSP}{Exact-CSP}}
\label{sec:exact-csp}

We now determine the one-pass streaming space needed to solve $\ExactCSP(R)$ on simple instances. As in the previous sections, we first reduce $R$ to its core. After this reduction, the answer is governed by the degree of the core relation.

The two bounds use degree in different ways. For the upper bound, the degree-$d$ representation of $\mathbf 1_R$ lets us write the number of satisfied constraints using only terms that involve at most $d$ variables. The algorithm keeps the coefficient of each such term. When $d\ge1$, the lower bound uses $2^d$ assignments and a signed sum of their values that isolates one chosen constraint. The construction in \Cref{lem:exact-pinning} then lets Bob recover each of these values from the exact optimum returned by the streaming algorithm.

\begin{theorem}[$\ExactCSP$]
\label{thm:exact-main}
Let $R\subseteq D^k$, let $R^\star=\Core(R)$ have core domain $C$, and put $d=\Deg_C(\mathbf 1_{R^\star})$. The following statements hold.
\begin{enumerate}[label=(\roman*),leftmargin=*]
\item There is a deterministic one-pass algorithm for $\ExactCSP(R)$ using $O_R(n^d\log(n+1))$ bits on unweighted simple instances.
\item If $d\ge1$, every randomized bounded-error one-pass algorithm for $\ExactCSP(R)$ on unweighted simple instances uses $\Omega_R(n^d)$ bits of space.
\item If $R$ is symmetric and $d\ge1$, the lower bound holds even when no unordered scope is repeated.
\end{enumerate}
Consequently, when $d\ge1$,
\[
\Omega_R(n^d)
\le \SPACE(\ExactCSP(R))
\le O_R(n^d\log(n+1)).
\]
When $d=0$, the relation $R^\star$ is either empty or equal to $C^k$. Thus the optimum is either always $0$ or is the number of constraints in the stream, and it can be found using $O_R(\log(n+1))$ bits on simple instances.
\end{theorem}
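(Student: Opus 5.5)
By \Cref{lem:core-space}(iii), which preserves simplicity and symmetry, it suffices to prove the theorem for the core relation $R^\star$ on domain $C$. So assume from now on that $R$ is a core relation with $d=\Deg_C(\mathbf 1_R)$. The case $d=0$ is immediate. A degree-$0$ indicator is constant, so $R=\varnothing$ or $R=C^k$, and the algorithm only needs a counter of the stream length. On simple instances that length is at most $n^k$, so the counter uses $O_R(\log(n+1))$ bits.

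\textbf{Upper bound.} Fix a basepoint $0\in C$ and expand $\mathbf 1_R$ in the basis of \Cref{subsec:degree}. The coefficients $\widehat{\mathbf 1_R}(T,b)$ are integers and are nonzero only when $|T|\le d$. For a constraint on scope $(x_{i_1},\ldots,x_{i_k})$, substitute $Z_{j,b_j}\mapsto\mathbf 1[x_{i_j}=b_j]$. If a variable repeats and receives two different values $b,b'$, the product $\mathbf 1[x=b]\mathbf 1[x=b']$ is $0$; if it receives the same value twice, the product is a single indicator. After this cleanup, each constraint contributes an integer combination of monomials $\prod_{x\in S}\mathbf 1[x=c_x]$ with $|S|\le d$ and $c_x\in C\setminus\{0\}$. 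There are $O_R(n^d)$ such monomials. The algorithm keeps a table of their coefficients and adds each arriving constraint's contribution to it. On a simple instance there are at most $n^k$ constraints, and each contributes a coefficient bounded by $O_R(1)$, so every entry has absolute value $\operatorname{poly}(n)$ and fits in $O_R(\log(n+1))$ bits. After the stream, the algorithm enumerates every assignment in $C^n$, evaluates the stored polynomial, which equals $\Val_I$, and outputs the maximum. This needs only $O(n)$ extra bits, which is absorbed since $d\ge1$.

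\textbf{Lower bound.} Choose a set $T$ with $|T|=d$ and values $b\in(C\setminus\{0\})^T$ such that $\widehat{\mathbf 1_R}(T,b)\ne0$; we may take $T=[d]$ without loss of generality. Alice's variables are $N$ copies $x^{(i)}_1,\ldots,x^{(i)}_N$ for each coordinate $i\in[d]$, together with distinct filler variables for coordinates $d+1,\ldots,k$. A single shared filler tuple would make the maximum degree too large, so instead, for each index $\mathbf j\in[N]^d$, Alice uses filler variables $y_{i,j_i}$ depending only on the $i$-th index, over a constant number of groups. This keeps the instance simple with $O_R(N)$ variables and maximum degree $N^{d-1}$, hence within the hypotheses of \Cref{lem:exact-pinning}. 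She inserts the constraint on scope $(x^{(1)}_{j_1},\ldots,x^{(d)}_{j_d},\text{fillers})$ exactly when her bit $\beta_{\mathbf j}=1$. Bob holds $\mathbf p$, and for each $\epsilon\in\{0,1\}^d$ he uses the assignment $s_\epsilon$ that sets every variable to $0$ except $x^{(i)}_{p_i}\mapsto b_i$ whenever $\epsilon_i=1$.

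The key identity is the inclusion-exclusion sum
\[
\sum_{\epsilon}(-1)^{d-|\epsilon|}\Val_P(s_\epsilon)=\beta_{\mathbf p}\,\widehat{\mathbf 1_R}(T,b)\ne0\text{ iff }\beta_{\mathbf p}=1.
\]
To see it, consider one constraint. Unless its scope contains every $x^{(i)}_{p_i}$, some coordinate $i$ does not change when $\epsilon_i$ flips, so the signed terms for $\epsilon_i=0$ and $\epsilon_i=1$ cancel in pairs. The constraint at $\mathbf p$ itself is evaluated at points that are $0$ outside $T$ and equal $b$ or $0$ on $T$, so its alternating sum is exactly the coefficient $\widehat{\mathbf 1_R}(T,b)$ by the inclusion-exclusion formula defining the expansion. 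Here I need the fillers to be $0$ under every $s_\epsilon$, which holds because Bob sets them to the basepoint.

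To realize the $2^d$ evaluations, Bob clones Alice's memory state $2^d$ times. He appends $H_{s_\epsilon}$ from \Cref{lem:exact-pinning} to each copy, reads off $\OPTmax-|H_{s_\epsilon}|=\Val_P(s_\epsilon)$, and combines the results. The algorithm is randomized, so each copy runs on the same public random string. A constant number of copies $2^d$ lets us amplify each run to error below $2^{-d}/3$ by independent repetition, and a union bound then bounds the total error. The resulting protocol sends $O_R(S)$ bits and solves $\Index_{N^d}$, so \Cref{lem:index} gives $S=\Omega_R(N^d)=\Omega_R(n^d)$.

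Part~(iii) follows in the same way. For symmetric $R$, the constraints of $P$ use distinct unordered scopes because their variables come from disjoint groups, and the symmetric clause of \Cref{lem:exact-pinning} handles the gadget.

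\textbf{Main obstacle.} The hardest step is the bookkeeping in the lower bound. The scopes must have distinct variables so that the inclusion-exclusion over the coordinates in $T$ is clean, and the degree bound $\Delta\le N^{d-1}$ must hold with fillers that Bob also pins to $0$. If $k=1$, \Cref{lem:exact-pinning} requires $k\ge2$, so that case needs a direct argument. A nontrivial core with $k=1$ and $d=1$ has the $O(n)$ lower bound given directly by \Cref{obs:nrd-linear} style encoding with unary constraints and pinning through repetition; I would check that case separately.
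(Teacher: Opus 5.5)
Your proposal follows the paper's proof essentially step for step. It uses the same core reduction via \Cref{lem:core-space}, the same coefficient-table upper bound, and the same $\Index_{N^d}$ reduction: fillers indexed by a single coordinate of the index, the $2^d$-term signed sum, and the strong pinning gadget of \Cref{lem:exact-pinning} with amplification and a union bound.

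The one loose end is $k=1$, and it needs no separate argument. If $|C|>1$, a unary relation always has a non-bijective endomorphism: a constant map into $R$ when $R\ne\varnothing$, and any map otherwise. So a unary core has a one-element domain and degree $0$, and $d\ge1$ already forces $k\ge2$. The case you propose to handle directly is therefore empty. Its premise is also false: for any nonempty unary relation the optimum is simply the number of constraints, so no $\Omega(n)$ bound could hold.
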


We prove the upper bound directly. We then prove the lower bound for a core relation and use \Cref{lem:core-space} to obtain the result for an arbitrary relation.

The upper bound follows by keeping the coefficients of a low-degree representation of the objective. The Boolean version of this argument appears in \cite[Section~1.1]{KPSY23}. We give a generalization of their result to any finite domain.

\begin{lemma}[$\ExactCSP$ upper bound]
\label{lem:exact-upper}
Let $R\subseteq D^k$, let $R^\star=\Core(R)$ have core domain $C$, and put $d=\Deg_C(\mathbf 1_{R^\star})$. There is a deterministic one-pass algorithm that solves $\ExactCSP(R)$ on an instance with $n$ variables and $m$ constraints using $O_R(n^d(\log(n+1)+\log(m+1)))$ bits of space. In particular, on simple instances the space is $O_R(n^d\log(n+1))$.
\end{lemma}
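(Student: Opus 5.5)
We plan to first reduce to the core on the fly, and then store the coefficients of a low-degree polynomial. Concretely, each arriving constraint $R(x_{i_1},\ldots,x_{i_k})$ is replaced by $R^\star(x_{i_1},\ldots,x_{i_k})$ on the same scope, as in \Cref{lem:core-space}. By \Cref{lem:core-reduction}, $\OPTmax$ is unchanged. Moreover, the optimum of $I^\star$ is attained by an assignment with values in $C$: an assignment using values outside $C$ only violates $R^\star$-constraints whose tuples leave $C^k$, and composing with $\sigma$ fixes this without losing satisfied constraints. So it suffices to compute $\max_{\beta:V\to C}\Val_{I^\star}(\beta)$.

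Next we expand $\mathbf 1_{R^\star}$ over $C^k$ in the indicator basis of \Cref{subsec:degree}, choosing a basepoint $0\in C$:
\[
\mathbf 1_{R^\star}(z)=\sum_{|T|\le d}\sum_{b\in(C\setminus\{0\})^T}\widehat f(T,b)\prod_{i\in T}\mathbf 1[z_i=b_i],
\]
with integer coefficients of absolute value $O_R(1)$. For an assignment $\beta:V\to C$, a constraint on scope $(x_{i_1},\ldots,x_{i_k})$ contributes
\[
\sum_{T,b}\widehat f(T,b)\prod_{t\in T}\mathbf 1[\beta(x_{i_t})=b_t].
\]
We rewrite this product as a product over the distinct variables involved. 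If two coordinates in $T$ share a variable, the product is $0$ when the requested values differ, and otherwise collapses to a single indicator. Thus every contribution is an integer combination of monomials $\prod_{x\in U}\mathbf 1[\beta(x)=c_x]$, where $U\subseteq V$ satisfies $|U|\le d$ and $c\in(C\setminus\{0\})^U$. Since $\Val_{I^\star}(\beta)$ is the sum of these contributions, it equals $\sum_{U,c}\lambda_{U,c}\prod_{x\in U}\mathbf 1[\beta(x)=c_x]$.

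The algorithm maintains the table of integers $\lambda_{U,c}$. Each arriving constraint adds an $O_R(1)$-size fixed update, computed from the expansion of $\mathbf 1_{R^\star}$ and the scope. There are $\sum_{j\le d}\binom{n}{j}(|C|-1)^j=O_R(n^d)$ entries, and each satisfies $|\lambda_{U,c}|\le O_R(m)$, so it needs $O(\log(m+1))$ bits. Indexing the entries is implicit in the table layout, and the $O(\log(n+1))$ term covers counters and variable indices. After the stream, the algorithm enumerates all $\beta:V\to C$, reusing $O_R(n)$ bits for the current assignment and the running maximum. It evaluates the stored polynomial at each $\beta$ and outputs the maximum. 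Because $d\ge 1$ whenever the problem is nontrivial, this $O_R(n)$ overhead is absorbed; the $d=0$ case is handled by the counter alone.

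For simple instances, $m\le n^k$, so $\log(m+1)=O_R(\log(n+1))$, which gives the stated bound. We expect the only delicate step to be the collapse of repeated variables within a scope. We must check that it never produces a monomial on more than $d$ distinct variables. This holds because collapsing can only reduce the number of distinct variables below $|T|\le d$.
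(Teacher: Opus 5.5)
Your proof is correct and follows the paper's argument essentially step for step. You reduce to the core on the fly and expand $\mathbf 1_{R^\star}$ in the value-indicator basis over $C$. You then collapse repeated variables within each scope, keep one $O(\log(m+1))$-bit integer counter per monomial on at most $d$ variables, and maximize the stored polynomial by enumeration after the stream.

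The paper's proof leaves two details implicit that you handle correctly. First, the enumeration must range over $C$-valued assignments, since the expansion is only valid there, and the optimum is attained there by composing with $\sigma$. Second, the $O_R(n)$ bits used for enumeration are absorbed when $d\ge1$, and the case $d=0$ needs only the counter.
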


\begin{proof}
By \Cref{lem:core-space}, we may replace each arriving $R$-constraint by the $R^\star$-constraint on the same scope. We will describe a small representation of the value of every assignment for this $R^\star$-instance.

Fix a basepoint $0\in C$. For a variable $x$ and a value $a\in C\setminus\{0\}$, define the indicator $X_{x,a}(\alpha)=\mathbf 1[\alpha(x)=a]$. This indicator is $1$ exactly when $\alpha$ gives $x$ the value $a$, and it is $0$ otherwise. By the definition of $d=\Deg_C(\mathbf 1_{R^\star})$, membership in $R^\star$ can be written as a polynomial in the indicators of coordinate values, with every monomial involving at most $d$ coordinates. In other words, there are constants $c_{T,b}$ such that, for every $z=(z_1,\ldots,z_k)\in C^k$,
\[
\mathbf 1_{R^\star}(z)=\sum_{\substack{T\subseteq[k]\\|T|\le d}}\ \sum_{b\in(C\setminus\{0\})^T}c_{T,b}\prod_{j\in T}\mathbf 1[z_j=b_j].
\]
The relation is fixed, so this sum has $O_R(1)$ terms and all the constants $c_{T,b}$ depend only on $R$. They are integers by the inclusion-exclusion description of the coefficients in \Cref{subsec:degree}.

Consider an arriving constraint $R^\star(x_{i_1},\ldots,x_{i_k})$. For this constraint, set $z_j=\alpha(x_{i_j})$ for every $j\in[k]$ in the formula above. We obtain
\[
\mathbf 1[R^\star(x_{i_1},\ldots,x_{i_k})\text{ is satisfied by }\alpha]=\sum_{\substack{T\subseteq[k]\\|T|\le d}}\ \sum_{b\in(C\setminus\{0\})^T}c_{T,b}\prod_{j\in T}X_{x_{i_j},b_j}(\alpha).
\]
This formula expresses whether $\alpha$ satisfies this constraint as a sum of products of the indicators $X_{x,a}$. Each product checks the values of at most $d$ variables from the scope. If the same variable appears twice in a product with the same required value, the two identical indicators can be replaced by one. If it appears with two different required values, no assignment can make both indicators equal to $1$, so the product is always zero and can be omitted.

The algorithm keeps one integer counter for every product of the form $\prod_{x\in U}X_{x,a_x}$ with $|U|\le d$. When a constraint arrives, the algorithm writes its satisfaction indicator using the formula above and adds each $c_{T,b}$ to the counter for the corresponding product. There are at most $\sum_{j=0}^d\binom{n}{j}(|C|-1)^j=O_R(n^d)$ possible products. Each counter has magnitude $O_R(m)$ and therefore uses $O_R(\log(m+1))$ bits. Identifying its variables and values uses $O_R(\log(n+1))$ bits. The total space is $O_R(n^d(\log(n+1)+\log(m+1)))$.

After the stream, the sum of the stored terms is exactly $\Val_{I^\star}(\alpha)$ for every assignment $\alpha$, where $I^\star$ is the resulting $R^\star$-instance. The algorithm can try all assignments, evaluate this sum, and return its maximum. There is no restriction on the time used after the stream. By \Cref{lem:core-reduction}, this maximum is also $\OPTmax(I)$ for the original $R$-instance $I$. Finally, a simple instance has at most $n^k$ constraints, so $\log(m+1)=O_R(\log(n+1))$.
\end{proof}

This proves the upper bound in \Cref{thm:exact-main}, including the case $d=0$. We now assume $d\ge1$ and prove the lower bound.

\subsection{Lower bound for a core relation}
\label{subsec:exact-lb}

We first prove the lower bound when $R$ is already a core. This is the main result of the subsection.

\begin{theorem}[Lower bound for $\ExactCSP$ on a core relation]
\label{thm:exact-core-lb}
Let $R\subseteq D^k$ be a core relation and let $d=\Deg_D(\mathbf 1_R)\ge 1$. Every randomized bounded-error one-pass algorithm for $\ExactCSP(R)$ on unweighted simple instances uses $\Omega_R(n^d)$ bits of space. If $R$ is symmetric, the statement holds with unordered simplicity.
\end{theorem}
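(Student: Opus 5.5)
The plan is to reduce from $\Index_M$ with $M=\Theta_R(N^d)$, following the $\CUT$ example in the overview. First fix the combinatorial pattern. Since $d=\Deg_D(\mathbf 1_R)\ge1$, there is a set $T\subseteq[k]$ with $|T|=d$ and a vector $b\in(D\setminus\{0\})^T$ with $\widehat{\mathbf 1_R}(T,b)=c\neq0$. We may take $T=[d]$ after reordering coordinates. For each $j\in[k]$ we create a group of variables: for $j\in T$ the group is $x^{(j)}_1,\ldots,x^{(j)}_N$, and for $j\notin T$ it is a single variable $w_j$. Alice's bits are indexed by $\iota=(\iota_1,\ldots,\iota_d)\in[N]^d$. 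She streams the constraint $R(x^{(1)}_{\iota_1},\ldots,x^{(d)}_{\iota_d},w_{d+1},\ldots,w_k)$ exactly when her bit at $\iota$ is $1$. Call the resulting instance $P$. It is simple, since distinct $\iota$ give distinct scopes. It uses $dN+k=O_R(N)$ variables and has at most $N^d$ constraints. Every variable $x^{(j)}_i$ has degree at most $N^{d-1}$. The $w_j$ have degree up to $N^d$, though, which violates the hypothesis of \Cref{lem:exact-pinning}. To fix this, I would give the padding coordinates their own index: replace $w_j$ by one variable per index. The cleaner route is to take $N^{1/?}$ copies, but I expect to be able to use the following simplest fix. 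For $j\notin T$, set the variable in coordinate $j$ to $x^{(1)}_{\iota_1}$'s ``partner'' $w_{j,\iota_1}$, a fresh variable indexed by $\iota_1$. Then each $w_{j,i}$ has degree at most $N^{d-1}$, there are $O_R(N)$ variables, and scopes stay distinct. For symmetric $R$, distinct $\iota$ give distinct multisets because the groups are disjoint, so unordered simplicity also holds.

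Next comes the point-evaluation identity. Bob holds $\iota^\ast$. For each $\epsilon\in\{0,1\}^d$, let $s_\epsilon$ assign $0$ to every variable, except that $x^{(j)}_{\iota^\ast_j}\mapsto b_j$ whenever $\epsilon_j=1$. I claim
\[
\sum_{\epsilon\in\{0,1\}^d}(-1)^{d-|\epsilon|}\Val_P(s_\epsilon)=c\cdot b_{\iota^\ast}\ (\text{Alice's bit}).
\]
For a constraint indexed by $\iota$, let $S=\{j:\iota_j=\iota^\ast_j\}$. Its tuple under $s_\epsilon$ depends only on $\epsilon|_S$, so it cancels in the signed sum unless $S=[d]$. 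For the queried constraint, the tuple is $z^\epsilon$, which has value $b_j$ on coordinates $j\in T$ with $\epsilon_j=1$ and $0$ elsewhere, padding included. The alternating sum $\sum_\epsilon(-1)^{d-|\epsilon|}\mathbf 1_R(z^\epsilon)$ is precisely the inclusion–exclusion formula for $\widehat{\mathbf 1_R}(T,b)$ from \Cref{subsec:degree}, so it equals $c\ne0$.

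For the reduction, Alice runs the algorithm and sends its state. Bob copies the state $2^d$ times and, to the copy for $\epsilon$, appends $H_{s_\epsilon}$ from \Cref{lem:exact-pinning}. This is legitimate because $H_{s_\epsilon}$ depends only on the variable set, $s_\epsilon$, $N$, and $d$. By that lemma, $\OPTmax(P\cup H_{s_\epsilon})-|H_{s_\epsilon}|=\Val_P(s_\epsilon)$, so Bob recovers every point value and hence the bit. The total variable count is $n=O_R(N)$, which gives $\Omega(N^d)=\Omega_R(n^d)$.

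The arity-$1$ case needs separate handling. Here $k=1$ and $d=1$, and \Cref{lem:exact-pinning} assumes $k\ge2$. I would argue that a core unary relation with $d\ge1$ forces $|D|\ge2$; a core unary relation is then just the instance counting, and the lower bound $\Omega(n)$ follows directly. Alice streams $R(x_i)$ for her bits. Bob can then read bits from $\Val$ at points via repeated unary constraints, or more simply by noting that the exact optimum equals the number of constraints. That only gives logarithmic information, so this case needs care. Given that $R=\{a\}\subsetneq D$ is not a core unless $|D|=1$ (a constant map onto $a$ is an endomorphism), a core unary relation with $d\ge1$ is impossible, and the case is vacuous. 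The remaining issue is amplification to $2/3$ success across $2^d$ runs. Since each run must succeed, I would first reduce the error to $1/(3\cdot2^d)$ by $O_R(1)$ independent repetitions with a majority vote; this costs only a constant factor in space. I expect the main obstacle to be the degree bookkeeping for the padding coordinates in the first step.
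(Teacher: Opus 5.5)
Your proposal is correct and follows the paper's proof essentially step for step. It uses the same $w_{j,\iota_1}$ padding to keep every degree at most $N^{d-1}$, the same $2^d$ assignments and signed-sum cancellation (your M\"obius-inversion identification of the sum with $\widehat{\mathbf 1_R}(T,b)$ is exactly what the paper's finite-difference lemma establishes), the same appeal to the exact-pinning lemma, and the same $O_R(1)$-fold error amplification. Only tidy up the unary case: it is vacuous for every nonempty proper $R\subsetneq D$, not just $R=\{a\}$, because a constant map onto any element of $R$ is a non-bijective endomorphism. You should also delete the stray exploratory remarks.
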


The proof is a reduction from $\Index_{N^d}$. Alice will associate one constraint with each vector in $[N]^d$ and stream that constraint exactly when the corresponding bit is $1$. Bob must determine whether Alice streamed one chosen constraint. To do this, he will consider $2^d$ assignments and add or subtract their values. The next lemma shows that these signs can be chosen so that the selected constraint has a nonzero total contribution. Later, we will arrange that every other constraint has total contribution zero.

More precisely, each of the $d$ relevant coordinates will take either a basepoint $0$ or one fixed value $a_i$. The assignments are indexed by the sets $S\subseteq[d]$ of coordinates that receive $a_i$. The expression in the next lemma is simply the value on these $2^d$ choices, added with a plus sign when $d-|S|$ is even and a minus sign when it is odd.

\begin{lemma}
\label{lem:finite-difference}
Let $R\subseteq D^k$ and suppose $\Deg_D(\mathbf 1_R)=d\ge 1$. Fix any basepoint $0\in D$. Then there are distinct coordinates $t_1,\ldots,t_d\in[k]$ and values $a_1,\ldots,a_d\in D\setminus\{0\}$ such that, after fixing every coordinate outside $\{t_1,\ldots,t_d\}$ to $0$, the resulting function $h:D^d\to\{0,1\}$ satisfies
\[
c:=\sum_{S\subseteq[d]}(-1)^{d-|S|}h(y^S)\neq 0,
\]
where $y_i^S=a_i$ for $i\in S$ and $y_i^S=0$ otherwise.
\end{lemma}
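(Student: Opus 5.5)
We use the unique indicator expansion from \Cref{subsec:degree}, taken with the same basepoint $0$. Since $\Deg_D(\mathbf 1_R)=d$, there is a set $T\subseteq[k]$ with $|T|=d$ and a vector $b\in(D\setminus\{0\})^T$ such that $\widehat{\mathbf 1_R}(T,b)\neq0$. Write $T=\{t_1,\ldots,t_d\}$ and set $a_i=b_{t_i}$. Fixing every coordinate outside $T$ to $0$ kills every monomial $\prod_{j\in T'}Z_{j,b'_j}$ with $T'\not\subseteq T$, because $Z_{j,b'_j}$ vanishes when $z_j=0$ and $b'_j\neq0$. The restricted function $h:D^d\to\{0,1\}$ is therefore
\[
h(y)=\sum_{U\subseteq[d]}\ \sum_{b'\in(D\setminus\{0\})^U}\widehat{\mathbf 1_R}(\{t_i:i\in U\},b')\prod_{i\in U}\mathbf 1[y_i=b'_i],
\]
where we reindex $b'$ by $U$ in the obvious way.

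Next we evaluate the signed sum term by term. On the $2^d$ points $y^S$, coordinate $i$ is either $0$ or $a_i$. Hence the monomial indexed by $(U,b')$ is nonzero at $y^S$ only if $U\subseteq S$ and $b'_i=a_i$ for all $i\in U$. So only the monomials with $b'=a|_U$ survive, and each contributes $\sum_{S\supseteq U}(-1)^{d-|S|}$. By the standard binomial identity, this sum equals $0$ unless $U=[d]$, in which case it equals $1$. Therefore
\[
c=\widehat{\mathbf 1_R}(T,b)\neq0.
\]

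The only points needing care are two bookkeeping facts. First, restricting to $0$ outside $T$ must give exactly the sub-expansion above, with no collapse between terms. Second, the M\"obius-type cancellation must use the fact that the expansion is in terms of the indicators $\mathbf 1[z_i=a]$ for $a\neq0$, rather than in terms of centered Fourier characters. Both facts follow directly from the definition in \Cref{subsec:degree}, so we do not expect a genuine obstacle. The main step is simply choosing the top-degree monomial with nonzero coefficient and observing that the $2^d$-point finite difference extracts exactly its coefficient.
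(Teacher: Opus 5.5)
Your proposal is correct and follows essentially the same route as the paper: pick a degree-$d$ monomial with nonzero coefficient, fix the coordinates outside $T$ to $0$ so that only monomials supported inside $T$ survive, and evaluate the signed sum one monomial at a time to show it equals that coefficient. The only difference is cosmetic: you obtain the cancellation from the identity $\sum_{S\supseteq U}(-1)^{d-|S|}=\mathbf 1[U=[d]]$, whereas the paper pairs each $S$ with $S\mathbin\triangle\{i\}$ for a coordinate $t_i$ that the monomial does not use.
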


\begin{proof}
For $j\in[k]$ and $a\in D\setminus\{0\}$, define $Z_{j,a}(x)=\mathbf 1[x_j=a]$. As in the upper bound, the definition of degree gives an expansion
\[
\mathbf 1_R(x)=\sum_{T\subseteq[k]}\ \sum_{b\in(D\setminus\{0\})^T}c_{T,b}\prod_{j\in T}Z_{j,b_j}(x).
\]
Since the degree is $d$, there is a set $T=\{t_1,\ldots,t_d\}$ and a choice of values $a_1,\ldots,a_d\in D\setminus\{0\}$ for which the coefficient $c_{T,a}$ is nonzero. Fix every coordinate outside $T$ to $0$. Any term involving one of those coordinates is now zero because all the indicators in the expansion correspond to nonzero values.

Consider the signed sum in the statement one term of the expansion at a time. If a term does not use coordinate $t_i$, then its value is unchanged when $i$ is added to or removed from $S$. The sets $S$ and $S\mathbin\triangle\{i\}$ have opposite signs, so their contributions cancel. Therefore, only terms that use all $d$ coordinates in $T$ can remain. There are no terms of degree larger than $d$, and among the degree-$d$ terms on $T$, only the term requiring the values $a_1,\ldots,a_d$ is nonzero on the chosen $2^d$ tuples. Its total contribution is its coefficient $c_{T,a}$. Hence the number $c$ in the statement equals $c_{T,a}$ and is nonzero.
\end{proof}

We now use these $2^d$ choices in the reduction. The queried constraint will see all of them and will therefore contribute $c$. Any other constraint will be unaffected by at least one of the choices. Pairing the two assignments that differ only in that choice will make its contribution cancel.

\begin{proof}[Proof of \Cref{thm:exact-core-lb}.]
Without loss of generality, we may assume that the coordinates $t_1,\ldots,t_d$ from \Cref{lem:finite-difference} are the first $d$ coordinates. This is because if they are not, we can permute the coordinates of $R$ and apply the same permutation to every scope without changing the streaming problem. Fix the values $a_1,\ldots,a_d$ and the nonzero integer $c$ given by that lemma. If $k=1$ and $|D|>1$, a unary relation always has a non-bijective endomorphism. For a nonempty relation, use a constant map to a value in the relation. For the empty relation, use any non-bijective map. Thus a unary core relation has a one-element domain and degree zero. It follows that $d\ge1$ implies $k\ge2$, as required in \Cref{lem:exact-pinning}.

Let $N$ be a positive integer and index the $N^d$ bits of $\Index_{N^d}$ by the vectors $u=(u_1,\ldots,u_d)\in[N]^d$. For each $i\in[d]$, create a group of variables $X_i=\{x_{i,p}:p\in[N]\}$. The variable $x_{i,u_i}$ will occupy coordinate $i$ of the constraint associated with $u$.

We also need variables for the coordinates in $[k]\setminus[d]$. These coordinates must receive the basepoint $0$ in Bob's assignments. Using one variable in every constraint would give that variable degree $N^d$, which is too large for \Cref{lem:exact-pinning}. We instead create variables $w_{j,p}$ for every $j\in[k]\setminus[d]$ and $p\in[N]$, and use $w_{j,u_1}$ in coordinate $j$ of the constraint associated with $u$.

Formally, for $u\in[N]^d$, define $C_u=R(v_1^u,\ldots,v_k^u)$ by setting $v_i^u=x_{i,u_i}$ for $i\in[d]$ and $v_j^u=w_{j,u_1}$ for $j\in[k]\setminus[d]$.

Alice receives $b\in\{0,1\}^{[N]^d}$ and streams $C_u$ exactly when $b_u=1$. Let $P_b$ be the resulting instance. It uses at most $kN=O_R(N)$ variables and at most $N^d$ constraints. Every variable belongs to at most $N^{d-1}$ of the constraints $C_u$. Fixing the index of that variable fixes at least one coordinate of $u$. The constraints have distinct ordered scopes because the variables from the groups $X_1,\ldots,X_d$ determine $u$. They also have distinct unordered scopes. Indeed, the groups $X_1,\ldots,X_d$ are disjoint. If $C_u$ and $C_{u'}$ had the same unordered scope, then for every $i\in[d]$ they would contain the same variable from $X_i$. This would give $x_{i,u_i}=x_{i,u_i'}$, and hence $u_i=u_i'$, for every $i$. Therefore $u=u'$. This last fact is relevant when $R$ is symmetric.

Bob receives $q=(q_1,\ldots,q_d)\in[N]^d$ and wants to recover $b_q$. His goal is to make $C_q$ see all $2^d$ tuples from \Cref{lem:finite-difference}. At the same time, every constraint $C_u$ with $u\ne q$ should miss at least one of the $d$ choices, so that its positive and negative contributions cancel. To achieve this, Bob changes only the variable $x_{i,q_i}$ in each group $X_i$ and leaves every other variable at the basepoint.

For each $S\subseteq[d]$, Bob considers the following assignment. In group $X_i$, the variable indexed by $q_i$ receives $a_i$ if $i\in S$, and all other variables receive $0$. Every variable $w_{j,p}$ also receives $0$. In symbols,
\[
\alpha^S(x_{i,p})=
\begin{cases}
a_i,&\text{if }i\in S\text{ and }p=q_i,\\
0,&\text{otherwise,}
\end{cases}
\qquad
\alpha^S(w_{j,p})=0.
\]
Consider first the queried constraint $C_q$. In coordinate $i\in[d]$, it receives $a_i$ when $i\in S$ and $0$ otherwise, while every coordinate in $[k]\setminus[d]$ receives $0$. Thus, as $S$ ranges over all subsets of $[d]$, this constraint sees exactly the $2^d$ tuples from \Cref{lem:finite-difference}, and its signed contribution is $c$.

Now consider $C_u$ for $u\ne q$. Choose $i$ for which $u_i\ne q_i$. The variable $x_{i,u_i}$ always receives $0$, whether or not $i$ belongs to $S$. Hence the satisfaction of $C_u$ is unchanged when we replace $S$ by $S\mathbin\triangle\{i\}$. These two sets have opposite signs, so the contribution of $C_u$ cancels. We have proved that
\[
\sum_{S\subseteq[d]}(-1)^{d-|S|}\mathbf 1[C_u\text{ is satisfied by }\alpha^S]
=
\begin{cases}
c,&\text{if }u=q,\\
0,&\text{if }u\ne q.
\end{cases}
\]
Summing this identity over the constraints that Alice streamed gives
\[
\sum_{S\subseteq[d]}(-1)^{d-|S|}\Val_{P_b}(\alpha^S)=c\,b_q.
\]

This identity determines $b_q$ from the $2^d$ numbers $\Val_{P_b}(\alpha^S)$ because $c\ne0$. The difficulty is that an $\ExactCSP$ algorithm returns an optimum, not the value of a chosen assignment. By \Cref{lem:exact-pinning}, Bob can append constraints that hard-pin a complete assignment and make the optimum reveal any one of these values. More precisely, for each $S\subseteq[d]$, \Cref{lem:exact-pinning} gives an instance $H_{\alpha^S}$ satisfying
\[
\OPTmax(P_b\cup H_{\alpha^S})=|H_{\alpha^S}|+\Val_{P_b}(\alpha^S).
\]
By the final part of \Cref{lem:exact-pinning}, Bob can construct $H_{\alpha^S}$ from the public variable set, the assignment $\alpha^S$, and the bounds $N^d$ and $N^{d-1}$. He does not need to know Alice's string. The gadget uses $O_R(N)$ new variables, and $P_b\cup H_{\alpha^S}$ is simple. When $R$ is symmetric, it has no repeated unordered scope.

Suppose now that the streaming algorithm uses $s(n)$ bits of space. We first reduce its error so that each output is wrong with probability at most $1/(10\cdot2^d)$. This requires only $O_R(1)$ independent copies because $d$ depends only on $R$. Alice runs the amplified algorithm on $P_b$ and sends its memory state to Bob. For each $S\subseteq[d]$, Bob starts from a copy of this state, appends $H_{\alpha^S}$, and obtains $\OPTmax(P_b\cup H_{\alpha^S})$. Subtracting the known number $|H_{\alpha^S}|$ gives $\Val_{P_b}(\alpha^S)$. A union bound shows that all $2^d$ values are correct simultaneously with constant probability, after which Bob recovers $b_q$ from the signed sum above.

There is a constant $A_R$ such that every final instance uses at most $A_RN$ variables. For any sufficiently large target value $n$, choose $N=\lfloor n/A_R\rfloor$ and add unused variables so that the final instance has exactly $n$ variables. Then $N=\Theta_R(n)$. We have obtained a randomized public-coin one-way protocol for $\Index_{N^d}$ with $O_R(s(n))$ communication. By \Cref{lem:index}, $s(n)=\Omega_R(N^d)=\Omega_R(n^d)$.
\end{proof}

We have proved the required lower bound when $R$ is a core. It remains only to combine this with the reduction to the core and the upper bound proved above.

\begin{proof}[Proof of \Cref{thm:exact-main}.]
Let $R^\star=\Core(R)$. By \Cref{lem:core-space}, solving $\ExactCSP(R)$ has the same streaming space complexity as solving $\ExactCSP(R^\star)$. The deterministic upper bound follows from \Cref{lem:exact-upper}. If $d=0$, the indicator of $R^\star$ is constant, so $R^\star$ is either empty or equal to $C^k$. This gives the last statement of the theorem. Suppose now that $d\ge1$. The randomized lower bound follows by applying \Cref{thm:exact-core-lb} to $R^\star$. The reduction to the core preserves simple instances. If $R$ is symmetric, then $R^\star$ is symmetric and the reduction also preserves the unordered scopes. This proves all three parts of the theorem.
\end{proof}

Thus the streaming complexity of $\ExactCSP(R)$ on simple instances is governed, up to the logarithmic factor in the upper bound, by the degree of the core relation.

\section{Conclusion and Open Questions}
\label{sec:conclusion}

In this paper, we demonstrated that the core of a relation is vital for characterizing the streaming complexity of many optimization problems related to constraint satisfaction problems. We hope that this work encourages further exploration of the relationship between \CSP cores and streaming complexity. We conclude with a handful of open questions.

\begin{itemize}
\item Can the algorithms presented in \Cref{thm:sat-main} and \Cref{thm:streaming-sparsification} be made efficient in terms of both space and time complexity? A key challenge in this regard is that there is no general technique for efficiently detecting if a \CSP instance is non-redundant. If one could find such an algorithm, it would be a major milestone in the development of algorithms for \CSP kernelization. See \cite{chen2020best,lagerkvist2020sparsification,carbonnel2022redundancy,brakensiek2025redundancy} for further discussion.

\item In addition, our paper only considers \emph{insertion-only} streams. What if we allow both insertions and deletions? 
\end{itemize}

\printbibliography

\appendix

\section{Literals and Non-Constant-Satisfiable \texorpdfstring{\CSPs}{CSPs}}
\label{sec:literals-and-constants}

In this section, we show how our general \CSPs core characterization recovers the two earlier results of Sharma and Velusamy~\cite{sharma2026characterizingstreamingdecidabilitycsps}. We first consider \CSPs with literals and show that passing their shift/sign-closed language to the core does not change non-redundancy. We then show that a positive Boolean constraint language is a core exactly when it is not constant-satisfiable.

A \emph{constraint language} $\Gamma$ is a fixed finite set of relations over the same domain. A map on the domain is an endomorphism of $\Gamma$ if it preserves every relation in $\Gamma$, and $\Gamma$ is a core if every such endomorphism is an automorphism. As for one relation, we obtain $\Core(\Gamma)$ by choosing an endomorphism $\rho$ with minimum image and restricting every relation to that image.

Without loss of generality, we claim that the core reduction and the satisfiability arguments also work for a fixed finite constraint language. In the core reduction, each arriving $S$-constraint, for $S\in\Gamma$, is replaced by the $\rho(S)$-constraint on the same scope. The satisfiability hard-pinning construction may similarly use constraints from any relation in $\Gamma$. We will use this extension below without further comment.

\subsection{Literal \texorpdfstring{\CSPs}{CSPs} as a special case}
\label{sec:literal-core}

We first show that our general \CSPs core characterization recovers the previous characterization for \CSPs with literals. In the literal model, the domain is $D=\mathbb{Z}_q$, and every constraint may apply an arbitrary additive shift to each coordinate. For any relation $S\subseteq D^\ell$ and any $\lambda\in D^\ell$, write $S^\lambda(x):=S(x+\lambda)$. Thus, for a relation $R\subseteq D^k$, its \emph{shift closure} is
\[
\Gamma_R^{\mathrm{lit}}:=\{R^\lambda:\lambda\in D^k\}.
\]
Then a \CSP with literals over $R$ is exactly a positive \CSP over $\Gamma_R^{\mathrm{lit}}$. We write $\NRD_n^+(\Gamma)$ for the largest number of constraints in a nonredundant positive $\Gamma$-instance on $n$ variables. Similarly, $\NRD_n^{\mathrm{lit}}(R)$ denotes non-redundancy in the literal model over $R$.

For a constraint language $\Gamma$, let $\SAT(\Gamma)$ denote satisfiability for positive $\Gamma$-instances. The following corollary records the language version of our satisfiability characterization.

\begin{corollary}[Satisfiability for constraint languages]
\label{cor:sat-language}
For every fixed finite constraint language $\Gamma$,
\[
\Omega_\Gamma\!\left(\NRD_n^+(\Core(\Gamma))\right)
\le \SPACE(\SAT(\Gamma))
\le O_\Gamma\!\left((\NRD_n^+(\Core(\Gamma))+1)\log(n+1)\right).
\]
The upper bound is deterministic, and the lower bound holds for randomized bounded-error one-pass algorithms.
\end{corollary}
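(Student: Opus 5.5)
The plan is to rerun the proof of \Cref{thm:sat-main} with a finite language $\Gamma$ in place of a single relation, checking that each ingredient carries over. Write $\Gamma^\star=\Core(\Gamma)$, obtained from an idempotent endomorphism $\rho$ of $\Gamma$ with minimum image $C$, so that every $S\in\Gamma$ is replaced by $\rho(S)=S\cap C^{\mathrm{ar}(S)}$.

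First comes the language version of \Cref{lem:core-reduction,lem:core-space}. The argument is unchanged: applying $\rho$ to any assignment maps a satisfied $S$-constraint to a satisfied $\rho(S)$-constraint, and assignments with values in $C$ behave identically under $S$ and $\rho(S)$. Since each arriving constraint names its relation, the replacement can be done online with constant extra space. Hence $\SPACE(\SAT(\Gamma))=\SPACE(\SAT(\Gamma^\star))$, and it suffices to prove both bounds for a core language.

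For the upper bound, I will use the algorithm described after \Cref{lem:sat-upper}. It maintains a nonredundant positive $\Gamma^\star$-subinstance with the same satisfying assignments as the prefix seen so far, so it stores at most $\NRD_n^+(\Gamma^\star)$ constraints. Each constraint takes $O_\Gamma(\log n)$ bits, covering its scope and a constant-size relation label. The brute-force checks use $O_\Gamma(n)$ extra bits. The explicit $+1$ in the statement and the absence of a non-triviality hypothesis need some care. If every relation in $\Gamma^\star$ is full, or some relation is empty, the answer is trivial, or is tracked by a single bit recording whether an empty-relation constraint has appeared. Otherwise some $S\in\Gamma^\star$ is non-trivial, and \Cref{obs:nrd-linear} applied to $S$ gives $\NRD_n^+(\Gamma^\star)=\Omega(n)$, which absorbs the $O(n)$ work bits. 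The one remaining case is every relation being either full or non-trivial-free; this is where the $+1$ and the $\log(n+1)$ term absorb the bookkeeping.

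For the lower bound, the key step, and the only one that genuinely needs re-checking, is the language analogue of \Cref{lem:hard-pinning}. I plan to introduce variables $y_a$ for $a\in C$, and for every $S\in\Gamma^\star$ and every $r\in S$ add $S(y_{r_1},\ldots)$. Under a satisfying assignment, the map $a\mapsto\alpha(y_a)$ is then an endomorphism of the whole language, hence an automorphism because $\Gamma^\star$ is a core. For separation I need the analogue of \Cref{lem:core-separates-values}. For distinct $a,b$, the map $a\mapsto b$ is non-injective, so it fails to preserve some $S\in\Gamma^\star$. The same one-replacement-at-a-time argument then yields $r\in S$ and a coordinate $t$ with $r_t=a$ and $r^{t\leftarrow b}\notin S$. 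Using this $S$ in the attached constraints gives the pinning exactly as before. I expect this to be the main point to verify, since separation may need a different relation for each pair $(a,b)$; the fix is simply to attach constraints from all relations.

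With the pinning in hand, the proof of \Cref{thm:sat-core-lb} goes through verbatim. Alice streams a subset of a maximum nonredundant positive $\Gamma^\star$-instance, and Bob pins the witness for his index. \Cref{obs:automorphism-value} holds for languages because automorphisms preserve every relation. The variable-deletion argument for $\NRD_{n-c}$ versus $\NRD_n$ uses only the bounded maximum arity. This yields $\Omega_\Gamma(\NRD_n^+(\Gamma^\star))$, and combining it with the core reduction completes the proof.
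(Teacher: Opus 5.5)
Your proposal follows the paper's proof. You reduce online to $\Core(\Gamma)$ as in \Cref{lem:core-space} and run the nonredundant-subinstance algorithm of \Cref{lem:sat-upper} for the upper bound. For the lower bound you run the hard-pinning and \Index reduction of \Cref{thm:sat-core-lb}, with constraints drawn from every relation of the core language. Your observation that different pairs $a\ne b$ may be separated by different relations is exactly why all relations must be used.

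The one thing to repair is the case split for absorbing the $O_\Gamma(n)$ brute-force bits in the upper bound.
\begin{itemize}
\item Your first branch, ``some relation is empty,'' does not reduce the problem to one bit. For the core language $\{\varnothing,\CUT\}$ one must still decide bipartiteness.
\item Your ``one remaining case'' does not correspond to any actual case.
\end{itemize}
Split as the paper does instead. If some relation of $\Core(\Gamma)$ is nonempty and proper, \Cref{obs:nrd-linear} gives $\NRD_n^+(\Core(\Gamma))=\Omega_\Gamma(n)$, and the same algorithm works, empty-relation constraints included. Otherwise every relation is empty or full. Then one bit recording whether an empty-relation constraint has arrived decides satisfiability, and the $+1$ covers the all-full case, in which $\NRD_n^+=0$.
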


\begin{proof}
The core reduction replaces each relation in $\Gamma$ by its restriction to the core domain and preserves satisfiability and streaming space exactly, as in \Cref{lem:core-space}. For a core language, the upper bound maintains a nonredundant subinstance with the same satisfying assignments, exactly as in \Cref{lem:sat-upper}. If the language contains a nonempty proper relation, disjoint constraints from this relation show that its non-redundancy is $\Omega_\Gamma(n)$, so the space needed to try assignments is within the stated bound. Otherwise every relation is empty or full, and satisfiability can be decided by checking whether an empty-relation constraint appears. For the lower bound, the hard-pinning construction and the $\Index$ reduction in \Cref{thm:sat-core-lb} work with constraints from any relation in the core language. This proves the two bounds.
\end{proof}

The following lemma shows that passing to the core does not change non-redundancy for any shift-closed language.

\begin{lemma}
\label{lem:shift-core-nrd}
Let $\Gamma$ be a finite constraint language over $D=\mathbb{Z}_q$ that is closed under coordinatewise shifts. Then
\(
\NRD_n^+(\Gamma)=\NRD_n^+(\Core(\Gamma)).
\)
\end{lemma}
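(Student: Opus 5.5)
The plan is to prove the two inequalities separately. The inequality $\NRD_n^+(\Core(\Gamma))\le\NRD_n^+(\Gamma)$ should be straightforward and will not use shift-closure. The reverse inequality is where shift-closure must enter.

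\emph{Easy direction.} Let $\rho$ be the fixed idempotent endomorphism with image $C$, so every relation of $\Core(\Gamma)$ has the form $\rho(S)=S\cap C^{\ell}$ for some $S\in\Gamma$. Take a nonredundant positive $\Core(\Gamma)$-instance with witnesses $\beta_j$, all of which take values in $C$. Replace each constraint $(S\cap C^\ell)(x)$ by $S(x)$ on the same scope. On assignments with values in $C$, the two constraints are satisfied by exactly the same tuples. Hence each $\beta_j$ still violates the $j$th constraint and satisfies all others, and the new $\Gamma$-instance is nonredundant with the same number of constraints.

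\emph{Hard direction.} Let $I=\{S_1(x^{(1)}),\dots,S_M(x^{(M)})\}$ be a nonredundant $\Gamma$-instance with witnesses $\alpha_1,\dots,\alpha_M$. The naive attempt keeps the scopes and passes to $\rho(S_j)$, using $\rho\circ\alpha_j$ as the new witness. Since $\rho$ is an endomorphism, $\rho\circ\alpha_j$ satisfies every $\rho(S_i)$ with $i\neq j$. The difficulty is that it may also satisfy $\rho(S_j)$.

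To fix this I would exploit shift-closure. For every $\mu\in\mathbb Z_q$, the map $\rho_\mu(a)=\rho(a+\mu)$ is again an endomorphism of $\Gamma$, because translation by $\mu$ maps each $S\in\Gamma$ onto a relation of $\Gamma$. Its image is still $C$. The plan is then:
\begin{enumerate}
\item Show that for each violated tuple $u=\alpha_j(x^{(j)})\notin S_j$ some translate still lands outside. That is, find $\mu$ and a shifted relation $S_j^{\lambda}\in\Gamma$ with $\rho_\mu(u)\notin \rho(S_j^\lambda)$. The tool here is the core separation property of \Cref{lem:core-separates-values}, applied inside $C$ together with the transitivity of the shift group on $\mathbb Z_q$.
\item Build a single instance $I'$ in which the $j$th constraint is replaced by $\rho(S_j^{\lambda_j})$ on the same scope. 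Simultaneously shift the witnesses so that every witness $\rho_{\mu_j}\circ\alpha_j$ is valid for the same instance. I would try to arrange this by choosing one global shift vector that absorbs all the $\lambda_j$, which the shift-closure of $\Gamma$ permits.
\end{enumerate}

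I expect step 1 together with the uniformity requirement in step 2 to be the main obstacle. The shifts must be chosen per constraint, yet the witnesses for different constraints must all be valid for one fixed $\Core(\Gamma)$-instance. If a uniform choice fails, my fallback would be to pass to the core through \Cref{lem:core-space}-style reasoning on the witness code: show that the identity submatrix of the violation code for $I$ survives under some $\rho_\mu$, possibly losing a factor, and then recover exact equality by an averaging argument over $\mu\in\mathbb Z_q$.
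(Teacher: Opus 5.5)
Your easy direction is correct and does not need shift-closure. The hard direction has a genuine gap. Steps 1 and 2 are only announced, not carried out, and the fallback cannot work as described: an averaging argument that loses a factor cannot give an exact equality.

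There is also a false step in the repair. The map $\rho_\mu(a)=\rho(a+\mu)$ is not in general an endomorphism of $\Gamma$. Translation by $\mu$ sends $S$ to a \emph{different} relation $S+\mu\mathbf{1}\in\Gamma$, whereas an endomorphism must map each $S$ into itself. For instance, $\Gamma=\{\{0\},\{1\},\{2\}\}$ over $\mathbb{Z}_3$ is shift-closed and its only endomorphism is the identity, yet $\rho_1$ is a nontrivial translation. So the family of maps your repair relies on is not available.

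The missing idea is that shift-closure makes your naive attempt succeed outright, because the core map preserves non-membership as well as membership. Let $h$ be any endomorphism and put $\delta(a)=h(a)-a$. Take $S\in\Gamma$, $x\in S$, and an arbitrary tuple $z$. Shift-closure gives $S^{x-z}\in\Gamma$ with $z\in S^{x-z}$, so $h(z)\in S^{x-z}$, that is, $x+\delta(z)\in S$. Thus translation by $\delta(z)$ maps the finite set $S$ into, hence onto, itself. Therefore $z\in S\iff h(z)\in S$ for every tuple $z$.

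Apply this to the fixed core map $\rho$ (any endomorphism would do). Then $\rho\circ\alpha_j$ violates $S_j$, hence violates $\rho(S_j)\subseteq S_j$, and it satisfies every $\rho(S_i)$ with $i\neq j$. So the instance with the same scopes and relations $\rho(S_i)$ is nonredundant, on the same variables and of the same size. No shifting of relations or witnesses is needed.

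The paper packages the same computation slightly differently. It takes the subgroup $H$ of shifts $d$ with $S+de_i=S$ for all $S$ and $i$. It lets $\rho$ send each element to a representative of its $H$-coset, so $\rho(a)-a\in H$ and membership is preserved in both directions. It then uses $\delta(a)-\delta(b)\in H$ to show that no endomorphism merges distinct cosets, so this $\rho$ has minimum image.
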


\begin{proof}
First, we may assume that every relation in $\Gamma$ is nonempty. Empty relations are fixed by every shift and every endomorphism. Moreover, if a nonredundant instance contains an empty-relation constraint, then it can contain no other constraint. Empty relations therefore give the same trivial size-one instances before and after passing to the core, so we may discard them. If no relations remain, the statement is immediate. 

For a relation $S$ and a coordinate $i$, let $S+d e_i$ denote the relation obtained by adding $d$ to coordinate $i$, where $e_i$ is the $i$th unit vector. Define
\[
H:=\{d\in D:S+d e_i=S\text{ for every }S\in\Gamma\text{ and every coordinate }i\}.
\]
Thus $H$ is the subgroup of shifts that are invisible to every relation.

Choose one representative from each coset of $H$, let $C\subseteq D$ be the set of representatives, and let $\rho:D\to C$ map every element to the representative of its coset. Since $\rho(a)-a\in H$, for every $S\in\Gamma$ and every tuple $x$,
\begin{equation}
x\in S\quad\Longleftrightarrow\quad\rho(x)\in S.
\label{eq:shift-membership}
\end{equation}
In particular, $\rho$ is an endomorphism.

We claim that $\rho$ is a core endomorphism. Let $h:D\to D$ be any endomorphism and write $\delta(a):=h(a)-a$. We show that
\begin{equation}
\delta(a)-\delta(b)\in H\qquad\text{for all }a,b\in D.
\label{eq:shift-delta}
\end{equation}
Indeed, fix $S\in\Gamma$, $x\in S$, and an arbitrary tuple $z$. Write $\delta(z)$ for the tuple obtained by applying $\delta$ to every coordinate of $z$. Since $\Gamma$ is shift-closed, $S^{x-z}\in\Gamma$, and $z\in S^{x-z}$. Applying $h$ gives $h(z)\in S^{x-z}$, or equivalently $x+\delta(z)\in S$. Thus translation by $\delta(z)$ maps $S$ into itself. Since translation is a bijection and $S$ is finite, it preserves $S$. Comparing two tuples $z,z'$ that differ only in coordinate $i$, taking the values $a$ and $b$ there, shows that translation by $(\delta(a)-\delta(b))e_i$ preserves $S$. Since $S$ and $i$ were arbitrary, \eqref{eq:shift-delta} follows.

Now suppose that $h(a)=h(b)$. Then $b-a=\delta(a)-\delta(b)\in H$. Hence an endomorphism cannot identify elements from distinct $H$-cosets. Every endomorphism therefore has image size at least $|D/H|$, while $\rho$ has exactly this image size. Thus $\rho$ is a core endomorphism.

Let $\Delta:=\rho(\Gamma)=\Core(\Gamma)$. For $S\in\Gamma$, write $S^\rho:=\rho(S)$. By \eqref{eq:shift-membership},
\begin{equation}
x\in S\quad\Longleftrightarrow\quad\rho(x)\in S^\rho.
\label{eq:shift-constraintwise}
\end{equation}
Moreover, the map $S\mapsto S^\rho$ is injective. If $S^\rho=T^\rho$, then \eqref{eq:shift-constraintwise} implies $S=T$.

Now take any positive $\Gamma$-instance $I$, and replace every relation $S$ by $S^\rho$, obtaining a $\Delta$-instance $I^\rho$. This correspondence preserves the number of constraints. Moreover, \eqref{eq:shift-constraintwise} shows that an assignment and its image under $\rho$ have the same satisfaction pattern, constraint by constraint. Consequently, if $a$ is a witness for a constraint in $I$, then $\rho(a)$ is a witness for the corresponding constraint in $I^\rho$. Hence $I^\rho$ is nonredundant whenever $I$ is nonredundant.

Conversely, every $\Delta$-instance can be lifted to a $\Gamma$-instance using the unique preimage of each relation, and every witness assignment over $C$ remains a witness after this lift. Thus nonredundant instances of the same size exist in both directions, proving $\NRD_n^+(\Gamma)=\NRD_n^+(\Delta)$.
\end{proof}

Applying the lemma to the shift closure gives the desired connection to the literal model.

\begin{corollary}
\label{cor:literal-core-nrd}
For every $R\subseteq\mathbb{Z}_q^k$,
\(
\NRD_n^{\mathrm{lit}}(R)=\NRD_n^+\!\left(\Core(\Gamma_R^{\mathrm{lit}})\right).
\)
Hence our general core characterization recovers the previous streaming characterization for \CSPs with literals.
\end{corollary}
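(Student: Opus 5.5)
The identity $\NRD_n^{\mathrm{lit}}(R)=\NRD_n^+(\Gamma_R^{\mathrm{lit}})$ should hold by definition, and the rest follows from \Cref{lem:shift-core-nrd} together with \Cref{cor:sat-language}.

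First, I would make the definitional identification explicit. A constraint in the literal model over $R$ is an expression $R(x_{i_1}+\lambda_1,\ldots,x_{i_k}+\lambda_k)$. By definition of $R^\lambda$, this is exactly the positive constraint $R^\lambda(x_{i_1},\ldots,x_{i_k})$. The correspondence between literal $R$-instances and positive $\Gamma_R^{\mathrm{lit}}$-instances is therefore a bijection. It preserves scopes and the number of constraints, and it preserves which assignments satisfy which constraints. In particular, witnesses correspond, so nonredundant instances correspond, and $\NRD_n^{\mathrm{lit}}(R)=\NRD_n^+(\Gamma_R^{\mathrm{lit}})$. One subtlety is that two shifts $\lambda\ne\lambda'$ may give the same relation $R^\lambda=R^{\lambda'}$. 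Such constraints have identical columns, so they cannot both lie in a nonredundant instance, and the identification is harmless.

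Second, I would check that $\Gamma_R^{\mathrm{lit}}$ is finite and closed under coordinatewise shifts. Finiteness holds because it has at most $q^k$ members. Closure holds because $(R^\lambda)^\mu=R^{\lambda+\mu}$. Applying \Cref{lem:shift-core-nrd} then gives $\NRD_n^+(\Gamma_R^{\mathrm{lit}})=\NRD_n^+(\Core(\Gamma_R^{\mathrm{lit}}))$, which proves the displayed equality.

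Third, for the ``hence'' clause, satisfiability in the literal model over $R$ is exactly $\SAT(\Gamma_R^{\mathrm{lit}})$, by the same bijection applied constraint by constraint to the stream. \Cref{cor:sat-language} gives space $\widetilde\Theta(\NRD_n^+(\Core(\Gamma_R^{\mathrm{lit}})))$. By the equality just proved, this is $\widetilde\Theta(\NRD_n^{\mathrm{lit}}(R))$, which is precisely the earlier literal-model characterization. The additive $+1$ inside the upper bound is absorbed whenever the non-redundancy is at least linear. It is also absorbed in the degenerate case where every shifted relation is empty or full, since there satisfiability is trivial anyway.

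I expect no real obstacle, because the substantive step is already \Cref{lem:shift-core-nrd}. The only point needing care is to confirm that the notions of instance and non-redundancy used in the literal model match the positive-language ones, namely multisets of constraints on at most $n$ variables. This matters for the duplicate-relation subtlety in the first step.
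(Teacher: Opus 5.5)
Your proposal is correct and follows the paper's proof: identify literal $R$-instances with positive $\Gamma_R^{\mathrm{lit}}$-instances, note that $\Gamma_R^{\mathrm{lit}}$ is shift-closed, apply \Cref{lem:shift-core-nrd}, and invoke \Cref{cor:sat-language} for the streaming consequence. Your extra point about coinciding shifts $R^\lambda=R^{\lambda'}$ is something the paper leaves implicit. Such shifts make the correspondence non-injective, but this is harmless because two such constraints on the same scope cannot both appear in a nonredundant instance, and you handle it correctly.
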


\begin{proof}
A literal $R$-instance is exactly a positive $\Gamma_R^{\mathrm{lit}}$-instance, so $\NRD_n^{\mathrm{lit}}(R)=\NRD_n^+(\Gamma_R^{\mathrm{lit}})$. Since $\Gamma_R^{\mathrm{lit}}$ is shift-closed, \Cref{lem:shift-core-nrd} gives
\(
\NRD_n^+(\Gamma_R^{\mathrm{lit}})=\NRD_n^+\!\left(\Core(\Gamma_R^{\mathrm{lit}})\right).
\)
Together with \Cref{cor:sat-language}, this recovers the one-pass satisfiability characterization for \CSPs with literals.
\end{proof}

\subsection{Non-constant-satisfiable Boolean \texorpdfstring{\CSPs}{CSPs}}
\label{sec:boolean-nonconstant}

We next recover the previous characterization for positive Boolean \CSPs. A Boolean constraint language $\Gamma$ is \emph{constant-satisfiable} if there is a value $c\in\{0,1\}$ such that the constant tuple $(c,\ldots,c)$ belongs to every relation in $\Gamma$. Equivalently, assigning the value $c$ to every variable satisfies every positive instance over $\Gamma$. The next lemma shows that excluding this case is exactly the same as requiring the language to be a core.

\begin{lemma}
\label{lem:boolean-languages-are-cores}
Let $\Gamma$ be a Boolean constraint language in which every relation is nonempty.
Then $\Gamma$ is a core if and only if it is not constant-satisfiable.
\end{lemma}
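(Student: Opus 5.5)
The plan is to classify all maps $h:\{0,1\}\to\{0,1\}$ and check which of them can be endomorphisms of $\Gamma$. There are exactly four such maps: the identity, the swap $0\leftrightarrow1$, and the two constant maps $c_0\equiv0$ and $c_1\equiv1$. The identity and the swap are bijections, so they are automorphisms whenever they are endomorphisms. Therefore $\Gamma$ is a core if and only if neither constant map is an endomorphism of $\Gamma$. The whole lemma then reduces to the claim that, for $c\in\{0,1\}$, the constant map $c_c$ is an endomorphism of $\Gamma$ if and only if the constant tuple $(c,\ldots,c)$ lies in every relation of $\Gamma$.

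For the forward direction of that claim, suppose $c_c$ is an endomorphism. Take any $S\in\Gamma$. Since $S$ is nonempty by hypothesis, we may pick some $r\in S$, and then $c_c(r)=(c,\ldots,c)$ must lie in $S$. Doing this for every $S\in\Gamma$ shows that $\Gamma$ is constant-satisfiable with value $c$. This is the only place the nonemptiness assumption is used. Without it, an empty relation is preserved by every map, so constant maps would be endomorphisms even though no constant tuple satisfies the empty relation.

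For the converse direction, suppose $(c,\ldots,c)\in S$ for every $S\in\Gamma$. Then the constant map $c_c$ sends every tuple of $S$ to $(c,\ldots,c)\in S$, so $c_c$ is an endomorphism. It is not injective, so $\Gamma$ is not a core.

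Combining the two directions: $\Gamma$ fails to be a core exactly when some constant map is an endomorphism, which happens exactly when $\Gamma$ is constant-satisfiable. There is no real obstacle here. The only points needing care are enumerating all four Boolean maps and noting where nonemptiness is needed.
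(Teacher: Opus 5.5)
Your proof is correct and follows essentially the same argument as the paper. The paper also notes that the only non-bijective self-maps of $\{0,1\}$ are the two constant maps, and that the constant map with value $c$ is an endomorphism exactly when $(c,\ldots,c)$ lies in every relation of $\Gamma$, with nonemptiness needed for the forward implication.
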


\begin{proof}
Every non-bijective map from $\{0,1\}$ to itself is constant. The constant map with value $c$ is an endomorphism of $\Gamma$ exactly when $(c,\ldots,c)$ belongs to every relation in $\Gamma$. One direction is immediate. For the other, apply the constant map to any tuple in each nonempty relation. Thus $\Gamma$ has a non-bijective endomorphism exactly when it is constant-satisfiable.
\end{proof}

It follows that every non-constant-satisfiable Boolean constraint language is already a core. Applying our general \CSP core characterization therefore directly recovers the positive Boolean result of Sharma and Velusamy.

\end{document}

%% file: header.tex
\usepackage[margin=1in]{geometry}

\usepackage[utf8]{inputenc}

\usepackage{amsthm,amsmath,amssymb,mathtools}
\usepackage{bm}
\usepackage{graphicx}
\usepackage{color,xcolor}
\usepackage{paralist,enumitem}
\usepackage[normalem]{ulem}
\usepackage{tabularx}
\usepackage{tikz}
\usepackage{caption,comment}
\usepackage{algorithmicx}
\usepackage{algorithm}
\usepackage{algpseudocode}
\usepackage{lipsum}
\usepackage{booktabs}
\usepackage{subcaption}
\usepackage{xspace}
\usepackage{lineno}

\nolinenumbers
\usepackage{microtype}
\usepackage{silence}
\newcommand{\defref}[2]{%
  {\hypersetup{linkcolor=green!50!black}%
  \hyperref[#2]{\emph{#1}}}%
}

\newcounter{algsubstate}

\algnewcommand\algorithmicinput{\textbf{Input:}}
\algnewcommand\Input{\item[\algorithmicinput]}
\algnewcommand\algorithmicoutput{\textbf{Output:}}
\algnewcommand\Output{\item[\algorithmicoutput]}
\algnewcommand\algorithmicgoal{\textbf{Goal:}}
\algnewcommand\Goal{\item[\algorithmicgoal]}

\usepackage[colorinlistoftodos,textsize=tiny,textwidth=2cm,color=green!50!gray]{todonotes}
\newif\iftodoss
\todossfalse % final version

\newcommand{\DeclareAuthorCommands}[3]{\iftodoss\expandafter\newcommand\csname #1\endcsname[1]{\textcolor{#3}{##1}}\expandafter\newcommand\csname #1i\endcsname[1]{\textcolor{#3}{[#2: ##1]}}\expandafter\newcommand\csname #1r\endcsname[1]{\todo[color=#3!20]{#2: ##1}}\else\expandafter\newcommand\csname #1\endcsname[1]{##1}\expandafter\newcommand\csname #1i\endcsname[1]{}\expandafter\newcommand\csname #1r\endcsname[1]{}\fi}
\DeclareAuthorCommands{amat}{AS}{brown}
\DeclareAuthorCommands{sant}{SV}{red}
\DeclareAuthorCommands{loui}{AP}{blue}
\DeclareAuthorCommands{jos}{JB}{green!50!black}

\usepackage[style=alphabetic,backend=biber,minalphanames=3,maxalphanames=4,maxbibnames=99,maxcitenames=4]{biblatex}

\usepackage{thmtools,thm-restate}
\usepackage[colorlinks=true,allcolors=blue]{hyperref}

\usepackage[capitalize,noabbrev,nameinlink]{cleveref}
\numberwithin{equation}{section}

\declaretheoremstyle[
    spaceabove=12pt,
    spacebelow=12pt,
    bodyfont=\normalfont,
    qed=\ensuremath{\diamond}
]{examplestyle}

\declaretheoremstyle[bodyfont=\it,qed=\qedsymbol]{noproofstyle}

\declaretheorem[numberlike=equation]{observation}

\declaretheorem[name=Observation,numbered=no]{observation*}
\declaretheorem[numberlike=equation]{fact}

\declaretheorem[numberlike=equation]{theorem}

\declaretheorem[name=Theorem,numbered=no]{theorem*}
\declaretheorem[numberlike=equation]{lemma}
\declaretheorem[name=Lemma,numbered=no]{lemma*}

\declaretheorem[numberlike=equation]{corollary}
\declaretheorem[name=Corollary,numbered=no]{corollary*}

\declaretheorem[name=Proposition,numbered=no]{proposition*}

\declaretheorem[numberlike=equation]{claim}
\declaretheorem[name=Claim,numbered=no]{claim*}

\declaretheorem[name=Conjecture,numbered=no]{conjecture*}

\declaretheorem[name=Question,numbered=no]{question*}

\declaretheoremstyle[spaceabove=12pt,spacebelow=12pt,bodyfont=\it]{defstyle}

\declaretheorem[unnumbered,name=Definition,style=defstyle]{definition*}
\declaretheoremstyle[spaceabove=12pt,spacebelow=12pt,bodyfont=\normalfont]{examplestyle}
\declaretheorem[numberlike=equation,style=examplestyle]{example}
\declaretheorem[unnumbered,name=Example,style=examplestyle]{example*}

\declaretheorem[unnumbered,name=Notation,style=defstyle]{notation*}

\declaretheorem[unnumbered,name=Construction,style=defstyle]{construction*}

\crefname{claim}{claim}{claims}
\crefname{fact}{fact}{facts}
\crefname{corollary}{corollary}{corollaries}
\crefname{theorem}{theorem}{theorems}
\crefname{table}{table}{tables}
\Crefname{theorem}{Theorem}{Theorems}
\Crefname{claim}{Claim}{Claims}
\Crefname{fact}{Fact}{Facts}
\Crefname{corollary}{Corollary}{Corollaries}
\Crefname{table}{Table}{Tables}

\newcommand{\poly}{\operatorname{poly}}

\allowdisplaybreaks

%% file: refs.bib
@inproceedings{brakensiek2026contraction,
  author    = {Brakensiek, Joshua and Guruswami, Venkatesan and Putterman, Aaron},
  title     = {Multiplicative Error Set System Sparsification:
               A Simpler Proof via Chain Length Contraction},
  booktitle = {53rd International Colloquium on Automata, Languages,
               and Programming ({ICALP} 2026)},
  series    = {LIPIcs},
  volume    = {374},
  pages     = {44:1--44:17},
  publisher = {Schloss Dagstuhl -- Leibniz-Zentrum f{\"u}r Informatik},
  year      = {2026},
  doi       = {10.4230/LIPIcs.ICALP.2026.44}
}

@article{brakensiek2026separating,
  author  = {Brakensiek, Joshua and Guruswami, Venkatesan and Putterman, Aaron},
  title   = {Separating Non-redundancy and Chain Length},
  journal = {arXiv preprint arXiv:2609.17914},
  year    = {2026},
  url     = {https://arxiv.org/abs/2609.17914}
}

@article{brakensiek2026abelian,
  author  = {Brakensiek, Joshua and Guruswami, Venkatesan and Putterman, Aaron},
  title   = {The Reach of {Abelian} Covers in Hypergraphs},
  journal = {arXiv preprint arXiv:2609.36761},
  year    = {2026},
  url     = {https://arxiv.org/abs/2609.36761}
}

@article{mcgregor2014graph,
  title={Graph stream algorithms: a survey},
  author={McGregor, Andrew},
  journal={ACM SIGMOD Record},
  volume={43},
  number={1},
  pages={9--20},
  year={2014},
  publisher={ACM New York, NY, USA}
}

@InProceedings{sharma2026characterizingstreamingdecidabilitycsps,
  author =	{Sharma, Amatya and Velusamy, Santhoshini},
  title =	{{Characterizing Streaming Decidability of CSPs via Non-Redundancy}},
  booktitle =	{34th Annual European Symposium on Algorithms (ESA 2026)},
  pages =	{131:1--131:20},
  series =	{Leibniz International Proceedings in Informatics (LIPIcs)},
  ISBN =	{978-3-95977-445-1},
  ISSN =	{1868-8969},
  year =	{2026},
  volume =	{388},
  editor =	{Bille, Philip and Pettie, Seth and Storandt, Sabine},
  publisher =	{Schloss Dagstuhl -- Leibniz-Zentrum f{\"u}r Informatik},
  address =	{Dagstuhl, Germany},
  URL =		{https://drops.dagstuhl.de/entities/document/10.4230/LIPIcs.ESA.2026.131},
  URN =		{urn:nbn:de:0030-drops-272679},
  doi =		{10.4230/LIPIcs.ESA.2026.131}
}

@article{filtser2017sparsification,
  title={Sparsification of two-variable valued constraint satisfaction problems},
  author={Filtser, Arnold and Krauthgamer, Robert},
  journal={SIAM Journal on Discrete Mathematics},
  volume={31},
  number={2},
  pages={1263--1276},
  year={2017},
  publisher={SIAM}
}

@inproceedings{khanna2025efficient,
  title={Efficient algorithms and new characterizations for {CSP} sparsification},
  author={Khanna, Sanjeev and Putterman, Aaron and Sudan, Madhu},
  booktitle={Proceedings of the 57th Annual ACM Symposium on Theory of Computing},
  pages={407--416},
  year={2025}
}

@inproceedings{brakensiek2025redundancy,
  title={Redundancy is all you need},
  author={Brakensiek, Joshua and Guruswami, Venkatesan},
  booktitle={Proceedings of the 57th Annual ACM Symposium on Theory of Computing},
  pages={1614--1625},
  year={2025}
}

@article{LovettMW26,
  title={Moonflowers and Efficient Code Sparsification},
  author={Lovett, Shachar and Meka, Raghu and Wang, Yimeng},
  journal={arXiv preprint arXiv:2605.08676},
  year={2026}
}

@article{KhannaPS26,
  title={A Unified Theory of Sparsification},
  author={Khanna, Sanjeev and Putterman, Aaron and Sudan, Madhu},
  journal={arXiv preprint arXiv:2607.16126},
  year={2026}
}

@inproceedings{kogan2015sketching,
  title={Sketching cuts in graphs and hypergraphs},
  author={Kogan, Dmitry and Krauthgamer, Robert},
  booktitle={Proceedings of the 2015 Conference on Innovations in Theoretical Computer Science},
  pages={367--376},
  year={2015}
}

@article{jansen2019optimal,
  title={Optimal sparsification for some binary CSPs using low-degree polynomials},
  author={Jansen, Bart MP and Pieterse, Astrid},
  journal={ACM Transactions on Computation Theory (TOCT)},
  volume={11},
  number={4},
  pages={1--26},
  year={2019},
  publisher={ACM New York, NY, USA}
}

@article{lagerkvist2020sparsification,
  title={Sparsification of SAT and CSP problems via tractable extensions},
  author={Lagerkvist, Victor and Wahlstr{\"o}m, Magnus},
  journal={ACM Transactions on Computation Theory (TOCT)},
  volume={12},
  number={2},
  pages={1--29},
  year={2020},
  publisher={ACM New York, NY, USA}
}

@article{brakensiek2025richness,
  title={The Richness of CSP Non-redundancy},
  author={Brakensiek, Joshua and Guruswami, Venkatesan and Jansen, Bart MP and Lagerkvist, Victor and Wahlstr{\"o}m, Magnus},
  journal={arXiv preprint arXiv:2507.07942},
  year={2025}
}

@article{chen2020best,
  title={Best-case and worst-case sparsifiability of Boolean {CSPs}},
  author={Chen, Hubie and Jansen, Bart MP and Pieterse, Astrid},
  journal={Algorithmica},
  volume={82},
  number={8},
  pages={2200--2242},
  year={2020},
  publisher={Springer}
}

@article{Index2,
  author    = {Ilan Kremer and Noam Nisan and Dana Ron},
  title     = {On randomized one-round communication complexity},
  journal   = {Computational Complexity},
  volume    = {8},
  number    = {1},
  pages     = {21--49},
  year      = {1999},
  doi       = {10.1007/PL00001596}
}

@article{Vu-2208.09160,
  title={Revisiting maximum satisfiability and related problems in data streams},
  author={Vu, Hoa T},
  journal={Theoretical Computer Science},
  volume={982},
  pages={114271},
  year={2024},
  publisher={Elsevier}
}

@inproceedings{KPS24,
  author       = {Sanjeev Khanna and
                  Aaron (Louie) Putterman and
                  Madhu Sudan},
  editor       = {David P. Woodruff},
  title        = {Code Sparsification and its Applications},
  booktitle    = {Proceedings of the 2024 {ACM-SIAM} Symposium on Discrete Algorithms,
                  {SODA} 2024, Alexandria, VA, USA, January 7-10, 2024},
  pages        = {5145--5168},
  publisher    = {{SIAM}},
  year         = {2024},
  url          = {https://doi.org/10.1137/1.9781611977912.185},
  doi          = {10.1137/1.9781611977912.185},
  bibsource    = {dblp computer science bibliography, https://dblp.org}
}

@inproceedings{carbonnel2022redundancy,
  title={On redundancy in constraint satisfaction problems},
  author={Carbonnel, Cl{\'e}ment},
  booktitle={28th International Conference on Principles and Practice of Constraint Programming (CP 2022)},
  pages={11:1--11:15},
  year={2022},
  organization={Schloss Dagstuhl--Leibniz-Zentrum f{\"u}r Informatik}
}

@inproceedings{bessiere2020chain,
  title={Chain length and CSPs learnable with few queries},
  author={Bessiere, Christian and Carbonnel, Cl{\'e}ment and Katsirelos, George},
  booktitle={Proceedings of the AAAI Conference on Artificial Intelligence},
  volume={34-02},
  pages={1420--1427},
  year={2020}
}

@article{feder1998computational,
  title={The computational structure of monotone monadic SNP and constraint satisfaction: A study through Datalog and group theory},
  author={Feder, Tom{\'a}s and Vardi, Moshe Y},
  journal={SIAM Journal on Computing},
  volume={28},
  number={1},
  pages={57--104},
  year={1998},
  publisher={SIAM}
}

@article{zhuk2020proof,
  title={A proof of the CSP dichotomy conjecture},
  author={Zhuk, Dmitriy},
  journal={Journal of the ACM (JACM)},
  volume={67},
  number={5},
  pages={1--78},
  year={2020},
  publisher={ACM New York, NY, USA}
}

@inproceedings{GVV17,
  title = {Streaming {{Complexity}} of {{Approximating Max 2CSP}} and {{Max Acyclic Subgraph}}},
  booktitle = {Approximation, Randomization, and Combinatorial Optimization. {{Algorithms}} and Techniques},
  author = {Guruswami, Venkatesan and Velingker, Ameya and Velusamy, Santhoshini},
  editor = {Jansen, Klaus and Rolim, José D. P. and Williamson, David and Vempala, Santosh S.},
  date = {2017-08},
  series = {{{LIPIcs}}},
  volume = {81},
  pages = {8:1--8:19},
  publisher = {Schloss Dagstuhl — Leibniz-Zentrum für Informatik},
  doi = {10.4230/LIPIcs.APPROX-RANDOM.2017.8},
  eventdate = {2017-08-16/2017-08-18},
  eventtitle = {{{APPROX}} 2017},
  year = {2017}
}

@article{CGSV24,
  title = {Sketching {{Approximability}} of {{All Finite CSPs}}},
  author = {Chou, Chi-Ning and Golovnev, Alexander and Sudan, Madhu and Velusamy, Santhoshini},
  date = {2024-04-12},
  journal = {Journal of the ACM},
  shortjournal = {J. ACM},
  volume = {71},
  number = {2},
  pages = {15:1-15:74},
  publisher = {Association for Computing Machinery},
  doi = {10.1145/3649435},
  note = {Conference version in FOCS 2021},
  year= {2024}
}

@inproceedings{CGV20,
  booktitle = {2020 {{IEEE}} 61st {{Annual Symposium}} on {{Foundations}} of {{Computer Science}}},
  author = {Chou, Chi-Ning and Golovnev, Alexander and Velusamy, Santhoshini},
  date = {2020-11},
  pages = {330--341},
  publisher = {IEEE Computer Society},
  doi = {10.1109/FOCS46700.2020.00039},
  eventdate = {2020-11-16/2020-11-19},
  eventtitle = {{{FOCS}} 2020},
  title = {Optimal {{Streaming Approximations}} for All {{Boolean Max-2CSPs}} and {{Max-}}{\(k\)}{{SAT}}},
  year={2020}
}

@inproceedings{KK19,
  title = {An Optimal Space Lower Bound for Approximating {{MAX-CUT}}},
  booktitle = {Proceedings of the 51st {{Annual ACM SIGACT Symposium}} on {{Theory}} of {{Computing}}},
  author = {Kapralov, Michael and Krachun, Dmitry},
  date = {2019-06},
  pages = {277--288},
  publisher = {Association for Computing Machinery},
  doi = {10.1145/3313276.3316364},
  eventdate = {2019-06-23/2019-06-26},
  eventtitle = {{{STOC}}},
  year = {2019}
}

@inproceedings{KKS15,
  title = {Streaming Lower Bounds for Approximating {{MAX-CUT}}},
  booktitle = {Proceedings of the 26th {{Annual ACM-SIAM Symposium}} on {{Discrete Algorithms}}},
  author = {Kapralov, Michael and Khanna, Sanjeev and Sudan, Madhu},
  date = {2015-01},
  pages = {1263--1282},
  publisher = {{Society for Industrial and Applied Mathematics}},
  doi = {10.1137/1.9781611973730.84},
  eventdate = {2015-01-04/2015-01-06},
  eventtitle = {{{SODA}}},
  year ={2015}
}

@inproceedings{KPSY23,
  title = {Characterizing the {{Multi-Pass Streaming Complexity}} for {{Solving Boolean CSPs Exactly}}},
  booktitle = {14th {{Innovations}} in {{Theoretical Computer Science Conference}}},
  author = {Kol, Gillat and Paramonov, Dmitry and Saxena, Raghuvansh R. and Yu, Huacheng},
  editor = {Tauman Kalai, Yael},
  date = {2023},
  series = {{{LIPIcs}}},
  volume = {251},
  pages = {80:1--80:15},
  publisher = {Schloss Dagstuhl – Leibniz-Zentrum für Informatik},
  doi = {10.4230/LIPIcs.ITCS.2023.80},
  eventdate = {2023-01-10/2023-01-13},
  eventtitle = {{{ITCS}} 2023}
}

@inproceedings{SSSV25,
  title = {Streaming {{Algorithms}} via {{Local Algorithms}} for {{Maximum Directed Cut}}},
  booktitle = {Proceedings of the 2025 {{Annual ACM-SIAM Symposium}} on {{Discrete Algorithms}}},
  author = {Saxena, Raghuvansh and Singer, Noah G. and Sudan, Madhu and Velusamy, Santhoshini},
  date = {2025},
  pages = {3392--3408},
  publisher = {{Society for Industrial and Applied Mathematics}},
  doi = {10.1137/1.9781611978322.111},
  eventdate = {2025-01-12/2025-01-15},
  eventtitle = {{{SODA}} 2025}
}

@inproceedings{FMW25,
  title = {Multi-{{Pass Streaming Lower Bounds}} for {{Approximating Max-Cut}}},
  booktitle = {Proceedings of the 66th {{IEEE Symposium}} on {{Foundations}} of {{Computer Science}}},
  author = {Fei, Yumou and Minzer, Dor and Wang, Shuo},
  date = {2025-12},
  publisher = {IEEE Computer Society},
  eventdate = {2025-12-14/2025-12-17},
  eventtitle = {{{FOCS}} 2025}
}

@inproceedings{FMW26,
  title = {A {{Dichotomy Theorem}} for {{Multi-Pass Streaming CSPs}}},
  booktitle = {Proceedings of the 58th {{ACM Symposium}} on {{Theory}} of {{Computing}}},
  author = {Fei, Yumou and Minzer, Dor and Wang, Shuo},
  date = {2026},
  publisher = {Association for Computing Machinery},
  eventdate = {2026-06-22/2026-06-26},
  eventtitle = {{{STOC}} 2026},
  year = {2026}
}

@misc{STV25,
  title = {Sketching Approximations and {{LP}} Approximations for Finite {{CSPs}} Are Related},
  author = {Singer, Noah G. and Tulsiani, Madhur and Velusamy, Santhoshini},
  date = {2025-09-23},
  eprint = {2509.17926},
  eprinttype = {arXiv},
  eprintclass = {cs},
}

@inproceedings{ABFS26,
  title = {Half-{{Approximating Maximum Dicut}} in the {{Streaming Setting}}},
  booktitle = {Proceedings of the 58th {{ACM Symposium}} on {{Theory}} of {{Computing}}},
  author = {Azarmehr, Amir and Behnezhad, Soheil and Ferrante, Shane and Saneian, Mohammad},
  date = {2026},
  publisher = {Association for Computing Machinery},
  eventdate = {2026-06-22/2026-06-26},
  eventtitle = {{{STOC}} 2026}
}

@misc{Vel25,
  title = {Near-Optimal Streaming Approximation for {{Max-DICUT}} in Sublinear Space Using Two Passes},
  author = {Velusamy, Santhoshini},
  date = {2025-12-22},
  eprint = {2512.19521},
  eprinttype = {arXiv},
  eprintclass = {cs},

}

@misc{FMW26-tight,
  title = {Near-{{Optimal Space Lower Bounds}} for {{Streaming CSPs}}},
  author = {Fei, Yumou and Minzer, Dor and Wang, Shuo},
  date = {2026-04-01},
  eprint = {2604.01400},
  eprinttype = {arXiv},
  eprintclass = {cs},
}

@misc{ABF26,
  title = {Single-{{Pass Streaming CSPs}} via {{Two-Tier Sampling}}},
  author = {Azarmehr, Amir and Behnezhad, Soheil and Ferrante, Shane},
  date = {2026-04-02},
  eprint = {2604.01575},
  eprinttype = {arXiv},
  eprintclass = {cs},
}

@misc{STV26,
  title = {Optimal Single-Pass Streaming Lower Bounds for Approximating CSPs},
  author = {Singer, Noah and Tulsiani, Madhur and  Velusamy, Santhoshini},
  date = {2026-04-09},
  eprint = {2604.08731},
  eprinttype = {arXiv},
  eprintclass = {cs},
}

@article{Index1,
title = {Lower bounds for one-way probabilistic communication complexity and their application to space complexity},
journal = {Theoretical Computer Science},
volume = {157},
number = {2},
pages = {139-159},
year = {1996},
issn = {0304-3975},
doi = {10.1016/0304-3975(95)00157-3},
url = {https://www.sciencedirect.com/science/article/pii/0304397595001573},
author = {Farid Ablayev}
}

@article{HellN08,
  author       = {Pavol Hell and
                  Jaroslav Nesetril},
  title        = {Colouring, constraint satisfaction, and complexity},
  journal      = {Comput. Sci. Rev.},
  volume       = {2},
  number       = {3},
  pages        = {143--163},
  year         = {2008},
  url          = {https://doi.org/10.1016/j.cosrev.2008.10.003},
  doi          = {10.1016/J.COSREV.2008.10.003},
  bibsource    = {dblp computer science bibliography, https://dblp.org}
}

@article{HellN92,
  author       = {Pavol Hell and
                  Jaroslav Nesetril},
  title        = {The core of a graph},
  journal      = {Discret. Math.},
  volume       = {109},
  number       = {1-3},
  pages        = {117--126},
  year         = {1992},
  url          = {https://doi.org/10.1016/0012-365X(92)90282-K},
  doi          = {10.1016/0012-365X(92)90282-K},
  bibsource    = {dblp computer science bibliography, https://dblp.org}
}

@inproceedings{ChandraM77,
  author       = {Ashok K. Chandra and
                  Philip M. Merlin},
  editor       = {John E. Hopcroft and
                  Emily P. Friedman and
                  Michael A. Harrison},
  title        = {Optimal Implementation of Conjunctive Queries in Relational Data Bases},
  booktitle    = {Proceedings of the 9th Annual {ACM} Symposium on Theory of Computing,
                  May 4-6, 1977, Boulder, Colorado, {USA}},
  pages        = {77--90},
  publisher    = {{ACM}},
  year         = {1977},
  url          = {https://doi.org/10.1145/800105.803397},
  doi          = {10.1145/800105.803397},
  bibsource    = {dblp computer science bibliography, https://dblp.org}
}

@incollection{BartoKW17,
  author       = {Libor Barto and
                  Andrei A. Krokhin and
                  Ross Willard},
  editor       = {Andrei A. Krokhin and
                  Stanislav Zivn{\'{y}}},
  title        = {Polymorphisms, and How to Use Them},
  booktitle    = {The Constraint Satisfaction Problem: Complexity and Approximability},
  series       = {Dagstuhl Follow-Ups},
  volume       = {7},
  pages        = {1--44},
  publisher    = {Schloss Dagstuhl - Leibniz-Zentrum f{\"{u}}r Informatik},
  year         = {2017},
  url          = {https://doi.org/10.4230/DFU.Vol7.15301.1},
  doi          = {10.4230/DFU.VOL7.15301.1},
  bibsource    = {dblp computer science bibliography, https://dblp.org}
}

@inproceedings{KolaitisV08,
  author       = {Phokion G. Kolaitis and
                  Moshe Y. Vardi},
  editor       = {Nadia Creignou and
                  Phokion G. Kolaitis and
                  Heribert Vollmer},
  title        = {A Logical Approach to Constraint Satisfaction},
  booktitle    = {Complexity of Constraints - An Overview of Current Research Themes
                  [Result of a Dagstuhl Seminar]},
  series       = {Lecture Notes in Computer Science},
  volume       = {5250},
  pages        = {125--155},
  publisher    = {Springer},
  year         = {2008},
  url          = {https://doi.org/10.1007/978-3-540-92800-3_6},
  doi          = {10.1007/978-3-540-92800-3_6},
  bibsource    = {dblp computer science bibliography, https://dblp.org}
}

@inproceedings{Bulatov17,
  author       = {Andrei A. Bulatov},
  editor       = {Chris Umans},
  title        = {A Dichotomy Theorem for Nonuniform CSPs},
  booktitle    = {58th {IEEE} Annual Symposium on Foundations of Computer Science, {FOCS}
                  2017, Berkeley, CA, USA, October 15-17, 2017},
  pages        = {319--330},
  publisher    = {{IEEE} Computer Society},
  year         = {2017},
  url          = {https://doi.org/10.1109/FOCS.2017.37},
  doi          = {10.1109/FOCS.2017.37},
  bibsource    = {dblp computer science bibliography, https://dblp.org}
}

@inproceedings{BrakensiekGP26,
  author       = {Joshua Brakensiek and
                  Venkatesan Guruswami and
                  Aaron Putterman},
  editor       = {Nicolas Beldiceanu},
  title        = {Classification of Non-Redundancy of Boolean Predicates of Arity 4},
  booktitle    = {32nd International Conference on Principles and Practice of Constraint
                  Programming, {CP} 2026, Lisbon, Portugal, July 20-23, 2026},
  series       = {LIPIcs},
  volume       = {379},
  pages        = {8:1--8:24},
  publisher    = {Schloss Dagstuhl - Leibniz-Zentrum f{\"{u}}r Informatik},
  year         = {2026},
  url          = {https://doi.org/10.4230/LIPIcs.CP.2026.8},
  doi          = {10.4230/LIPICS.CP.2026.8},
  bibsource    = {dblp computer science bibliography, https://dblp.org}
}

@article{brakensiek2026shrinking,
  author       = {Joshua Brakensiek and
                  Venkatesan Guruswami and
                  Bart M. P. Jansen and
                  Victor Lagerkvist and
                  Magnus Wahlstr{\"{o}}m},
  title        = {Super-linear Lower Bounds for {CSP} Non-Redundancy via Shrinking Instances},
  journal      = {CoRR},
  volume       = {abs/2605.19055},
  year         = {2026},
  url          = {https://doi.org/10.48550/arXiv.2605.19055},
  doi          = {10.48550/ARXIV.2605.19055},
  eprinttype   = {arXiv},
  eprint       = {2605.19055},
  bibsource    = {dblp computer science bibliography, https://dblp.org}
}

@article{SV26b,
  author       = {Amatya Sharma and
                  Santhoshini Velusamy},
  title        = {Non-Redundancy of Low-Arity Symmetric Boolean CSPs},
  journal      = {CoRR},
  volume       = {abs/2605.14007},
  year         = {2026},
  url          = {https://doi.org/10.48550/arXiv.2605.14007},
  doi          = {10.48550/ARXIV.2605.14007},
  eprinttype   = {arXiv},
  eprint       = {2605.14007},
  bibsource    = {dblp computer science bibliography, https://dblp.org}
}

@inproceedings{LGE26,
  author       = {Victor Lagerkvist and
                  Johanna Groven and
                  Leif Eriksson},
  editor       = {Sven Koenig and
                  Chad Jenkins and
                  Matthew E. Taylor},
  title        = {Towards Single Exponential Time for Temporal and Spatial Reasoning:
                  {A} Study via Redundancy and Dynamic Programming},
  booktitle    = {Fortieth {AAAI} Conference on Artificial Intelligence, Thirty-Eighth
                  Conference on Innovative Applications of Artificial Intelligence,
                  Sixteenth Symposium on Educational Advances in Artificial Intelligence,
                  {AAAI} 2026, Singapore, January 20-27, 2026},
  pages        = {14287--14294},
  publisher    = {{AAAI} Press},
  year         = {2026},
  url          = {https://doi.org/10.1609/aaai.v40i17.38443},
  doi          = {10.1609/AAAI.V40I17.38443},
  bibsource    = {dblp computer science bibliography, https://dblp.org}
}

@article{BZ20,
  title = {Sparsification of {{Binary CSPs}}},
  author = {Butti, Silvia and {\v Z}ivn{\'y}, Stanislav},
  year = {2020},
  month = jan,
  journal = {SIAM Journal on Discrete Mathematics},
  volume = {34},
  number = {1},
  pages = {825--842},
  publisher = {{Society for Industrial and Applied Mathematics}},
  issn = {0895-4801},
  doi = {10.1137/19M1242446}
}
